\definecolor{blue}{rgb}{0.274,0.392,0.666}
\definecolor{darkblue}{rgb}{0.063,0.306,0.545}
\definecolor{red}{rgb}{1,0.3,0.3}
\definecolor{greennn}{rgb}{0,0.588,0.509}
\newcommand{\red}[1]{{{\textcolor{black}{#1}\xspace}}}
\newcommand{\blue}[1]{{{\textcolor{darkblue}{#1}\xspace}}}
\let\doendproof\endproof
\renewcommand\endproof{~\hfill\qed\doendproof}
\newenvironment{claimproof}{\noindent{\em Proof of the claim.}}{\hspace*{\fill}\qed\vspace{2mm}}
\renewcommand{\paragraph}[1]{\smallskip\noindent\textbf{#1}\xspace}
\renewcommand{\paragraph}[1]{\smallskip\noindent\textbf{#1}\xspace}
\newcommand{\remove}[1]{}
\newif\ifshort
\begin{document}

\authorrunning{G. Da Lozzo et al.}

\title{
Upward Planar Morphs\thanks{This research was partially supported by MIUR Project ``MODE'', by H2020-MSCA-RISE project ``CONNECT'', and by MIUR-DAAD JMP N$^\circ$\ 34120.\xspace}}
\author{{Giordano Da Lozzo}, {Giuseppe Di Battista}, {Fabrizio Frati}, \\{Maurizio Patrignani}, and {Vincenzo Roselli}}
\institute{
{
Roma Tre University, Rome, Italy\\
\{\href{mailto:dalozzo@dia.uniroma3.it}{dalozzo},\href{mailto:gdb@dia.uniroma3.it}{gdb},\href{mailto:frati@dia.uniroma3.it}{frati},\href{mailto:patrigna@dia.uniroma3.it}{patrigna},\href{mailto:roselli@dia.uniroma3.it}{roselli}\}\href{mailto:dalozzo@dia.uniroma3.it,gdb@dia.uniroma3.it,frati@dia.uniroma3.it,patrigna@dia.uniroma3.it,roselli@dia.uniroma3.it}{@dia.uniroma3.it}}
}
\maketitle

\begin{abstract}
We prove that, given two topologically-equivalent upward planar straight-line
drawings of an $n$-vertex directed graph $G$, there always exists a morph
between them such that all the intermediate drawings of the morph are upward planar and
straight-line. Such a morph consists of $O(1)$ morphing steps if $G$ is a
reduced planar $st$-graph, $O(n)$ morphing steps if $G$ is a planar $st$-graph,
$O(n)$ morphing steps if $G$ is a reduced upward planar graph, and $O(n^2)$
morphing steps if $G$ is a general upward planar graph.
Further, we show that $\Omega(n)$ morphing steps might be necessary for an upward planar morph between two topologically-equivalent upward planar straight-line
drawings of an $n$-vertex path.

\end{abstract}


\section{Introduction}\label{se:intro}

One of the definitions of the word \emph{morph} that can be found in English
dictionaries is ``to gradually change into a different image''. The
Graph Drawing community defines the morph of graph drawings similarly. Namely, given two drawings $\Gamma_0$ and
$\Gamma_1$ of a graph $G$, a \emph{morph} between $\Gamma_0$ and
$\Gamma_1$ is a continuously changing family of drawings of $G$ indexed by time
$t \in [ 0,1 ]$, such that the drawing at time $t=0$ is
$\Gamma_0$ and the drawing at time $t=1$ is $\Gamma_1$.
Further, the way the Graph Drawing community adopted the word morph is
consistent with its Ancient Greek root $\mu \omega \rho \phi \acute\eta$, which
means ``shape'' in a broad sense. Namely, if both $\Gamma_0$ and
$\Gamma_1$ have a certain geometric property, it is desirable that all
the drawings of the morph also have the same property. In particular, we talk about a  \emph{planar},
a \emph{straight-line}, an \emph{orthogonal}, or a \emph{convex morph} if all the intermediate drawings of
the morph are \emph{planar} (edges do not cross), \emph{straight-line} (edges are straight-line segments), \emph{orthogonal} (edges are polygonal lines composed of horizontal and
vertical segments), or \emph{convex} (the drawings are planar and straight-line, and the faces are delimited by convex polygons), respectively.

The state of the art on planar morphs covers more than 100 years, starting from the 1914/1917 works of Tietze \cite{tietze-usaeqass-14} and Smith \cite{s-ocrsui-17}. The seminal papers of Cairns \cite{c-dprc-44} and Thomassen
\cite{t-dpg-83} proved the existence of a planar straight-line morph between any two topologically-equivalent planar  straight-line drawings of a graph. In the last 10 years, the attention of the research community focused on
algorithms for constructing planar morphs with few \emph{morphing steps} (see, e.g., \cite{aadddhlrssw-cpwlv-11,DBLP:journals/siamcomp/AlamdariABCLBFH17,aacdfl-mpgdpns-13-c,Angelini-optimal-14,DBLP:conf/compgeom/AngeliniLFLPR15,afpr-mpgde-13,Barrera-unidirectional,biedl2013morphing,BLS-orth,r-mvdg-14,DBLP:conf/compgeom/GoethemV18}).
Each morphing step, sometimes simply called {\em step}, is a \emph{linear} morph, in which the vertices move along straight-line (possibly distinct) trajectories at
uniform speed. A \emph{unidirectional} morph is a linear morph in which the vertex trajectories are all parallel.
It is known~\cite{DBLP:journals/siamcomp/AlamdariABCLBFH17,Angelini-optimal-14} that a planar straight-line morph with a linear number of unidirectional morphing steps exists between any two topologically-equivalent planar straight-line drawings of the same graph, and that this bound is the best possible.


%

\emph{Upward planarity} is usually regarded as the natural extension of planarity to directed graphs; see, e.g.,~\cite{DBLP:journals/algorithmica/BertolazziBLM94,DBLP:journals/siamcomp/BertolazziBMT98,DETT,DBLP:journals/tcs/BattistaT88,Garg:2002:CCU:586839.586865}. A drawing of a directed graph is
\emph{upward planar} if it is planar and the edges are represented by curves
monotonically increasing in the vertical direction. Despite the importance
of upward planarity, up to now, no algorithm has been devised
to morph upward planar drawings of directed graphs. This paper deals
with the following question: Given two topologically-equivalent upward planar drawings $\Gamma_0$ and $\Gamma_1$ of an upward planar directed graph $G$, does an \emph{upward planar straight-line morph} between $\Gamma_0$ and $\Gamma_1$ always exist? In this paper we give a positive answer to this question.

Problems related to upward planar graphs are usually more difficult than the corresponding problems for undirected graphs. For example, planarity can be tested in linear time \cite{Hopcroft:1974:EPT:321850.321852} while testing upward planarity is NP-complete \cite{Garg:2002:CCU:586839.586865}; all planar graphs admit planar straight-line grid drawings with polynomial area~\cite{Schnyder:1990:EPG:320176.320191} while there are upward planar graphs that require exponential area in any upward planar straight-line grid drawing \cite{DiBattista:1992:ARS:149153.149159}. Quite surprisingly, we show that, from the morphing point of view, the difference between planarity and upward planarity is less sharp; indeed, in some cases, upward planar straight-line drawings can be morphed even more efficiently than  planar straight-line drawings.

%
More in detail, our results are as follows. Let $\Gamma_0$ and $\Gamma_1$ be topologically-equivalent upward planar drawings of an $n$-vertex upward plane graph $G$. We show algorithms to construct upward planar straight-line morphs between $\Gamma_0$ and $\Gamma_1$ with the following number of unidirectional morphing steps:
\begin{enumerate}[\bf i.]
\item $O(1)$ steps if $G$ is a reduced plane $st$-graph;
\item $O(n)$ steps if $G$ is a plane $st$-graph;
\item $O(n)$ steps if $G$ is a reduced upward plane graph;
\item $O(n\cdot f(n))$ steps if $G$ is a general upward plane graph, assuming that an $O(f(n))$-step algorithm exists to construct an upward planar morph between any two upward planar drawings of any $n$-vertex plane $st$-graph. This, together with Result \textbf{ii.}, yields an $O(n^2)$-step upward planar morph for general upward plane graphs.
\end{enumerate}

Further, we show that there exist two topologically-equivalent upward planar drawings of an $n$-vertex upward plane path such that any upward planar morph between them consists of $\Omega(n)$ morphing steps.

In order to prove Result \textbf{i.} we devise a technique that allows us to construct a morph in which each morphing step modifies either only the $x$-coordinates or only the \mbox{$y$-coordinates} of the vertices. 
%
%
Result \textbf{ii.} builds on the techniques in  \cite{DBLP:journals/siamcomp/AlamdariABCLBFH17} and leverages on the arrangement of low-degree vertices in upward planar drawings in order to morph maximal plane $st$-graphs. We then exploit such morphs for general plane $st$-graphs.
In order to prove Results \textbf{iii.} and \textbf{iv.} we use an inductive
technique for reducing the geometric differences
between $\Gamma_0$ and $\Gamma_1$.

The paper is organized as follows. In \autoref{se:preliminaries} we introduce preliminary definitions and notation. In \autoref{se:slow-morphs-fast-morphs} we prove a lower bound on the number of morphing steps that might be required by an upward planar morph and we present a technique for constructing upward planar morphs with few morphing steps. In
 \autoref{se:st-graphs} we study upward planar morphs of plane $st$-graphs. In  \autoref{se:general-graphs} we study upward planar morphs of general upward plane graphs. Finally, in \autoref{se:conclusions} we present conclusions and open problems.


\section{Preliminaries}\label{se:preliminaries}

We assume familiarity with graph drawing~\cite{DETT} and related concepts. 


\paragraph{Graph drawings.} In a drawing of a graph vertices are represented by distinct points of the plane and edges are represented by Jordan arcs connecting the points representing their end-vertices. In a \emph{straight-line drawing} the edges are represented by straight-line segments. In this paper we only consider straight-line drawings. Thus, where it leads to no confusion, we will omit the term ``straight-line''. Let $\Gamma$ be a drawing of a graph $G$ and let $H$ be a subgraph of $G$. We denote by $\Gamma[H]$ the restriction of $\Gamma$ to the vertices and edges of $H$.

\paragraph{Planar drawings, graphs, and embeddings.} A drawing of a graph is \emph{planar} if no two edges intersect. A graph is \emph{planar} if it admits a planar drawing. A planar drawing partitions the plane into topologically connected regions, called \emph{faces}. The unique unbounded face is the \emph{outer face}, whereas the remaining faces are the \emph{inner faces}. Two planar drawings of a connected graph are \emph{topologically equivalent} if they have the same circular order of the edges around each vertex and the same cycle bounding the outer face. A \emph{planar embedding} is an equivalence class of planar drawings. A \emph{plane graph} is a planar graph equipped with a planar embedding. In a planar straight-line drawing an internal face (the outer face) is \emph{strictly convex} if its angles are all smaller (greater) than $\pi$. A planar straight-line drawing is \emph{strictly convex} if each face is strictly convex. 


A \emph{$y$-assignment} $y_G: V(G) \rightarrow \mathbb{R}$ is an assignment of 
reals to the vertices of a graph $G$. 
A drawing $\Gamma$ of $G$ \emph{satisfies} $y_G$ if the $y$-coordinate in $\Gamma$ of each vertex $v \in V(G)$ is $y_G(v)$. 
An \emph{$x$-assignment} $x_G$ for the vertices of $G$ is defined analogously.

\paragraph{Connectivity.} A {\em $k$-cut} in a connected graph $G$ is a set of $k$ vertices whose removal disconnects $G$. A graph is {\em $k$-connected} if it does not contain any $(k-1)$-cut; $2$-connected and $3$-connected graphs are also called {\em biconnected} and {\em triconnected} graphs, respectively. The maximal biconnected subgraphs of a graph are called {\em blocks}. A biconnected plane graph $G$ is \emph{internally $3$-connected} if, for every $2$-cut $\{u,v\}$, $u$ and $v$ are incident to the outer face of $G$ and each connected component of the graph $G - \{u,v\}$ contains a vertex incident to the outer face of $G$. 

\paragraph{Directed graphs.}
In a directed graph $G$ we denote by $uv$ an edge directed from a vertex $u$ to a vertex $v$; then $v$ is a \emph{successor} of $u$, and $u$ is a \emph{predecessor} of $v$. A {\em source} is a vertex with no incoming edge; a {\em sink} is a vertex with no outgoing edge. A {\em directed path} consists of the edges $u_iu_{i+1}$, for $i=1,\dots,n-1$.  A {\em directed cycle} consists of the edges $u_iu_{i+1}$, for $i=1,\dots,n$, where $u_{n+1}=u_1$. A graph without directed cycles is {\em acyclic}. A {\em transitive edge} in a directed graph $G$ is an edge $uv$ such that $G$ contains a directed path from $u$ to $v$ different from the edge $uv$.  A \emph{reduced} graph is a directed graph that does not contain any transitive edges. Whenever we do not know or are not interested in the orientation of an edge connecting two vertices $u$ and $v$, we denote it by $(u,v)$. The \emph{underlying graph} of a directed graph $G$ is the undirected graph obtained from $G$ by omitting the directions from its edges. When talking about the connectivity of a directed graph we always refer to the connectivity of its underlying graph. A \emph{topological ordering} of an $n$-vertex acyclic graph $G=(V,E)$ is a numbering $\pi: V \rightarrow \{1,2,\dots,n\}$ of the vertices of $G$ such that $\pi(u) < \pi(v)$, for each edge $uv \in E$.

\paragraph{Upward planar drawings, embeddings, and morphs.}
A drawing of a directed graph is \emph{upward planar} if it is planar and each edge $uv$ is drawn as a curve monotonically increasing in the $y$-direction from $u$ to $v$.  A directed graph is \emph{upward planar} if it admits an upward planar drawing.

Consider an upward planar drawing $\Gamma$ of a directed graph $G$ and consider a vertex $v$. The list $\mathcal S(v)=[w_1,\dots,w_k]$ contains the successors of $v$ in ``left-to-right order''. That is, consider a half-line $\ell$ starting at $v$ and directed leftwards; rotate $\ell$ around $v$ in clockwise direction and append a vertex $w_i$ to $\mathcal S(v)$ when $\ell$ overlaps with the tangent to the edge $(v,w_i)$. The list $\mathcal P(v)=[z_1,\dots,z_l]$ of the predecessors of $v$ is defined similarly. Then two upward planar drawings of a connected directed graph are \emph{topologically equivalent} if they have the same lists $\mathcal S(v)$ and $\mathcal P(v)$ for each vertex $v$. An \emph{upward planar embedding} is an equivalence class of upward planar drawings. An {\em upward plane graph} is an upward planar graph equipped with an upward planar embedding.  If a vertex $v$ in an upward planar graph $G$ is not a source or a sink, then a planar embedding of $G$ determines $\mathcal S(v)$ and $\mathcal P(v)$. However, if $v$ is a source or a sink, then different upward planar drawings might have different lists $\mathcal S(v)$ or $\mathcal P(v)$, respectively. In fact, two upward planar drawings of an upward planar graph $G$ might not have the same upward planar embedding although the underlying graph of $G$ has the same planar embedding in the two drawings; see, for example, \blue{Fig.~\ref{fig:planar-upward-embedding}}.


\begin{figure}[tb]
	\centering
	\subfloat[]
	{\includegraphics[page=1,width=.28\columnwidth]{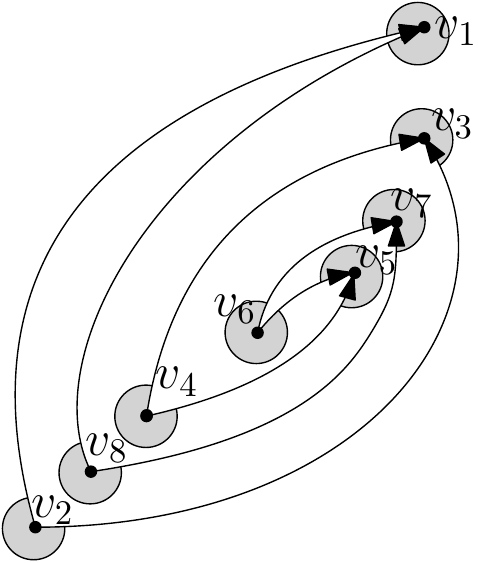}\label{fig:planar-upward-embedding-0}}
	\hfil
	\subfloat[]
	{\includegraphics[page=1,width=.28\columnwidth]{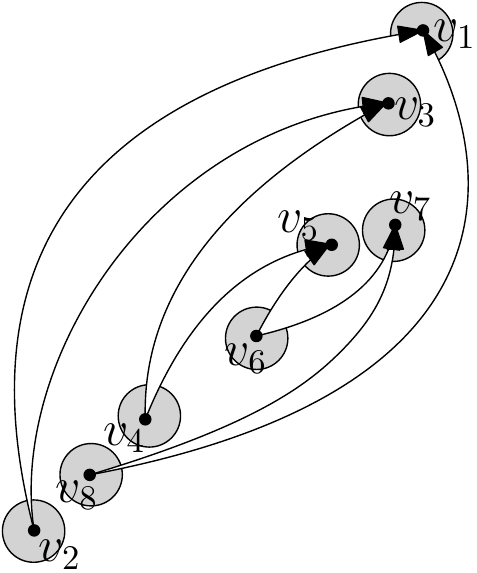}\label{fig:planar-upward-embedding-a}}
		\caption{Two upward planar drawings of the same directed graph $G$ (whose underlying graph is a simple cycle) with the same planar embedding but with different upward planar embeddings. The angles labeled \texttt{large} are gray. Observe that $\mathcal S(v_8)=[v_1,v_7]$ in (a), while $\mathcal S(v_8)=[v_7,v_1]$ in (b).}\label{fig:planar-upward-embedding}
\end{figure}

For biconnected upward planar graphs a different, and yet equivalent, notion of upward planar embedding exists; this is described in the following. Consider an upward planar drawing $\Gamma$ of a biconnected upward planar graph $G$.
Let $u, v$, and $w$ be three distinct vertices that appear consecutively and in this clockwise order along the boundary of a face $f$ of $G$; note that, since $G$ is biconnected $f$ is delimited by a simple cycle. We denote by $\angle (u,v,w)$ the angle formed by the tangents to the edges $(u,v)$ and $(v,w)$ at $v$ in the interior of $f$. We say that $v$ is a \emph{sink-switch} (a \emph{source-switch}) of $f$ if the orientations of the edges $(u,v)$ and $(v,w)$ in $G$ are $uv$ and $wv$ ($vu$ and $vw$, respectively). 
Further, we say that $v$ is a \emph{switch} of $f$ if it is either a sink-switch or a source-switch of $f$, and $v$ is a \emph{switch} of $G$ if it is a switch of some face of $G$. Two switches $u$ and $v$ of a face $f$ are \emph{clockwise} (\emph{counter-clockwise}) \emph{consecutive} if traversing $f$ clockwise (counter-clockwise) no switch is encountered in between $u$ and $v$. 
The drawing $\Gamma$ determines a \emph{large-angle assignment}, that is, a labeling, for each face $f$ and each three clockwise consecutive switches $u$, $v$, and $w$ for $f$ of the corresponding angle $\angle (u,v,w)$ as \texttt{large}, if it is larger than $\pi$ in $\Gamma$, or \texttt{small}, it is smaller than $\pi$  in~$\Gamma$~\cite{DBLP:journals/algorithmica/BertolazziBLM94}. 
Two upward planar drawings of an upward planar graph $G$ are then say to be {\em topologically equivalent} if they have the same planar embedding and the same large-angle assignment. From this notion of topological equivalence, the ones of upward planar embedding and upward plane graph can be introduced as before; again, the formerly introduced notions coincide with the just introduced ones for upward planar graphs with biconnected underlying graphs (in fact, this correspondence between the two notions could be stated for all upward planar graphs, however the definition of clockwise consecutive switches we introduced is ambiguous for upward planar graphs whose underlying graph is not biconnected). A combinatorial characterization of the \texttt{large}-\texttt{small} assignments that correspond to upward planar embeddings is given in~\cite{DBLP:journals/algorithmica/BertolazziBLM94}. 

Whenever we talk about an upward planar drawing of an upward plane graph $G$, we always assume, even when not explicitly stated, that the drawing respects the upward planar embedding associated to $G$. Further, whenever we talk about a subgraph $H$ of an upward plane graph $G$, we always assume, even when not explicitly stated, that $H$ is associated with the upward planar embedding obtained from the one associated to $G$ by removing vertices and edges not in $H$.

Let $\Gamma_0$ and $\Gamma_1$ be two topologically-equivalent upward planar drawings of an upward plane graph $G$. An \emph{upward planar morph} is a continuous transformation from $\Gamma_0$ to $\Gamma_1$ indexed by time $t \in [0,1]$ in which the drawing $\Gamma_t$ at each time $t \in [0,1]$ is upward planar. We remark that each drawing $\Gamma_t$ has to respect the upward planar embedding associated to $G$; in particular, the drawings $\Gamma_0$ and $\Gamma_1$ determine the same upward planar embedding (if this were not the case, then a morph that preserves upward planarity at all times would not exist). 



\paragraph{Plane $st$-graphs.} 
A \emph{plane $st$-graph} is an upward plane graph with a single source $s$ and a single sink $t$, and with an upward planar embedding in which $s$ and $t$ are incident to the outer face. A plane $st$-graph always admits an upward planar straight-line drawing~\cite{DBLP:journals/tcs/BattistaT88}. A cycle in an upward plane graph is an \emph{$st$-cycle} if it consists of two directed paths. A face $f$ of an upward plane graph is an \emph{$st$-face} if it is delimited by an $st$-cycle; the directed paths delimiting an $st$-face $f$ are called \emph{left} and \emph{right boundary}, where the edge of the left boundary incident to the source-switch $s_f$ of $f$ immediately precedes the edge of the right boundary incident to $s_f$ in the clockwise order of the edges incident to $s_f$. The following is well-known.

\begin{lemma}\label{le:st-faces-iff-st-graph}
An upward plane graph is a plane $st$-graph iff all its faces are $st$-faces.
\end{lemma}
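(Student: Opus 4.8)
The plan is to reduce the statement to a counting identity about the \emph{switches} of the faces and then play it against Euler's formula. First I would record the local characterization of $st$-faces: a face $f$ is delimited by an $st$-cycle if and only if, walking along its boundary, one meets exactly one source-switch and exactly one sink-switch of $f$. Indeed, between two consecutive switches the boundary is a directed (i.e. $y$-monotone) path, so a single source-switch together with a single sink-switch splits the boundary into exactly the two directed paths required by an $st$-cycle, and conversely every $st$-cycle has precisely one of each. Thus it suffices to prove that, for a connected upward plane graph, ``$G$ is a plane $st$-graph'' is equivalent to ``every face has exactly one source-switch''.

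The key geometric input is that in any upward planar drawing the edges entering a vertex $v$ and the edges leaving $v$ each form a contiguous block in the circular order around $v$ (otherwise some incident edge would fail to be $y$-monotone). I would use this to count, per vertex, the number of incident (face, source-switch) pairs it generates: a source of $G$ contributes $\deg^+(v)=\deg(v)$ of them (every gap between two consecutive outgoing edges is a source-switch angle), an internal vertex contributes $\deg^+(v)-1$ (only the gaps strictly inside the outgoing block; the two block-to-block gaps are monotone pass-throughs), and a sink contributes $0$. Summing over all vertices, the total number $S$ of such incidences equals $\sum_v\deg^+(v)$ minus the number $n-\sigma-\tau$ of internal vertices, that is $S=m-(n-\sigma-\tau)$, where $m,n$ are the numbers of edges and vertices and $\sigma,\tau$ the numbers of sources and sinks of $G$; a symmetric count gives the same value for the sink-switch incidences. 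On the other hand, every region has a lowest and a highest boundary vertex, so each face $f$ has at least one source-switch; writing $s_f$ for their number we get $S=\sum_f s_f\ge F$, with equality if and only if every face has exactly one. Finally Euler's formula gives $F=m-n+2$.

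Both implications now fall out by comparing $S=m-(n-\sigma-\tau)$ with $F=m-n+2$. If $G$ is a plane $st$-graph then $\sigma=\tau=1$, hence $S=m-n+2=F$; since the $F$ face contributions are each at least $1$ and sum to $F$, each equals $1$, so every face is an $st$-face. Conversely, if every face is an $st$-face then $S=F=m-n+2$, which forces $\sigma+\tau=2$; as upward planarity forbids directed cycles, the (acyclic) graph has at least one source and one sink, so $\sigma=\tau=1$. The unique source is the globally lowest vertex and the unique sink the globally highest one, whence both lie on the outer face, and $G$ is a plane $st$-graph.

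I expect the main obstacle to be the careful bookkeeping of the per-vertex switch count, in particular justifying the contiguity of the in- and out-blocks around each vertex and treating the outer face and the extreme vertices $s,t$ on the same footing as the inner faces. A secondary point to handle cleanly is connectivity: a plane $st$-graph is connected because each connected component of an acyclic graph owns its own source, while conversely the hypothesis that all faces are bounded by simple $st$-cycles forces $G$ to be (bi)connected, which is exactly what legitimizes the use of $F=m-n+2$.
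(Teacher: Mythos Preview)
The paper does not prove this lemma at all; it merely states it as ``well-known'' and moves on. So there is no proof in the paper to compare against.

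Your argument is correct and is, in fact, the classical switch--counting proof of this folklore fact (cf.\ Di Battista--Tamassia and Bertolazzi \emph{et al.}). The local characterization of $st$-faces via ``exactly one source-switch and one sink-switch'', the contiguity of the in- and out-blocks around each vertex in an upward drawing, the identity $S=m-(n-\sigma-\tau)$, the lower bound $s_f\ge 1$ coming from acyclicity, and the comparison with Euler's formula all combine exactly as you describe. The two directions then drop out, and your observation that $\sigma+\tau=2$ together with acyclicity forces $\sigma=\tau=1$ (with $s,t$ the globally lowest/highest vertices, hence on the outer face) is the right way to close the converse.

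One minor remark on the bookkeeping you flagged yourself: you should count \emph{angles} rather than (face,vertex) pairs, so that a cut vertex incident several times to the same face is counted once per angle; with that convention the identity $S=m-(n-\sigma-\tau)$ holds verbatim for any connected upward plane graph, not just biconnected ones, and the ``each face has at least one source-switch'' step then reads as ``each facial closed walk has at least one source-switch angle'', which follows from acyclicity exactly as in the biconnected case. Since the lemma as stated in the paper speaks of faces ``delimited by an $st$-cycle'' (i.e.\ a \emph{simple} cycle), the conclusion that all faces are $st$-faces indeed forces biconnectivity, so the forward and backward directions fit together cleanly.
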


An internal vertex $v$ of a maximal plane $st$-graph $G$ is \emph{simple} if the neighbors of $v$ induce a cycle in the underlying graph of $G$.

\begin{lemma}[Alamdari et al.~\cite{DBLP:journals/siamcomp/AlamdariABCLBFH17}]\label{le:internal-vertex}
Any maximal plane $st$-graph contains a simple vertex of degree at most $5$.
\end{lemma}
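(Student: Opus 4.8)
My plan is to reduce the statement to a purely combinatorial fact about plane triangulations and then to locate the vertex by descending into a minimal nested triangle and running a short degree count.

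First I would fix the structure. Since $G$ is maximal, no edge can be added to it, and by \autoref{le:st-faces-iff-st-graph} this forces every face of $G$ — including the outer one — to be a triangle, so that $G$ is a simple plane triangulation and (for $n\ge 4$) is $3$-connected; its outer face is a triangle $\{s,w,t\}$ with $s\to w\to t$ and $s\to t$. I would then reformulate ``simple'': listing the neighbors $w_1,\dots,w_k$ of an internal vertex $v$ in rotational order, the triangular faces around $v$ make every consecutive pair $w_i,w_{i+1}$ adjacent, so the $w_i$ already form a cycle; hence $v$ is simple exactly when this cycle has no chord, i.e.\ when $v$ lies in no \emph{separating triangle} (a triangle that is not a face). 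Thus the goal becomes: find an internal vertex of degree at most $5$ that lies in no separating triangle.

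Next I would control the separating triangles. Regarding the outer triangle and every separating triangle as bounding a closed disk, I would pick a triangle $T$ minimizing the number of vertices strictly inside it; since the outer triangle contains all $n-3\ge 1$ internal vertices, $T$ contains at least one. Let $G_T$ be the subgraph induced by $T$ and the vertices inside it: it is again a plane triangulation with triangular outer boundary $T$ and at least one internal vertex. The key claim is that every internal vertex $u$ of $G_T$ is simple in $G$: any triangle through $u$ uses two neighbors of $u$, all lying in $\overline{T}$, and hence is contained in $\overline{T}$; were it separating, it would enclose strictly fewer interior vertices than $T$ (it omits $u$ itself), contradicting the minimality of $T$. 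Moreover $u$ has the same neighborhood and the same degree in $G_T$ and in $G$, so it suffices to find an internal vertex of $G_T$ of degree at most $5$.

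Finally I would run the count inside $G_T$. If every one of its $n_T-3$ internal vertices had degree at least $6$, then from $\sum_u \deg_{G_T}(u)=2|E(G_T)|=6n_T-12$ the three corners of $T$ would have total degree at most $6$; but a simple triangulation on $n_T\ge 4$ vertices has minimum degree at least $3$, giving total corner degree at least $9$, a contradiction. Hence $G_T$ has an internal vertex of degree at most $5$, which by the previous step is the required simple vertex of $G$. The main obstacle is not producing a low-degree internal vertex — a global average-degree count over $G$ already yields one — but guaranteeing that it is \emph{simple}: a minimum-degree vertex may well sit on a chord. The descent to a minimal separating (or outer) triangle is precisely what kills all such chords while keeping the outer boundary triangular, so that the corner-degree count stays tight. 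The one structural step deserving care is the claim that maximality forces the \emph{outer} face to be a triangle, and not merely the interior to be triangulated; this must be argued from the definition of maximality (edges being addable in the outer face as well) before the counting is applied.
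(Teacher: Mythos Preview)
The paper does not give its own proof of this lemma; it is quoted from Alamdari et~al.\ without argument. Your proposal is correct and is precisely the standard proof one finds in that reference (and elsewhere): reinterpret ``simple'' as ``not on a separating triangle'', descend to an innermost separating triangle~$T$, observe that every interior vertex of~$G_T$ is simple in~$G$ with unchanged degree, and then use the handshake identity $\sum\deg=6n_T-12$ to force an interior vertex of degree at most~$5$.

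Two small remarks. First, the appeal to \autoref{le:st-faces-iff-st-graph} is a mismatch: that lemma only tells you the faces are $st$-faces, not that they are triangles. What you actually need is simply the definition of ``maximal plane graph'' (equivalently, $|E|=3n-6$), which forces every face, outer included, to be a triangle; the directed $st$-structure plays no role, as you correctly note later. Second, your worry about the outer face is harmless: in the standard meaning of ``maximal'', edges may be added in the outer face as well, so it is triangular for free. With those two cosmetic fixes the argument is complete.
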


\section{Slow Morphs and Fast Morphs}\label{se:slow-morphs-fast-morphs}

We start this section by proving the following lower bound.


\begin{figure}[tb]
\centering
\subfloat[$P$]
{\includegraphics[page=4,width=.28\columnwidth]{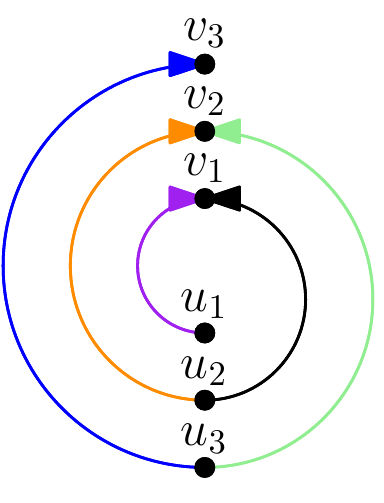}\label{fig:lower-bound-0}}
\hfil
\subfloat[$\Gamma_0$]
{\includegraphics[page=2,height=.28\columnwidth]{lower_bound.pdf}\label{fig:lower-bound-a}}
\hfil
\subfloat[$\Gamma_1$]
{\includegraphics[page=1,height=.28\columnwidth]{lower_bound.pdf}\label{fig:lower-bound-b}}
\caption{Illustration for \autoref{th:lower-bound}. (a) $P$; (b) $\Gamma_0$; and (c) $\Gamma_1$. For the sake of readability $\Gamma_0$ and $\Gamma_1$ have curved edges. However, the $x$-coordinates of the vertices can be slightly perturbed in order to make $\Gamma_0$ and $\Gamma_1$ straight-line.}\label{fig:lower-bound}
\end{figure}

\newcommand{\thlowerbound}{There are two upward planar drawings of an $n$-vertex upward plane path such that any upward planar morph between them consists of $\Omega(n)$ steps.}

\begin{theorem}\label{th:lower-bound}
\thlowerbound
\end{theorem}

\begin{proof}
Assume, for the sake of simplicity, that $n$ is even, and let $n=2k$. Consider the $n$-vertex upward plane path $P$ defined as follows (refer to \blue{Fig.~\ref{fig:lower-bound-0}}).
The path $P$ contains vertices $u_i$ and $v_i$, for $i=1,\dots,k$, and directed edges $u_iv_i$, for $i=1,\dots,k$, and $u_{i+1}v_i$, for $i=1,\dots,k-1$. Clearly, $P$ has a unique planar embedding $\mathcal E$;  we fix the upward planar embedding of $P$ so that $\mathcal S(u_i)=[v_i,v_{i-1}]$, for $i=2,\dots,k$, and so that $\mathcal P(v_i)=[u_i,u_{i+1}]$, for $i=1,\dots,k-1$. 

Let $\Gamma_0$ and $\Gamma_1$ be two upward planar straight-line drawings of $P$ in which the bottom-to-top order of the vertices is $u_1, \dots,u_k,v_k,\dots,v_1$ (see \blue{Fig.~\ref{fig:lower-bound-a}}) and $u_k, \dots,u_1,v_1,\dots,v_k$ (see \blue{Fig.~\ref{fig:lower-bound-b}}), respectively. Note that, by the upward planarity of $\Gamma_0$, the edge $u_iv_i$ has the edge $u_{i+1}v_{i+1}$ to its right in $\Gamma_0$, for $i=1,\dots,k-1$, and the edge $u_{i+1}v_i$ has the edge $u_{i+2}v_{i+1}$ to its left in $\Gamma_0$, for $i=1,\dots,k-2$. Let $\langle \Gamma_0 = \Lambda_1,\Lambda_2,\dots,\Lambda_{h+1}=\Gamma_1 \rangle$ be any upward planar morph from $\Gamma_0$ to $\Gamma_1$ that consists of $h$ morphing steps. We have the following.

\begin{claim}\label{claim:lower}
For each $j=1,2,\dots,\min\{h+1,k-1\}$, we have that:
\begin{enumerate} [(a)]
	\item the vertices $u_{j},u_{j+1},\dots,u_{k-1},u_k,v_k,v_{k-1},\dots,v_{j+1},v_j$ appear in this bottom-to-top order in $\Lambda_{j}$; 
	\item for $i=j,\dots,k-1$, the edge $u_iv_i$ has the edge $u_{i+1}v_{i+1}$ to its right; and 
	\item for $i=j,\dots,k-2$, the edge $u_{i+1}v_i$ has the edge $u_{i+2}v_{i+1}$ to its left. 
\end{enumerate}
\end{claim}

\begin{claimproof}
We prove the statement by induction on $j$. The statement is trivial for $j=1$, by the definition of $\Gamma_0 = \Lambda_1$. 

Consider now any $j>1$. By induction, $\Lambda_{j-1}$ satisfies Properties~(a)--(c). 

Suppose, for a contradiction, that there exists an index $i\in\{j,j+1,\dots,k-1\}$ such that $u_{i+1}$ lies below $u_i$ in $\Lambda_{j}$. The upward planarity of $\Lambda_{j-1}$ and $\Lambda_{j}$ implies that $v_i$ and $v_{i+1}$ both lie above $u_i$, both in $\Lambda_{j-1}$ and $\Lambda_{j}$. Further, $u_{i+1}$ lies below $v_i$ and $v_{i+1}$, both in $\Lambda_{j-1}$ and $\Lambda_{j}$; this comes from Property~(a) of $\Lambda_{j-1}$ and from the assumption that $u_{i+1}$ lies below $u_i$ in $\Lambda_{j}$. Then $u_{i+1}$ lies below the horizontal line through the lowest of $v_i$ and $v_{i+1}$ throughout the linear morph $\langle \Lambda_{j-1},\Lambda_j\rangle$. By Properties~(b) and~(c) of $\Lambda_{j-1}$, the vertex $u_{i+1}$ lies in $\Lambda_{j-1}$ inside the bounded region of the plane delimited by the edge $u_iv_i$, by the edge $u_iv_{i+1}$, and by the horizontal line through the lowest of $v_i$ and $v_{i+1}$. However, by the assumption that $u_{i+1}$ lies below $u_i$ in $\Lambda_{j}$, we have that $u_{i+1}$ lies outside the same region in $\Lambda_{j}$. Since $u_{i+1}$ does not cross the horizontal line through the lowest of $v_i$ and $v_{i+1}$ throughout the linear morph $\langle \Lambda_{j-1},\Lambda_j\rangle$, it follows that $u_{i+1}$ crosses $u_iv_i$ or $u_iv_{i+1}$ during $\langle \Lambda_{j-1},\Lambda_j\rangle$, a contradiction. 

An analogous proof shows that $v_{i+1}$ lies below $v_i$ in $\Lambda_{j}$, for $i=j,j+1,\dots,k-1$. Property~(a) for $\Lambda_{j}$ follows. Properties~(b) and~(c) follow by Property~(a) and by the upward planarity of $\Lambda_{j}$. This concludes the proof of the claim.
\end{claimproof}

By \autoref{claim:lower} and since $u_k,u_{k-1}$ appear in this bottom-to-top order in $\Gamma_1=\Lambda_{h+1}$, we have that $h+1>k-1$, hence $h \in \Omega(n)$.
\end{proof}

\remove{We set as \texttt{large} the angles $\angle (u_{i},v_i,u_{i+1})$ and $\angle(v_i,u_{i+1},v_{i+1})$, for $i=1,\dots,k-1$.}




We now establish a tool that will allow us to design efficient algorithms for morphing upward planar drawings. Consider two planar straight-line drawings $\Gamma'$ and $\Gamma''$ of a plane graph $G$ with the same $y$-assignment. Since the drawings are straight-line and have the same $y$-assignment, a horizontal line $\ell$ intersects a vertex or an edge of $G$ in $\Gamma'$ if and only if it intersects the same vertex or edge in $\Gamma''$. We say that $\Gamma'$ and $\Gamma''$ are {\em left-to-right equivalent} if, for any horizontal line $\ell$, for any vertex or edge $\alpha$ of $G$, and for any vertex or edge $\beta$ of $G$ such that $\ell$ intersects both $\alpha$ and $\beta$ (in $\Gamma'$ and in $\Gamma''$), we have that the intersection of $\alpha$ with $\ell$ is to the left of the intersection of $\beta$ with $\ell$ in $\Gamma'$ if and only if the intersection of $\alpha$ with $\ell$ is to the left of the intersection of $\beta$ with $\ell$ in $\Gamma''$. The definition of {\em bottom-to-top equivalent} drawings is analogous. We have the following.

\begin{lemma}\label{le:st-y-assignment-equivalent}
	Any two upward planar drawings $\Gamma'$ and $\Gamma''$ of a plane $st$-graph $G$ with the same $y$-assignment are left-to-right equivalent.
\end{lemma}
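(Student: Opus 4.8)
The plan is to show that the left-to-right order of the objects crossed by any fixed horizontal line $\ell$ is completely determined by the upward planar embedding of $G$ together with the common $y$-assignment, and is therefore the same in $\Gamma'$ and in $\Gamma''$. I first reduce to \emph{consecutive} objects: since $\Gamma'$ and $\Gamma''$ cross $\ell$ in the same set of objects, it suffices to prove that whenever two crossed objects $\alpha,\beta$ are consecutive along $\ell$ in $\Gamma'$ with $\alpha$ to the left of $\beta$, then $\alpha$ is to the left of $\beta$ also in $\Gamma''$; a transitivity argument then propagates this relation through the whole chain of consecutive pairs and forces the two total orders along $\ell$ to coincide. I also first treat a \emph{generic} line $\ell$, i.e., one whose height $y_0$ is not the $y$-coordinate of any vertex, so that every crossed object is an edge met transversally in its interior; the case in which $\ell$ passes through vertices is recovered at the end by a limiting argument.

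For a generic $\ell$ and two consecutive crossed edges $\alpha,\beta$ in $\Gamma'$, the open segment of $\ell$ between the two crossing points is bounded and therefore lies in a single \emph{inner} face $f$ of $G$ (the outer face meets $\ell$ only along its two unbounded rays). By \autoref{le:st-faces-iff-st-graph}, $f$ is an $st$-face, delimited by its left and right boundaries $L$ and $R$, each a directed, hence $y$-monotone, path from the source-switch $s_f$ to the sink-switch $t_f$. Since $y(s_f)<y_0<y(t_f)$, the line $\ell$ meets $L$ exactly once and $R$ exactly once; thus $\{\alpha,\beta\}$ consists of one edge of $L$ and one edge of $R$, and, because $\Gamma'$ and $\Gamma''$ share the embedding and the $y$-assignment, the same is true, with the same two edges, in $\Gamma''$.

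The key step is therefore a purely structural fact, valid in \emph{any} upward planar drawing respecting the embedding of $G$: at every height strictly between $y(s_f)$ and $y(t_f)$, the crossing of $\ell$ with $L$ lies to the left of the crossing with $R$. Near $s_f$ this follows from the definition of left and right boundary, since the first edge of $L$ immediately precedes the first edge of $R$ in the clockwise order around $s_f$ and both edges leave $s_f$ upward, so $L$ departs to the left of $R$. To see that the two crossings never swap higher up, recall that the boundary of $f$ is a simple closed curve, namely the $st$-cycle delimiting $f$, so $L$ and $R$ share only $s_f$ and $t_f$; if the $L$-crossing and the $R$-crossing coincided at some intermediate height, then $L$ and $R$ would meet at an interior point, contradicting planarity, and by continuity the left/right relation established at $s_f$ is preserved up to $t_f$. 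Applying this fact in both drawings shows that in each of them the edge of $L$ is the left one and the edge of $R$ is the right one among $\alpha,\beta$; as this identification is the same in $\Gamma'$ and $\Gamma''$, the order of $\alpha$ and $\beta$ agrees, which completes the consecutive-pair step and, with the transitivity reduction, the generic case.

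Finally, for a line $\ell$ through one or more vertices I recover the order by comparing with the orders along the generic lines at heights $y_0-\varepsilon$ and $y_0+\varepsilon$ for small $\varepsilon>0$: as $\varepsilon\to 0$ the crossings converge, and the position of a vertex $w$ at height $y_0$ relative to any edge met transversally by $\ell$ agrees with the already-determined position, on the line at height $y_0-\varepsilon$, of the edges entering $w$ from below; hence the order at $y_0$ is pinned down as well. I expect the main obstacle to be making the structural claim of the third paragraph fully rigorous, in particular fixing the orientation conventions so that ``clockwise order around $s_f$'' translates correctly into ``$L$ to the left of $R$'' and arguing the no-swap property cleanly from the simplicity of the face boundary; the reduction to consecutive pairs and the handling of vertex-lines are routine by comparison.
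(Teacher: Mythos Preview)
Your proposal is correct and follows essentially the same approach as the paper: both arguments hinge on \autoref{le:st-faces-iff-st-graph} to conclude that every inner face is an $st$-face, and then use that a horizontal line meets the left and right boundaries of such a face once each, with the left boundary to the left in any drawing respecting the embedding. The paper's proof is a three-sentence sketch of exactly this idea, whereas you have spelled out the reduction to consecutive pairs, the orientation argument at $s_f$, and the treatment of lines through vertices; these extra details are sound but not substantively different from what the paper is asserting.
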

\begin{proof}
	Since $G$ is a plane $st$-graph, the drawings $\Gamma'$ and $\Gamma''$ have the same faces. By \autoref{le:st-faces-iff-st-graph} such faces are $st$-faces. Also, every horizontal line $\ell$ crosses an $st$-face $f$ at most twice, and the left-to-right order of these crossings along $\ell$ is the same in $\Gamma'$ and $\Gamma''$ because the left and right boundaries of $f$ are the same in $\Gamma'$ and $\Gamma''$, given that $\Gamma'$ and $\Gamma''$ are topologically equivalent.
\end{proof}

\autoref{le:sidedness} is due to \cite{DBLP:journals/siamcomp/AlamdariABCLBFH17}. We extend it in \autoref{le:shift}.

\begin{lemma}[Alamdari et al.~\cite{DBLP:journals/siamcomp/AlamdariABCLBFH17}, Corollary 7.2]\label{le:sidedness}
Consider a unidirectional morph acting on points $p$, $q$, and $r$. If
$p$ is on one side of the oriented line through $\overline{qr}$ at the beginning and at the end of the morph, then $p$ is on the same side of the oriented line through $\overline{qr}$ throughout the morph.
\end{lemma}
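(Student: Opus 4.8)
The plan is to reduce the claim to a single algebraic fact: for a unidirectional morph, the signed area of the triangle $pqr$ is an \emph{affine} function of time, and an affine function with the same sign at both endpoints of $[0,1]$ cannot change sign in between.

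First I would fix coordinates exploiting unidirectionality. Since the three trajectories are mutually parallel, I rotate the plane so that the common direction of motion is vertical. After this rotation each point moves only in the $y$-direction, so its $x$-coordinate is constant in time while its $y$-coordinate is affine in $t$ (the morph being linear with uniform speed). Writing $p(t)=(p_x,p_y(t))$, $q(t)=(q_x,q_y(t))$, $r(t)=(r_x,r_y(t))$ with $p_x,q_x,r_x$ constant and $p_y,q_y,r_y$ affine in $t$, the side of $p$ relative to the line oriented from $q$ to $r$ is governed by the sign of the cross product
\[
D(t)=(r_x-q_x)\bigl(p_y(t)-q_y(t)\bigr)-\bigl(r_y(t)-q_y(t)\bigr)(p_x-q_x).
\]

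The crucial step is to verify that $D(t)$ is affine in $t$. The differences $r_x-q_x$ and $p_x-q_x$ are constants, whereas $p_y(t)-q_y(t)$ and $r_y(t)-q_y(t)$ are affine in $t$; hence $D(t)$ is a fixed linear combination of two affine functions and is itself affine. I would stress that this is precisely where the hypothesis is used: for a general linear morph the $x$-coordinates would also vary affinely, $D(t)$ would become quadratic, and a quadratic can change sign up to twice on $[0,1]$. The parallel-trajectory assumption annihilates the quadratic term, and this degree drop from quadratic to affine is the whole point of the argument.

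Finally I would conclude. The phrase ``$p$ lies on one side of the oriented line through $\overline{qr}$'' means that $D$ has a fixed nonzero sign; by hypothesis $D(0)$ and $D(1)$ share this sign. Since $D$ is affine, hence monotone on $[0,1]$, it cannot vanish in the interior without changing sign between the endpoints, so $D(t)$ retains that sign for every $t\in[0,1]$, which is exactly the assertion. As a by-product, $D(t)\neq 0$ throughout forces $q(t)\neq r(t)$, so the oriented line is well defined at every intermediate instant. The only points requiring care are the reduction to vertical motion and the sign convention for ``side''; neither is a genuine obstacle, and the heart of the proof lies in the affineness of $D$.
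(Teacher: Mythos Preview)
Your argument is correct. Note, however, that the paper does not actually prove this lemma: it is quoted verbatim as Corollary~7.2 of Alamdari et~al.\ and used as a black box. Your proof is essentially the standard one given in that reference---rotate so the common direction of motion is vertical, observe that the orientation determinant $D(t)$ becomes affine in $t$ because the $x$-coordinates freeze, and conclude by convexity that $D$ cannot vanish on $(0,1)$ if $D(0)$ and $D(1)$ share a sign. The one stylistic remark I would make is that invoking monotonicity of $D$ is slightly roundabout; writing $D(t)=(1-t)D(0)+tD(1)$ directly shows $D(t)$ is a convex combination of two numbers of the same sign, which is the cleanest way to finish.
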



\begin{lemma}\label{le:shift}
	Let $\Gamma'$ and $\Gamma''$ be two left-to-right or bottom-to-top equivalent planar drawings of a plane graph. Then the linear morph $\mathcal M$ from $\Gamma'$ to~$\Gamma''$ is unidirectional and planar.
\end{lemma}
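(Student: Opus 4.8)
The plan is to treat the two cases (left-to-right equivalent and bottom-to-top equivalent) symmetrically; I describe the left-to-right case, and the other follows by exchanging the roles of the $x$- and $y$-axes. First I would establish unidirectionality. Since left-to-right equivalence presupposes that $\Gamma'$ and $\Gamma''$ have the same $y$-assignment, every vertex keeps its $y$-coordinate fixed throughout $\mathcal{M}$, so each vertex trajectory is a horizontal segment. Hence all trajectories are parallel and $\mathcal{M}$ is unidirectional; this part is immediate.

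The core of the argument is planarity, and the key observation is a parametrized, order-based strengthening of \autoref{le:sidedness}. Fix a horizontal line $\ell$ at height $y$. Because $y$-coordinates are frozen, the set of vertices and edges of $G$ that meet $\ell$ is the same at every time $t$. I would show that, for any vertex or any non-horizontal edge $\alpha$ meeting $\ell$, the $x$-coordinate of its intersection with $\ell$ in $\Gamma_t$ is an affine function of $t$: for a vertex this is clear, and for an edge $ab$ crossing $\ell$ strictly between $y_a$ and $y_b$ the crossing abscissa is $(1-\lambda)x_a(t)+\lambda x_b(t)$ with $\lambda=(y-y_a)/(y_b-y_a)$ constant and $x_a(t),x_b(t)$ affine in $t$. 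The decisive elementary fact is that an affine function positive at both endpoints of $[0,1]$ is positive throughout; applied to the difference of two such abscissae, this shows that any two objects that are strictly ordered along $\ell$ both in $\Gamma'$ and in $\Gamma''$ remain in that order, and never share a point of $\ell$, for all $t\in[0,1]$.

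With this in hand, I would conclude planarity by contradiction. Suppose $\Gamma_t$ is not planar for some $t$; then at the first such time either two non-incident edges cross, or a vertex lies on a non-incident edge, or two vertices coincide. In each case two distinct objects share a point of some horizontal line $\ell$ at that time. By left-to-right equivalence those two objects $\alpha,\beta$ have the same strict left-to-right order along $\ell$ in $\Gamma'$ and in $\Gamma''$, and by the affine order-preservation just established they cannot meet on $\ell$ at any intermediate time, a contradiction. The reduction to these three ``first-contact'' events is standard: two straight segments passing from disjoint to intersecting must first touch either at an endpoint of one of them (a vertex-edge or vertex-vertex event) or along a collinear overlap (which in turn forces an endpoint of one edge to lie on the other).

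The step I expect to be the main obstacle is the careful treatment of degeneracies, where the clean ``intersection of $\alpha$ with $\ell$ is a single point'' picture breaks down. Specifically, horizontal edges meet a horizontal line in a whole segment rather than a point, so the order comparison must be rephrased in terms of the edge's two endpoints; and objects sharing a common endpoint on $\ell$ meet there legitimately, so I must verify that they stay strictly ordered away from that shared point. For a horizontal edge $ab$ and a vertex $v$ at the same height, I would instead compare $v$ with $a$ and with $b$ separately, using that $v$ lies to the left (or right) of both in $\Gamma'$, hence, by equivalence and affine interpolation, to the left (right) of both throughout, so $v$ never enters the segment $ab$. Confirming that these cases too are covered by the pairwise order-preservation argument is the only genuinely fiddly part; everything else is a direct consequence of the affine-interpolation observation, which is exactly the sense in which \autoref{le:shift} extends \autoref{le:sidedness}.
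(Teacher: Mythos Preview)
Your proposal is correct and follows essentially the same approach as the paper. The paper's proof is far terser: after deriving unidirectionality exactly as you do, it simply observes that each horizontal line meets the same sequence of vertices and edges in $\Gamma'$ and $\Gamma''$ and invokes \autoref{le:sidedness} directly for planarity, without spelling out the affine-interpolation argument or the degenerate cases (horizontal edges, shared endpoints) that you handle; your version is a more explicit unpacking of what that citation entails.
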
 
\begin{proof}
%
Since $\Gamma'$ and $\Gamma''$ have the same $y$-assignment ($x$-assignment), given that they are left-to-right (bottom-to-top) equivalent, it follows that all the vertices move along horizontal (vertical) trajectories. Thus, $\mathcal M$ is unidirectional. 
Also, since $\Gamma'$ and $\Gamma''$ are left-to-right (bottom-to-top) equivalent, each horizontal (vertical) line crosses the \mbox{same sequence} of vertices and edges in both $\Gamma'$ and $\Gamma''$. Thus, by \autoref{le:sidedness}, $\mathcal M$ is planar.
%
\end{proof}

\autoref{le:shift} allows us to devise a simple morphing technique between any two upward planar drawings $\Gamma_0$ and $\Gamma_1$ of the same upward plane graph $G$, when a pair of upward planar drawings of $G$ with special properties can be computed. We say that the pair $(\Gamma_0,\Gamma_1)$ is an \emph{hvh-pair} if there exist upward planar drawings
$\Gamma'_0$ and $\Gamma'_1$ of $G$ such that:
\begin{inparaenum}[(i)]
\item \label{pr:one} $\Gamma_0$ and $\Gamma'_0$ are left-to-right equivalent,
\item \label{pr:two} $\Gamma'_0$ and $\Gamma'_1$ are bottom-to-top equivalent, and
\item \label{pr:three} $\Gamma'_1$ and $\Gamma_1$ are left-to-right equivalent.
\end{inparaenum}
Our morphing tool is expressed by the following lemma.

\newcommand{\fastmorphstatement}{Let $(\Gamma_0,\Gamma_1)$ be an hvh-pair of upward planar drawings of an upward plane graph $G$.
	There is a $3$-step \mbox{upward planar morph from $\Gamma_0$ to $\Gamma_1$}.}
\begin{lemma}[Fast morph]\label{le:fast-morph}
\fastmorphstatement
\end{lemma}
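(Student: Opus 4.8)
The plan is to use the three auxiliary drawings $\Gamma'_0$ and $\Gamma'_1$ guaranteed by the definition of an hvh-pair and to realize the morph as the concatenation of three linear morphs:
$$\langle \Gamma_0, \Gamma'_0\rangle, \qquad \langle \Gamma'_0, \Gamma'_1\rangle, \qquad \langle \Gamma'_1, \Gamma_1\rangle.$$
Each of these is a single morphing step, so the whole transformation consists of exactly three steps, matching the claimed bound. The heart of the argument is to verify that each of the three linear morphs is both planar and \emph{upward} planar; planarity of each step will follow almost immediately from the hypotheses via \autoref{le:shift}, so the real content lies in upholding upward planarity throughout.

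First I would handle the three steps one at a time. For the step $\langle \Gamma_0, \Gamma'_0\rangle$, Property~(\ref{pr:one}) says $\Gamma_0$ and $\Gamma'_0$ are left-to-right equivalent, so by \autoref{le:shift} this linear morph is unidirectional (all trajectories horizontal) and planar. For the step $\langle \Gamma'_0, \Gamma'_1\rangle$, Property~(\ref{pr:two}) gives bottom-to-top equivalence, so again by \autoref{le:shift} the step is unidirectional (all trajectories vertical) and planar. The step $\langle \Gamma'_1, \Gamma_1\rangle$ is symmetric to the first, using Property~(\ref{pr:three}). Hence every step is planar, and the endpoints $\Gamma_0,\Gamma'_0,\Gamma'_1,\Gamma_1$ are all upward planar by hypothesis.

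The main obstacle, and the step I would treat most carefully, is promoting planarity to upward planarity at all intermediate times. For the horizontal steps $\langle \Gamma_0, \Gamma'_0\rangle$ and $\langle \Gamma'_1, \Gamma_1\rangle$ this is immediate: since the morph moves every vertex horizontally, the $y$-coordinate of each vertex is constant throughout, so for each edge $uv$ the $y$-coordinates of its endpoints never change and the endpoint $v$ stays strictly above $u$ at every time $t$; a planar straight-line drawing in which every edge has its head strictly above its tail is upward planar, so each intermediate drawing is upward planar. For the vertical step $\langle \Gamma'_0, \Gamma'_1\rangle$ the concern is genuine, because here $y$-coordinates do change; however, since $\Gamma'_0$ and $\Gamma'_1$ are bottom-to-top equivalent they share the same $x$-assignment and each vertical line crosses the same sequence of vertices and edges in both drawings. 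In particular, for each edge $uv$, the head $v$ lies above the tail $u$ at both $t=0$ and $t=1$ in this step, and because the relative bottom-to-top order of the endpoints of every edge is preserved (this is exactly what bottom-to-top equivalence records for the two endpoints of an edge, which lie on a common vertical line when the $x$-assignment is shared, or can be read off from \autoref{le:sidedness} applied to the degenerate configuration), $v$ remains above $u$ throughout; combined with planarity from \autoref{le:shift}, each intermediate drawing is upward planar.

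Concatenating the three steps yields a continuous transformation from $\Gamma_0$ to $\Gamma_1$ in which every intermediate drawing is upward planar, and which consists of three morphing steps. This establishes the lemma.
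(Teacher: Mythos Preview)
Your overall strategy coincides with the paper's: decompose into the three linear morphs $\langle\Gamma_0,\Gamma'_0\rangle$, $\langle\Gamma'_0,\Gamma'_1\rangle$, $\langle\Gamma'_1,\Gamma_1\rangle$, get planarity of each from \autoref{le:shift}, and then argue upward planarity step by step. The two horizontal steps are handled exactly as in the paper, and correctly.

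The argument for the vertical step $\langle\Gamma'_0,\Gamma'_1\rangle$, however, is not quite right as written. You claim that bottom-to-top equivalence directly gives the relative vertical order of the two endpoints of an edge $uv$ because they ``lie on a common vertical line when the $x$-assignment is shared''. That is false: the shared $x$-assignment means each vertex has the same $x$-coordinate in $\Gamma'_0$ as in $\Gamma'_1$, not that $u$ and $v$ have the same $x$-coordinate as each other. In general no vertical line passes through both endpoints, so bottom-to-top equivalence says nothing about them directly. Your fallback (``\autoref{le:sidedness} applied to the degenerate configuration'') points in the right direction but is too vague to count as a proof.

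What actually closes the gap is very short. Either observe that $y_v(t)-y_u(t)$ is an affine function of $t$ which is positive at $t=0$ and $t=1$ (since $\Gamma'_0$ and $\Gamma'_1$ are upward) and hence positive for all $t\in[0,1]$; or do what the paper does: introduce an auxiliary point $r$ that moves vertically together with $u$ at horizontal offset $1$, so that $\overline{ur}$ is horizontal throughout, and then apply \autoref{le:sidedness} to $p=v$, $q=u$, $r=r$ to conclude that $v$ stays above that horizontal line. With either fix your proof is complete and matches the paper's.
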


\begin{proof}
	By hypothesis there exist drawings $\Gamma'_0$ and $\Gamma'_1$ of $G$ satisfying Conditions~(\ref{pr:one}),~(\ref{pr:two}), and~(\ref{pr:three}) of the definition of an hvh-pair.
	By \autoref{le:shift}, $\mathcal M_1 = \langle \Gamma_0, \Gamma'_0 \rangle$, 
	$\mathcal M_2 = \langle \Gamma'_0, \Gamma'_1 \rangle$, and $\mathcal M_3 = \langle \Gamma'_1, \Gamma_1 \rangle$ are planar linear morphs. 
	Therefore, $\mathcal M = \langle  \Gamma_0, \Gamma'_0, \Gamma'_1, \Gamma_1 \rangle$ is a $3$-step planar morph from $\Gamma_0$ to $\Gamma_1$. In order to prove that $\mathcal M$ is an upward planar morph, we need to show that each linear morph $\mathcal M_i$ is an upward planar morph. To this aim, we only need to prove that no edge changes its orientation during $\mathcal M_i$, for $i=1,2,3$. This is trivially true for
	$\mathcal M_1$ (for $\mathcal M_3$) since $\Gamma_0$ and $\Gamma'_0$ ($\Gamma_1$ and $\Gamma'_1$) induce the same $y$-assignment.
	
	We now prove that no directed edge $uv$ changes its orientation during $\mathcal M_2$. 
	By Condition~(\ref{pr:two}) of the definition of an hvh-pair, the $x$-coordinate of $u$ is the same in $\Gamma'_0$ and in $\Gamma'_1$, hence it is the same throughout $\mathcal M_2$. Denote by $x'$ such $x$-coordinate. The $y$-coordinate of $u$ might be different in $\Gamma'_0$ and in $\Gamma'_1$; denote by $y'_0$ and $y'_1$ such coordinates, respectively. Consider a point $r$ that moves (at uniform speed along a straight-line trajectory) during $\mathcal M_2$ from $(x'+1,y'_0)$ in $\Gamma_0$ to $(x'+1,y'_1)$ in $\Gamma_1$. Note that $r$ moves along a vertical trajectory, hence the movement of $r$ and $\mathcal M_2$ define a unidirectional morph. Also observe that the straight-line segment $\overline{ur}$ is horizontal throughout $\mathcal M_2$; further, $v$ is above the horizontal line through $\overline{ur}$ both in $\Gamma'_0$ and in $\Gamma'_1$, by the upward planarity of $\Gamma_0$ and $\Gamma_1$ and by Conditions~(\ref{pr:one}) and~(\ref{pr:three}) of the definition of an hvh-pair. By Lemma~\ref{le:sidedness} with $p=v$, $q=u$, and $r=r$ we have that the $y$-coordinate of $v$ is greater than the $y$-coordinate of $u$ throughout $\mathcal M_2$. Hence, $\mathcal M_2$ is an upward planar morph.
\end{proof}


\newcommand{\biconnectedreducedstatement}{	Let $\Gamma_0$ and $\Gamma_1$ be two upward planar drawings of an $n$-vertex upward plane graph $G$ whose underlying graph is connected. There exist upward planar drawings $\Gamma'_0$ and $\Gamma'_1$ of an $O(n)$-vertex upward plane graph $G'$ that is a supergraph of $G$, whose underlying graph is biconnected, and such that $\Gamma'_0[G]=\Gamma_0$ and $\Gamma'_1[G]=\Gamma_1$. Further, if $G$ is reduced or an $st$-graph, then so is $ G'$.}

The next lemma will allow us to restrict our attention to biconnected graphs.

\begin{lemma}\label{le:biconnected-reduced-augmentation}
\biconnectedreducedstatement
\end{lemma}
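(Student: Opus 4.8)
The plan is to biconnect $G$ locally, one cut vertex at a time, by inserting a single new degree-$2$ vertex into an appropriate face so that two blocks sharing that cut vertex merge into one biconnected piece. First I would dispose of the $st$-graph case: if $G$ is a plane $st$-graph then by \autoref{le:st-faces-iff-st-graph} every face is an $st$-face, hence bounded by a simple cycle, and a connected plane graph all of whose faces are bounded by simple cycles has no cut vertex; thus $G$ is already biconnected and I may take $G'=G$, $\Gamma'_0=\Gamma_0$, $\Gamma'_1=\Gamma_1$. For the remaining (reduced or general) cases I would work on the block--cut tree of $G$: as long as $G$ has a cut vertex, I pick a cut vertex $v$ together with two edges $vp$ and $vq$ that are consecutive in the rotation at $v$, bound a common face $f$, and belong to distinct blocks $B_p$ and $B_q$; such a pair exists precisely because $v$ is a cut vertex.

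The gadget is a new vertex $r$ placed inside $f$ very close to $v$, within the wedge delimited by $vp$ and $vq$, joined by two straight edges to $p$ and to $q$ whose orientations are forced by upward planarity. Since the orientation of every edge of $G$ and the rotation at $v$ are purely combinatorial, they coincide in $\Gamma_0$ and $\Gamma_1$, so exactly one of three cases occurs in both drawings. If $p$ and $q$ both lie above $v$, I add $r\to p$ and $r\to q$ and place $r$ just above $v$ but below both; if both lie below $v$, I add $p\to r$ and $q\to r$ and place $r$ just below $v$ but above both; if $p$ lies below and $q$ above, I add $p\to r$ and $r\to q$ and place $r$ near $v$ at a height strictly between $y_p$ and $y_q$, which exists since $y_p<y_v<y_q$. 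In every case, for $r$ chosen close enough to $v$ the two segments stay in a thin region hugging $vp$ and $vq$, hence lie inside $f$ without crossing $G$ or each other, and each segment is $y$-monotone with the stated orientation. Performing the identical combinatorial insertion in both drawings yields $\Gamma'_0,\Gamma'_1$ with $\Gamma'_0[G]=\Gamma_0$ and $\Gamma'_1[G]=\Gamma_1$.

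Finally I would argue termination, the size bound, and preservation of reducedness. Each insertion adds a $p$--$r$--$q$ path joining $B_p$ and $B_q$ while avoiding $v$, and since subdividing an edge preserves $2$-connectivity, $B_p\cup B_q$ plus this path is biconnected; hence the number of blocks drops by one. As $G$ is planar it has $O(n)$ edges and thus $O(n)$ blocks, so after $O(n)$ insertions the underlying graph is biconnected and only $O(n)$ vertices have been added, giving $|V(G')|=O(n)$. For reducedness I use that directed paths are strictly increasing in $y$ and that $B_p$ and $B_q$ meet only at $v$: none of the two new edges can be transitive because $r$ has a unique in-neighbor and a unique out-neighbor; moreover any existing edge that the new path $p\leadsto q$ could render transitive would rely on a directed $p$-to-$q$ connection, which can only pass through $v$ and was therefore already present in $G$, contradicting that $G$ is reduced. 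Hence $G'$ is reduced whenever $G$ is. The main obstacle I anticipate is exactly this simultaneous bookkeeping---guaranteeing that one placement rule for $r$ makes the two new edges $y$-monotone, crossing-free, and non-transitive all at once, and does so in both $\Gamma_0$ and $\Gamma_1$; the key that unlocks it is that edge orientations and the rotation system are combinatorial and hence shared by the two drawings, so the same case analysis and the same ``place $r$ close to $v$'' prescription apply verbatim in both.
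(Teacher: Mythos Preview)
Your gadget and its analysis coincide with the paper's: pick a cut vertex $v$, choose two neighbors $p,q$ in different blocks that are consecutive in the rotation at $v$, insert a new degree-$2$ vertex $r$ close to $v$ inside the common face, and orient the two new edges to match the orientations of $(v,p)$ and $(v,q)$. The placement argument and the reducedness argument are essentially the paper's.

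The genuine gap is your handling of the $st$-graph case. The claim that every plane $st$-graph is biconnected is false: the directed path $s\to v\to t$ is already a counterexample, and so is any graph obtained by identifying the sink of one plane $st$-graph with the source of another. Your appeal to \autoref{le:st-faces-iff-st-graph} does not yield ``every face is bounded by a simple cycle'', because in a non-biconnected plane graph face boundaries are closed walks that may revisit vertices; whatever reading makes that lemma true cannot simultaneously validate your inference. So you may not simply set $G'=G$ here.

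The repair is exactly what the paper does: run the same gadget construction also in the $st$-graph case and argue that the $st$-property is preserved. In a plane $st$-graph, only your ``mixed'' case can occur at a cut vertex $v$: if both $p$ and $q$ were successors of $v$, each of their blocks would need a directed path to the unique sink passing back through $v$, forcing a directed cycle, and symmetrically for two predecessors. In the mixed case the new vertex $r$ has exactly one in-edge and one out-edge, so it is neither a new source nor a new sink, and the two faces replacing $f$ are delimited by directed paths from the former source-switch to the former sink-switch; hence $G'$ is again a plane $st$-graph.
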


\begin{proof}
	Initialize $G'=G$, $\Gamma'_0=\Gamma_0$, and $\Gamma'_1=\Gamma_1$. Consider a cutvertex $v$ of~$G'$.
	Let $u$ and $w$ be two neighbors of $v$ belonging to different blocks of $G'$ that are consecutive in the circular order of the neighbors of $v$. By relabeling $u$ and $w$, we may assume that one of the following holds true:
	\begin{itemize}
		\item if $u$ and $w$ are both successors of $v$, then the edge $vw$ immediately follows the edge $vu$ in the clockwise order of the edges incident to $v$ (see \blue{Fig.}\autoref{fig:bico1}); 
		\item if $u$ and $w$ are both predecessors of $v$, then the edge $wv$ immediately precedes the edge $uv$ in the clockwise order of the edges incident to $v$ (see \blue{Fig.}\autoref{fig:bico2}); or
		\item $u$ is a successor of $v$, $w$ is a predecessor of $v$, and the edge $wv$ immediately follows the edge $vu$ in the clockwise order of the edges incident to $v$ (see \blue{Fig.}\autoref{fig:bico3}).
	\end{itemize}

	\begin{figure}[tb]
		\centering
		\subfloat[]{\includegraphics[scale=1, page=1]{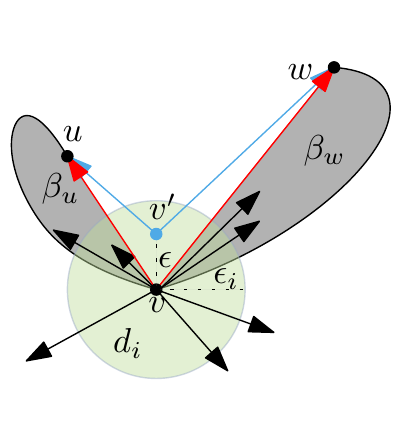}\label{fig:bico1}}
		\subfloat[]{\includegraphics[scale=1, page=2]{biconnection}\label{fig:bico2}}
		\subfloat[]{\includegraphics[scale=1, page=3]{biconnection}\label{fig:bico3}}
		\caption{Illustration for \autoref{le:biconnected-reduced-augmentation}. The vertices $u$ and $w$ belong to two blocks $\beta_u$ and $\beta_w$, respectively, both containing the cut-vertex $v$. \label{fig:bico}}
	\end{figure}
	
	Denote by $f$ the face that is to the right of the edge $(u,v)$ when traversing such an edge according to its orientation. Note that the edge $(v,w)$ is also incident to $f$. We add to $G'$ a new vertex $v'$ inside $f$; further, we add to $G'$ two directed edges connecting $v'$ with $u$ and $w$ inside $f$. These edges are directed as the edges connecting $v$ with $u$ and $w$, respectively; that is, we add to $G'$ either the directed edge $uv'$, if $uv \in E(G')$, or the directed edge $v'u$, if $vu \in E(G')$, and either the directed edge $wv'$, if $wv \in E(G')$, or the directed edge $v'w$, if $vw \in E(G')$. 
	
	The described augmentation does not introduce any transitive edges. Further, no edge that was already in $G'$ before the augmentation becomes transitive after the augmentation; this is because the edge $(u,w)$ does not belong to $G'$, as $u$ and $w$ belong to distinct blocks of $G'$; hence $G'$ remains reduced if it was so. 
	
	In the case illustrated in \blue{Fig.}\autoref{fig:bico1} (in \blue{Fig.}\autoref{fig:bico2}), each of the blocks $\beta_u$ and $\beta_w$ of $G'$ containing $u$ and $w$ before the augmentation contains a distinct sink of $G'$ (resp.\ a distinct source of $G'$), hence $G'$ is not an $st$-graph before the augmentation. In the case illustrated in \blue{Fig.}\autoref{fig:bico3}, it might be that $G'$ is an $st$-graph before the augmentation. Note that there only two faces of the augmented graph $G'$ that do not belong to $G'$ before the augmentation. One of them is delimited by the directed paths $wvu$ and $wv'u$, hence it is an $st$-face; the other one is obtained from $f$ by replacing the directed path $wvu$ with the directed path $wv'u$, hence it is an $st$-face as long as $f$ is. It follows that, if $G'$ is an $st$-graph before the augmentation, then it remains an $st$-graph after the augmentation. 
	
	We now describe how to insert $v'$ and its incident edges into $\Gamma'_0$ and $\Gamma'_1$. By standard continuity arguments, like the ones used in the proof of F\'ary's theorem~\cite{Fary}, we have that, for $i=0,1$, there exists a sufficiently small value $\epsilon_i > 0$ such that the disk $d_i$ with radius  $\epsilon_i$ centered at $v$ in $\Gamma'_i$ contains no vertex other than $v$ and is not traversed by any edge other than those incident to $v$. We place $v'$ at distance $\epsilon < \epsilon_0,\epsilon_1$ from $v$ inside $f$, as illustrated in \blue{Figs.}\autoref{fig:bico1}--\autoref{fig:bico3}; in particular, $v'$ is placed in the circular sector of $d_i$ delimited by $(u,v)$ and $(w,v)$. By selecting a sufficiently small value for $\epsilon$, the edges $(u,v')$ and $(w,v')$ can be drawn as straight-line segments that do not intersect any edge of $G'$. Further, if $\epsilon$ is sufficiently small, then the $y$-coordinate of $u$ (the $y$-coordinate of $w$) is smaller than the one of $v'$ if and only if it is smaller than the one of $v$, hence the straight-line segments representing the edges $(u,v')$ and $(w,v')$ monotonically increase in the $y$-direction from their sources to their sinks. The upward planarity of the drawings $\Gamma'_0$ and $\Gamma'_1$ of the augmented graph $G'$ follows. Note that after the augmentation we have $\Gamma'_0[G]= \Gamma_0$ and $\Gamma'_1[G]= \Gamma_0$. This is because the same equalities were satisfied before the augmentation and since the drawings of $G'$ before the augmentation were not altered during the augmentation.
	
	Since the graph $G'$ after the augmentation contains one block less than before the augmentation, the repetition of this argument results in a biconnected graph $G'$. This concludes the proof of the lemma.
\end{proof}


\section{Plane $st$-Graphs}\label{se:st-graphs}

In this section, we show algorithms for constructing upward planar morphs between upward planar drawings of plane $st$-graphs.


\subsection{Reduced Plane $st$-Graphs}\label{sse:reduced}

We first consider plane $st$-graphs without transitive edges. We have the following.

\newcommand{\goodreducedstatement}{Any two upward planar drawings $\Gamma_0$ and $\Gamma_1$ of a reduced plane $st$-graph $G$ form an hvh-pair.}
\begin{lemma}\label{le:good-reduced}
\goodreducedstatement
\end{lemma}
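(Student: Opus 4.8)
The plan is to exhibit the two auxiliary drawings $\Gamma'_0$ and $\Gamma'_1$ required by the definition of an hvh-pair. Write $Y_0$ and $Y_1$ for the $y$-assignments of $\Gamma_0$ and $\Gamma_1$. The key simplification is that Conditions~(\ref{pr:one}) and~(\ref{pr:three}) come essentially for free: by \autoref{le:st-y-assignment-equivalent}, \emph{any} upward planar drawing of $G$ with $y$-assignment $Y_0$ (resp.\ $Y_1$) is automatically left-to-right equivalent to $\Gamma_0$ (resp.\ $\Gamma_1$). Hence it suffices to produce a single $x$-assignment $\xi$ such that the straight-line drawings $\Gamma'_0 := (\xi,Y_0)$ and $\Gamma'_1 := (\xi,Y_1)$ are both upward planar drawings respecting the embedding of $G$, and such that they are bottom-to-top equivalent; the latter is precisely Condition~(\ref{pr:two}), since the two drawings share the $x$-assignment $\xi$ by construction.

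First I would build $\xi$ from the embedding alone, so that the same $\xi$ is compatible with both $Y_0$ and $Y_1$. Because $\Gamma_0$ and $\Gamma_1$ are topologically equivalent, they induce the same left-to-right (horizontal) order on the elements of $G$, and I want $\xi$ to realize this common order. Concretely I would let $\xi$ be the $x$-coordinates produced by the classical plane $st$-graph drawing scheme: number the faces by a topological ordering of the dual $st$-graph $G^*$ and place each vertex $v$ strictly between the $x$-values of the face immediately to its left and the face immediately to its right. This $\xi$ depends only on the embedding, not on the actual heights, which is exactly what lets a single $\xi$ serve both $Y_0$ and $Y_1$.

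The main step is then to prove that, for \emph{every} $y$-assignment $Y$ arising from an upward planar drawing of $G$, the drawing $(\xi,Y)$ is upward planar and respects the embedding. Monotonicity of the edges in the $y$-direction is immediate since $Y(u)<Y(v)$ for every edge $uv$; the content is planarity together with the correctness of the rotation at each vertex. This is where reducedness is essential: with $\xi$ confining each edge to the vertical strip delimited by the $x$-values of its two incident faces, the absence of transitive edges should guarantee that the straight-line segments cannot leave these strips and cross one another, no matter how the heights $Y$ are skewed (a transitive edge $uv$ could, for a suitably chosen $Y$, be forced across the directed path it short-cuts, which is exactly the obstruction that vanishes in the reduced case). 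I expect this ``planarity for all $Y$'' statement, and the verification that it genuinely requires the reduced hypothesis, to be the crux of the argument.

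It remains to establish bottom-to-top equivalence of $\Gamma'_0$ and $\Gamma'_1$. Since they share the $x$-assignment $\xi$, a vertical line $\ell$ meets exactly the same set of vertices and edges in both drawings. I would show that the bottom-to-top order in which $\ell$ meets these elements is already determined by the embedding together with the face-strip structure induced by $\xi$: each edge met by $\ell$ is confined to the strip of its two incident faces, and reducedness again keeps the straight segments from leaving their strips, so the order of the crossings along $\ell$ is forced independently of the heights $Y$ and is therefore identical in $(\xi,Y_0)$ and $(\xi,Y_1)$. This yields Condition~(\ref{pr:two}), so $(\Gamma_0,\Gamma_1)$ is an hvh-pair; combined with \autoref{le:fast-morph}, it immediately delivers the promised $O(1)$-step upward planar morph.
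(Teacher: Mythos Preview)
Your high-level plan matches the paper's: build a single $x$-assignment $\xi$, get Conditions~(\ref{pr:one}) and~(\ref{pr:three}) for free from \autoref{le:st-y-assignment-equivalent}, and then check upward planarity of $(\xi,Y_0)$, $(\xi,Y_1)$ together with Condition~(\ref{pr:two}). Where you diverge is in the construction of $\xi$ and in how reducedness is used. You propose a one-shot, dominance-style assignment (``place $v$ between $x(\text{left}(v))$ and $x(\text{right}(v))$'') and then appeal to a strip argument. That argument is the gap: the strip of an edge $e$ is $[x(\text{left}(e)),x(\text{right}(e))]$, but an endpoint $v$ of $e$ only satisfies $\xi(v)\in(x(\text{left}(v)),x(\text{right}(v)))$, and the latter interval properly contains the former in general, so $e$ is \emph{not} confined to its own strip and your non-crossing and bottom-to-top claims do not follow. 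You also omit the reduction to biconnected $G$ via \autoref{le:biconnected-reduced-augmentation}, which the paper uses at the outset.

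The paper sidesteps exactly this difficulty by building $\xi$ \emph{incrementally} rather than in one shot. It takes a topological order of the weak dual, giving an ear-style decomposition $P_0,P_1,\dots,P_k$ where $P_j$ is the right boundary of the $j$-th face; reducedness enters precisely here, guaranteeing that no $P_j$ is a single edge, so each $P_j$ has internal vertices. Those internal vertices are all placed on a common vertical line $x=x^*_j$, with $x^*_j$ chosen (depending on both $Y_0$ and $Y_1$) far enough to the right that the two new attaching edges do not cross the drawing so far. Planarity is then immediate, and Condition~(\ref{pr:two}) becomes easy because the vertices sharing an $x$-coordinate lie on a single directed path, so their vertical order is forced and identical in $\Gamma'_0$ and $\Gamma'_1$. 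In short, your outline is sound, but the ``crux'' you flag really is the whole proof; the paper's incremental choice of $\xi$ is what turns that crux into a routine verification.
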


\begin{proof}

By \autoref{le:biconnected-reduced-augmentation} we can assume that the reduced plane $st$-graph $G$ is biconnected. Let $\Gamma_0$ and $\Gamma_1$ be any two upward planar drawings of $G$. We show that $\Gamma_0$ and $\Gamma_1$ form an hvh-pair by exhibiting two upward planar drawings $\Gamma'_0$ and $\Gamma'_1$ of $G$ that satisfy Conditions~(\ref{pr:one}),~(\ref{pr:two}), and~(\ref{pr:three}) of the definition of an hvh-pair.
	
	We construct drawings $\Gamma'_0$ and $\Gamma'_1$ as follows (refer to \blue{Figs.}~\ref{fig:ear-decomposition} and \ref{fig:ear-inv}).
	Consider the {\em weak dual} multi-graph $D$ of $G$, which is defined as follows. The multi-graph $D$ has a vertex $v_f$ for each internal face $f$ of $G$ and a directed edge $v_f v_g$ if the faces $f$ and $g$ of $G$ share an edge $e$ in $G$ and $f$ lies to the left of $g$ when traversing $e$ according to its orientation. The concept of weak dual multi-graph has been used, e.g., in~\cite{RT-rpl-86,TT86-avrpg-86,TT-uavr-86}. Observe that $D$ is acyclic~\cite{TT86-avrpg-86}. We now present a structural decomposition of $G$ guided by $D$ which has been used, e.g., in~\cite{fgw-nupo-12,Mel}.
	Let $\mathcal{T}=\{v_1, \dots, v_k\}$ be a topological ordering of the vertices of $D$ and let $P_0$ be the left boundary of the outer face of $G$. The ordering $\mathcal{T}$ defines a sequence $P_1$, $P_2$, \dots, $P_k$ of directed paths such that, for each $i=1,\dots,k$, the path $P_j$ is the right boundary of the face of $G$ corresponding to the vertex $v_j$ of~$D$. For $j=1,\dots,k$, the graph $G_j=\bigcup_{i=0}^j P_i$ is a plane $st$-graph which is obtained by attaching the directed path $P_{j}$ to two non-adjacent vertices on the right boundary of the outer face of $G_{j-1}$; further, $G_k=G$. Note that, since $G$ is a reduced plane $st$-graph, no path $P_j$ consists of a single edge.
	
	The drawings $\Gamma'_0$ and $\Gamma'_1$ are simultaneously and iteratively constructed by adding, for $j=1,\dots,k$, the path $P_j$ to the already constructed drawings $\Gamma'_0$ and $\Gamma'_1$ of $G_{j-1}$. Note that, after $P_j$ has been drawn in $\Gamma'_0$ and $\Gamma'_1$, the right boundary of the outer face of $G_{j}$ is a directed path, hence it is represented by a $y$-monotone curve in both $\Gamma'_0$ and $\Gamma'_1$.
	
	\begin{figure}[tb]
		\centering
		
		\subfloat[]{\includegraphics[scale=1, page=1]{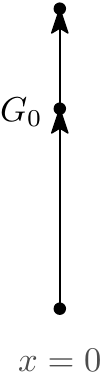}\label{fig:ear-deco-p0}}
		\hfil
		\subfloat[]{\includegraphics[scale=1, page=2]{ear-decomposition}\label{fig:ear-deco-p1}}
		\caption{Illustration for the proof of \autoref{le:good-reduced}. \protect\subref{fig:ear-deco-p0} The drawing $\Gamma'_i$ of $G_0=P_0$. \protect\subref{fig:ear-deco-p1} The drawing $\Gamma'_i$ of  $G_1$. 
			\label{fig:ear-decomposition}
		}
	\end{figure}
	
	\begin{figure}[tb]
		\centering
		\subfloat[]{\includegraphics[scale=1, page=1]{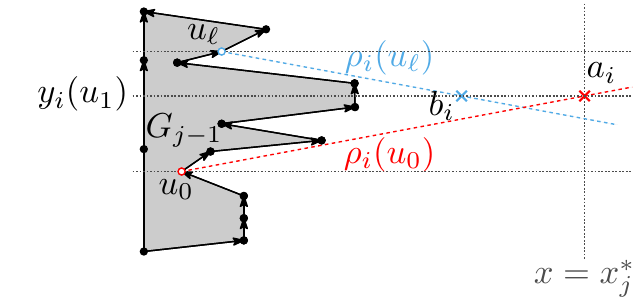}\label{fig:ear-inv-1}}
		\subfloat[]{\includegraphics[scale=1, page=2]{ear-decomposition-invariant}\label{fig:ear-inv-2}}
		\\
		\subfloat[]{\includegraphics[scale=1, page=3]{ear-decomposition-invariant}\label{fig:ear-inv-3}}
		\subfloat[]{\includegraphics[scale=1, page=4]{ear-decomposition-invariant}\label{fig:ear-inv-4}}
		\caption{Illustration for the proof of \autoref{le:good-reduced}. 
			Computation of the points $a_i$ and $b_i$ in $\Gamma'_i$ for the cases $\ell=2$ \protect\subref{fig:ear-inv-1} and $\ell>2$ \protect\subref{fig:ear-inv-3}, respectively. Drawing of the path $P_j$ for the cases $\ell=2$ \protect\subref{fig:ear-inv-2} and $\ell>2$ \protect\subref{fig:ear-inv-4}.
			\label{fig:ear-inv}
		}
	\end{figure}
	
	For $i=0,1$, we denote by $y_i(v)$ the $y$-coordinate of a vertex $v$ in $\Gamma_i$.
	
	We obtain drawings $\Gamma'_0$ and $\Gamma'_1$ of $G_0 = P_0$ by placing its vertices along the line $x=0$ at the $y$-coordinates they have in $\Gamma_0$ and $\Gamma_1$, respectively, and by drawing its edges as straight-line segments (see \blue{Fig.}\autoref{fig:ear-deco-p0}). It is easy to see that $\Gamma_0[G_0]$, $\Gamma'_0$, $\Gamma'_1$, and $\Gamma_1[G_0]$ fulfill Conditions~(\ref{pr:one})-(\ref{pr:three}) of the definition of an hvh-pair.
	
	Suppose now that, for some $j\in \{1,\dots,k\}$, the drawings $\Gamma'_0$ and $\Gamma'_1$ are upward planar straight-line drawings of $G_{j-1}$ such that $\Gamma_0[G_{j-1}]$, $\Gamma'_0$, $\Gamma'_1$, and $\Gamma_1[G_{j-1}]$ fulfill Conditions~(\ref{pr:one})-(\ref{pr:three}) of the definition of an hvh-pair.
	
	We show how to add the path $P_j= u_0 u_1 \dots u_{\ell-1} u_\ell$ to both $\Gamma'_0$ and $\Gamma'_1$ so that the resulting drawings together with $\Gamma_0[G_{j}]$ and $\Gamma_1[G_{j}]$ fulfill Conditions~(\ref{pr:one})-(\ref{pr:three}) of the definition of an hvh-pair. Note that $u_0$ and $u_\ell$ belong to the right boundary of the outer face of $G_{j-1}$, hence they are already present in $\Gamma'_0$ and $\Gamma'_1$. Since all the edges of $P_j$ are going to be drawn as straight-line segments, it suffices to show how to draw the internal vertices of $P_j$ in $\Gamma'_0$ and $\Gamma'_1$.
	For $i=0,1$, we assign to the internal vertices of $P_j$ in $\Gamma'_i$ the same $y$-coordinates they have in $\Gamma_i$. Also, we assign to all such vertices, in both drawings, the same $x$-coordinate $x^*_j$, which has a ``sufficiently large'' value determined as follows. 
	
	If $j=1$, then we set $x^*_j=1$ (see \blue{Fig.}~\ref{fig:ear-deco-p1}). 
	
	If $j>1$, then we proceed as follows. Refer to~\autoref{fig:ear-inv}.
		For $i=0,1$, let $\rho_i(u_0)$ be a ray emanating from $u_0$ with positive slope, directed rightwards, not intersecting $\Gamma'_i$, except at $u_0$, and such that the intersection point $a_i$ of $\rho_i(u_0)$ with the horizontal line $y= y_i(u_1)$ lies to the right of every vertex in $\Gamma'_i$. Analogously, for $i=0,1$, let $\rho_i(u_\ell)$ be a ray emanating from $u_\ell$ with negative slope, directed rightwards, not intersecting $\Gamma'_i$, 
		except at $u_\ell$, and such that the intersection point $b_i$ of $\rho_i(u_\ell)$ with the horizontal line $y= y_i(u_{\ell-1})$ lies to the right of every vertex in $\Gamma'_i$. Refer to \blue{Figs.}~\ref{fig:ear-inv-1} and~\ref{fig:ear-inv-3} for the cases in which $\ell=2$ and $\ell>2$, respectively.
		Observe that $u_1$ and $u_{\ell-1}$ coincide if $P_j$ is a path of length $2$, i.e., if $\ell = 2$.
		We set $x^*_j$ as the maximum of the $x$-coordinates of $a_0$, $b_0$, $a_1$, and $b_1$.

	
	The obtained drawings $\Gamma'_0$ and $\Gamma'_1$ of $G_j$ are planar, as the slopes of the straight-line segments representing the edge $u_0 u_1$ in $\Gamma'_0$ and $\Gamma'_1$ are smaller than or equal to those of $\rho_0(u_0)$ and $\rho_1(u_0)$, respectively, as the slopes of the straight-line segments representing the edge $u_{\ell-1} u_\ell$ in $\Gamma'_0$ and $\Gamma'_1$ are larger than or equal to those of $\rho_0(u_\ell)$ and $\rho_1(u_\ell)$, respectively, and as the vertical line $x=x^*_j$ lies to the right of all the vertices of $G_{j-1}$ both in $\Gamma'_0$ and in $\Gamma'_1$. The drawings $\Gamma'_0$ and $\Gamma'_1$ are upward, given that the vertices of $G_j$ have the same $y$-coordinates they have in $\Gamma_0$ and $\Gamma_1$, respectively, and given that $\Gamma_0$ and $\Gamma_1$ are upward drawings.
	
	
	In order to conclude the proof, we show that the obtained drawings, together with $\Gamma_0[G_j]$ and $\Gamma_1[G_j]$, fulfill Conditions~(\ref{pr:one})--(\ref{pr:three}) of the definition of an hvh-pair.
	Since ($\Gamma_0$,$\Gamma'_0$) and ($\Gamma_1$, $\Gamma'_1$) are pairs of upward planar drawings of $G_j$ with the same $y$-assignment, by \autoref{le:st-y-assignment-equivalent}, Conditions (\ref{pr:one}) and (\ref{pr:three}) hold true. In order to prove Condition~(\ref{pr:two}), first recall that by construction $\Gamma'_0$ and $\Gamma'_1$ have the same $x$-assignment. Also, since all the intermediate vertices of any path $P_j$ are drawn on the vertical line $x=x^*_j$ in both $\Gamma'_0$ and $\Gamma'_1$, and the circular ordering of the edges around each vertex is the same in both $\Gamma'_0$ and $\Gamma'_1$, we have that the sequence of vertices and edges crossed by each vertical line in $\Gamma'_0$ and $\Gamma'_1$ is the same, thus implying Condition~(\ref{pr:two}).
\end{proof}

Combining \autoref{le:fast-morph} with \autoref{le:good-reduced} we obtain the following result.

\begin{theorem}\label{th:st-reduced-graphs}
Let $\Gamma_0$ and $\Gamma_1$ be any two upward planar drawings of a reduced plane $st$-graph. There is a $3$-step upward planar morph from $\Gamma_0$ to $\Gamma_1$.
\end{theorem}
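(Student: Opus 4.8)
The plan is to obtain the statement as an immediate corollary of the two lemmas just established, so the only real work is to check that their hypotheses line up. First I would observe that the input drawings $\Gamma_0$ and $\Gamma_1$ are, by assumption, two upward planar drawings of one and the same reduced plane $st$-graph $G$. This is exactly the setting of \autoref{le:good-reduced}, which guarantees that the pair $(\Gamma_0,\Gamma_1)$ is an hvh-pair; concretely, it produces auxiliary upward planar drawings $\Gamma'_0$ and $\Gamma'_1$ of $G$ such that $\Gamma_0,\Gamma'_0$ are left-to-right equivalent, $\Gamma'_0,\Gamma'_1$ are bottom-to-top equivalent, and $\Gamma'_1,\Gamma_1$ are left-to-right equivalent.

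Having certified that $(\Gamma_0,\Gamma_1)$ is an hvh-pair, I would invoke \autoref{le:fast-morph} directly, with $G$ in the role of the upward plane graph required there. Its conclusion delivers a $3$-step upward planar morph from $\Gamma_0$ to $\Gamma_1$, namely $\langle \Gamma_0,\Gamma'_0,\Gamma'_1,\Gamma_1\rangle$, in which each of the three linear steps is unidirectional and planar by \autoref{le:shift} and is orientation-preserving by the argument internal to the fast-morph lemma. This yields precisely the $3$-step upward planar morph asserted by the theorem.

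Since both constituent lemmas have already been proved in full, there is no genuine obstacle at this point: the entire content of the proof is the verification that a reduced plane $st$-graph is in particular an upward plane graph and that its upward planar drawings satisfy the hvh-pair hypothesis, which \autoref{le:good-reduced} supplies verbatim. I would therefore present the proof as a one-line composition of \autoref{le:good-reduced} and \autoref{le:fast-morph}, and flag that all the substance lives in those two earlier results rather than in this final assembly step.
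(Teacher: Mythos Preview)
Your proposal is correct and matches the paper's approach exactly: the paper also derives the theorem as an immediate combination of \autoref{le:good-reduced} (to certify that $(\Gamma_0,\Gamma_1)$ is an hvh-pair) and \autoref{le:fast-morph} (to obtain the $3$-step upward planar morph). There is nothing to add; your one-line composition is precisely what the paper does.
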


\subsection{General Plane $st$-Graphs}\label{sse:general-st-graph}
We now turn our attention to general plane $st$-graphs. We restate here, in terms of plane $st$-graphs, a result by Hong and Nagamochi~\cite{DBLP:journals/jda/HongN10} that was originally formulated in terms of hierarchical plane (undirected) graphs.


\begin{theorem}[Hong and Nagamochi~\cite{DBLP:journals/jda/HongN10}, Theorem~8]\label{th:hongANDnaga-biconnected}
Consider an internally $3$-connected plane $st$-graph~$G$ and let $y_G$ be a $y$-assignment for the vertices of $G$ such that each vertex $v$ is assigned a value $y_G(v)$ that is greater than those assigned to its predecessors. There exists a strictly-convex upward planar drawing of $G$ satisfying $y_G$.
\end{theorem}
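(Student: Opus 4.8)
The plan is to produce the drawing as a \emph{Tutte (barycentric) embedding} whose weights are tuned so that the prescribed $y$-coordinates appear automatically as one coordinate of the equilibrium. Concretely, for every internal vertex $v$ and every neighbour $u$ of $v$ I would pick a positive weight $w_{uv}$ with $\sum_u w_{uv}=1$ and, crucially, $\sum_u w_{uv}\,y_G(u)=y_G(v)$. Such positive weights exist because an internal vertex of a plane $st$-graph is neither a source nor a sink, so it has at least one predecessor (a neighbour below it in $y_G$) and at least one successor (a neighbour above it); hence $y_G(v)$ lies strictly between the smallest and the largest $y_G$-value over the neighbours of $v$, and can be written as a strict convex combination of them (uniquely so when $\deg(v)=2$).

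Next I would fix the outer boundary. As $G$ is internally $3$-connected it is biconnected, so its outer face is bounded by the $st$-cycle made of the two $y$-monotone boundary paths from $s$ to $t$. Here $s$ is strictly lowest and $t$ strictly highest: any vertex other than $s$ has a predecessor with strictly smaller $y_G$, and following predecessors reaches the unique source $s$, so $y_G(s)<y_G(v)$; symmetrically $y_G(v)<y_G(t)$. I would then assign $x$-coordinates to the boundary vertices, keeping their $y$-coordinates equal to $y_G$, so that the left path bulges leftwards and the right path rightwards; this turns the boundary into a strictly convex polygon, which is always possible for two $y$-monotone chains sharing only their extreme vertices.

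I would then place every internal vertex at the two-dimensional barycentre of its neighbours, i.e.\ solve $p(v)=\sum_u w_{uv}\,p(u)$ with the boundary positions held fixed; the governing matrix is diagonally dominant and, by connectivity, invertible, so the solution is unique. The key observation is that the $y$-coordinates produced by this system solve $y(v)=\sum_u w_{uv}\,y(u)$ with boundary data $y_G$; since $y_G$ is itself a solution of this system and the solution is unique, the equilibrium $y$-coordinates coincide with $y_G$. Thus the barycentric drawing satisfies the prescribed $y$-assignment for free, and because $y_G$ strictly increases along every edge each edge is $y$-monotone, so the drawing is upward.

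It then remains to invoke Tutte's theorem in the form valid for internally $3$-connected plane graphs with arbitrary positive weights: the barycentric embedding is planar and every face is convex. The main obstacle is to upgrade this to \emph{strict} convexity, i.e.\ to rule out flat angles where three consecutive vertices of a face are collinear. This is exactly where internal $3$-connectivity enters: a flat angle forces a separating pair whose removal leaves a component with no vertex on the outer face (in the extreme case an internal degree-$2$ vertex, collinear with its two neighbours, whose neighbours form such a forbidden $2$-cut), contradicting internal $3$-connectivity. Making this precise — the strict-convexity refinement of Tutte's theorem for non-uniform positive weights — is the technical heart of the argument and would be carried out through the standard theory of convex representations of plane graphs.
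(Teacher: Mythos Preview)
The paper does not give its own proof of this statement: it is quoted as Theorem~8 of Hong and Nagamochi~\cite{DBLP:journals/jda/HongN10}, merely restated in the language of plane $st$-graphs, and then used as a black box. So there is no ``paper's proof'' to compare your proposal against.

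That said, your outline is a sound and natural route to the result. The weight-tuning trick---choosing positive convex coefficients at each internal vertex so that the barycentric $y$-equation is satisfied by the prescribed $y_G$---is exactly what makes Tutte's framework respect a given level assignment, and your justification (every internal vertex of a plane $st$-graph has both a predecessor and a successor, hence $y_G(v)$ lies strictly between the extreme neighbour values) is correct. The boundary set-up is also fine: the outer face of a biconnected plane $st$-graph is an $st$-cycle, and two $y$-monotone chains sharing only their endpoints can always be realised as a strictly convex polygon with the given $y$-coordinates. You have correctly identified the one place where real work remains, namely upgrading Tutte's convexity to \emph{strict} convexity; this is precisely where internal $3$-connectivity is consumed, and the standard argument (a flat angle yields either an internal degree-$2$ vertex or a $2$-cut violating the internal $3$-connectivity condition) goes through for arbitrary positive weights. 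Filling in that last step carefully is essentially reproving the convex-drawing theory of Thomassen/Chiba--Nishizeki in the weighted setting, which is doable but not a one-liner; your proposal is honest about this.
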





We use \autoref{th:hongANDnaga-biconnected} to prove the following lemma, which allows us to restrict our attention to maximal plane $st$-graphs.

\newcommand{\theoremstgraphmaximal}{Let $\Gamma_0$ and $\Gamma_1$ be two upward planar drawings of an $n$-vertex  plane $st$-graph $G$. 
	
Suppose that an algorithm $\mathcal A$ exists that constructs an $f(r)$-step upward planar morph between any two upward planar drawings of an $r$-vertex maximal plane $st$-graph. 

Then there exists an $O(f(n))$-step upward planar morph from $\Gamma_0$ to $\Gamma_1$.}

\begin{lemma}\label{th:st-graphs-maximal}
\theoremstgraphmaximal
\end{lemma}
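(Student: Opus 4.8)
The plan is to reduce the general case to the maximal case by \emph{augmenting} $G$ to a maximal plane $st$-graph $G^+$, producing upward planar drawings of $G^+$ whose restrictions to $G$ are ``compatible'' with $\Gamma_0$ and $\Gamma_1$, then invoking $\mathcal{A}$ on $G^+$ and restricting the resulting morph back to $G$. The delicate point, explained below, is that I cannot simply insert the chords of a triangulation into $\Gamma_0$ and $\Gamma_1$ directly: a triangulation realizable inside $\Gamma_0$ need not be realizable inside $\Gamma_1$. Instead I will build \emph{fresh} straight-line drawings of $G^+$ with prescribed $y$-coordinates via \autoref{th:hongANDnaga-biconnected}, and glue them to $\Gamma_0$ and $\Gamma_1$ through single unidirectional steps justified by \autoref{le:st-y-assignment-equivalent} and \autoref{le:shift}.

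First I would invoke \autoref{le:biconnected-reduced-augmentation} to reduce to the case where $G$ is biconnected: it augments $G$ to a biconnected plane $st$-graph on $O(n)$ vertices whose two drawings restrict to $\Gamma_0$ and $\Gamma_1$, so a morph of the augmentation restricts to the desired morph of $G$ with the same number of steps. I then augment $G$ to a maximal plane $st$-graph $G^+$ by triangulating each internal $st$-face $f$: I fan $f$ from its source-switch $s_f$, adding an edge $s_f x$ for every boundary vertex $x$ of $f$ not already adjacent to $s_f$. Since the boundary of $f$ consists of two directed paths from $s_f$ to its sink-switch $t_f$ (recall \autoref{le:st-faces-iff-st-graph}), every such $x$ is reachable from $s_f$ along the boundary, so each added edge $s_f x$ is a \emph{transitive} edge oriented from $s_f$ to $x$; consequently each resulting internal face is a transitive triangle, $G^+$ is a maximal plane $st$-graph on $O(n)$ vertices, and it is (by the standard fact that maximal plane $st$-graphs are internally $3$-connected) amenable to \autoref{th:hongANDnaga-biconnected}.

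The crucial consequence of using only transitive edges is that the orientation of every added edge is \emph{forced} to be upward in every upward planar drawing of $G^+$: if $s_f x$ is added, then $y(s_f)<y(x)$ in any such drawing. Hence the $y$-assignment $y_0$ induced by $\Gamma_0$ (and likewise $y_1$ by $\Gamma_1$) still assigns to each vertex a value larger than those of its predecessors \emph{in $G^+$}. By \autoref{th:hongANDnaga-biconnected} I then obtain upward planar drawings $\Delta_0$ and $\Delta_1$ of $G^+$ satisfying $y_0$ and $y_1$, respectively. Now $\Delta_0[G]$ and $\Gamma_0$ are two upward planar drawings of the plane $st$-graph $G$ with the same $y$-assignment, so by \autoref{le:st-y-assignment-equivalent} they are left-to-right equivalent and by \autoref{le:shift} there is a single unidirectional planar morph between them; it is upward planar because the $y$-coordinates stay fixed throughout, so no edge reverses its orientation (as in the proof of \autoref{le:fast-morph}). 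The same argument applies to $\Delta_1[G]$ and $\Gamma_1$.

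Finally I would assemble the morph $\langle \Gamma_0,\ \Delta_0[G],\ \Delta_1[G],\ \Gamma_1\rangle$: the first and last transitions are the single unidirectional steps just described, while the middle transition is obtained by running $\mathcal{A}$ on the maximal plane $st$-graph $G^+$ to get an $f(O(n))$-step upward planar morph from $\Delta_0$ to $\Delta_1$, and restricting every intermediate drawing to $G$ (which preserves upward planarity, as deleting edges and vertices does). This gives at most $f(O(n))+2 = O(f(n))$ steps. The main obstacle is exactly the one this construction is designed to circumvent: a chord joining the two boundary chains of a face may have no consistent orientation across $\Gamma_0$ and $\Gamma_1$, and a straight-line triangulation valid in one drawing may cross edges in the other. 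Restricting the augmentation to transitive edges (the fan from $s_f$) resolves the orientation ambiguity so that a \emph{single} combinatorial graph $G^+$ serves both drawings, and \autoref{th:hongANDnaga-biconnected} supplies the straight-line realizations with the required $y$-coordinates without ever drawing the new chords inside the given drawings.
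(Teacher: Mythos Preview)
Your overall strategy is correct and essentially the paper's: augment $G$ to a maximal plane $st$-graph, use \autoref{th:hongANDnaga-biconnected} with the $y$-assignments coming from $\Gamma_0$ and $\Gamma_1$ to get drawings $\Delta_0,\Delta_1$ of the augmented graph, bridge $\Gamma_i$ and $\Delta_i[G]$ via a single horizontal step (by \autoref{le:st-y-assignment-equivalent} and \autoref{le:shift}), and run $\mathcal{A}$ on the augmented graph in between. The key point you stress---that every added edge must have its orientation forced by \emph{both} given $y$-assignments---is exactly the right one.

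However, your concrete augmentation has two gaps. First, you triangulate only the internal faces, so $G^+$ is in general not a maximal plane $st$-graph (its outer face may still be large), and $\mathcal{A}$ as stated does not apply to it. Second, and more subtly, the fan from $s_f$ may require a multi-edge: $s_f$ can already be adjacent to a non-neighboring boundary vertex $x$ of $f$ via an edge lying \emph{outside} $f$. For instance, take the biconnected plane $st$-graph on vertices $1,2,3,4$ with edges $1\to 2$, $2\to 4$, $1\to 3$, $3\to 4$, and $1\to 4$ embedded to the right of the path $1\to 3\to 4$; the internal $4$-face has source-switch $1$, and triangulating it from $1$ needs a second copy of $1\to 4$. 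Your rule ``add $s_f x$ only if $x$ is not already adjacent to $s_f$'' then leaves $f$ non-triangulated; adding the edge anyway produces a multigraph, and the machinery behind $\mathcal{A}$ (in particular \autoref{le:internal-vertex} and the notion of simple vertex) relies on simplicity.

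The paper avoids both issues by inserting a \emph{new} vertex $v_f$ in each internal face (and a new vertex $v^*$ in the outer face), with $y(v_f)$ chosen strictly between $y(s_f)$ and the minimum $y$-value among the remaining boundary vertices of $f$; the added edges are $s_f v_f$ together with $v_f x$ for every other boundary vertex $x$. These are automatically simple and upward under both $y$-assignments, and they also triangulate the outer face. Swapping your fan for this Steiner-vertex augmentation repairs the argument with no further change to the rest of your proof.
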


\begin{proof}
By \autoref{le:biconnected-reduced-augmentation} we can assume that $G$ is biconnected.

We augment $G$ to a maximal plane $st$-graph $G^*$ as follows (refer to \blue{Fig.~\ref{fig:triangulating-a}}). For each internal face $f$ of $G$ we add to $G$: (i) a vertex $v_f$ into $f$, (ii) a directed edge from the source-switch $s_f$ of $f$ to $v_f$, and (iii) directed edges from $v_f$ to every other vertex incident to $f$. We also add a vertex $v^*$ into the outer face of $G$, and add directed edges from $v^*$ to all the vertices incident to the outer face of $G$. The resulting graph $G^*$ is a maximal plane $st$-graph (in particular it is internally-$3$-connected) and contains $O(n)$ vertices. 

\begin{figure}[tb]
	\centering
	\subfloat[\label{fig:triangulating-a}]
	{\includegraphics[page=1,height=.3\textwidth]{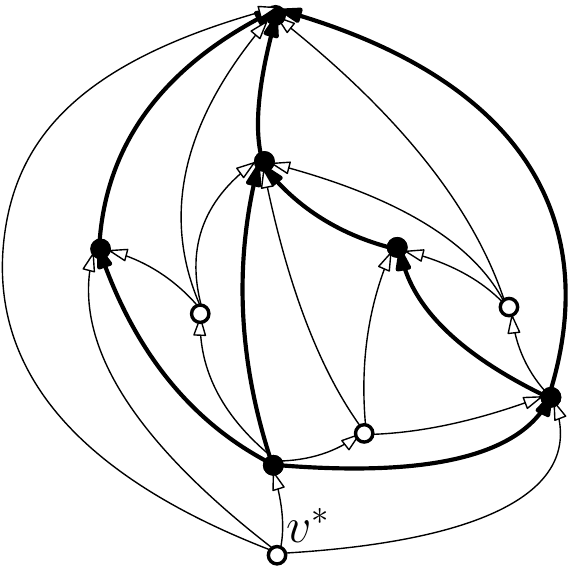}}
	\hfil
	\subfloat[\label{fig:triangulating-b}]
	{\includegraphics[page=2,height=.3\textwidth]{figures/Triangulating}}
	\caption{Illustration for the proof of \autoref{th:st-graphs-maximal}. (a) A biconnected plane $st$-graph $G$ (shown with black vertices and thick edges) and its augmentation to a maximal plane $st$-graph $G^*$ (by the addition of the white vertices and white-arrowed edges). (b) The construction of an upward planar drawing $\Gamma^*_i$ of $G^*$ in which each vertex of $G^*$ that is also in $G$ has the same $y$-coordinate as in $\Gamma_i$.}\label{fig:triangulating}
\end{figure}

Denote by $y^0_G$ the $y$-assignment for the vertices of $G$ that is induced by $\Gamma_0$. We define a $y$-assignment $y^0_{G^*}$ for  the vertices of $G^*$ by setting:

\begin{itemize}
	\item $y^0_{G^*}(v)=y^0_{G}(v)$ for each vertex $v\in V(G)$;
	\item for each vertex $v_f$ of $G^*$ inserted into an internal face $f$ of $G$, a value for $y^0_{G^*}(v_f)$ that is larger than $y^0_{G^*}(s_f)$ and smaller than $y^0_{G^*}(v)$, for every other vertex $v$ incident to $f$; and 
	\item for the vertex $v^*$ of $G^*$ inserted into the outer face of $G$, a value for $y^0_{G^*}(v^*)$ that is smaller than $y^0_{G^*}(v)$, for every vertex $v\neq v^*$ of $G^*$.
\end{itemize} 

We similarly define a $y$-assignment $y^1_{G^*}$ for the vertices of $G^*$ using the $y$-coordinates of $\Gamma_1$.

Note that, for $i=0,1$, each vertex $v$ of $G^*$ has been assigned a value $y^i_{G^*}$ that is greater than those assigned to its predecessors. We can hence use \autoref{th:hongANDnaga-biconnected} to construct upward planar drawings $\Gamma^*_0$ and $\Gamma^*_1$ of $G^*$ satisfying $y^0_{G^*}$ and $y^1_{G^*}$, respectively (refer to \blue{Fig.~\ref{fig:triangulating-b}}).  

By \autoref{le:st-y-assignment-equivalent} we have that the drawings $\Gamma^*_0[G]$ and $\Gamma_0$ are left-to-right equivalent.	Therefore, by \autoref{le:shift}, the linear morph $\mathcal M_0$ from $\Gamma_0$ to $\Gamma^*_0[G]$ is unidirectional and planar. Such a morph is also upward since both $\Gamma_0$ and $\Gamma^*_0[G]$ are upward planar and left-to-right equivalent. Analogously, the linear morph $\mathcal M_1$ from $\Gamma^*_1[G]$ to  $\Gamma_1$ is upward planar. 

We now apply algorithm $\mathcal A$ to construct an $O(n)$-step upward planar morph from $\Gamma^*_0$ to $\Gamma^*_1$ and restrict such a morph to the vertices and edges of $G$ to obtain an $O(n)$-step upward planar morph $\mathcal M_{01}$ from $\Gamma^*_0[G]$ to $\Gamma^*_1[G]$.

An upward planar morph $\mathcal M$ from $\Gamma_0$ to $\Gamma_1$ is finally obtained as the concatenation of $\mathcal M_0$, $\mathcal M_{01}$, and $\mathcal M_1$. The number of steps of $\mathcal M$ is equal to the number of steps of $\mathcal M_{01}$ plus two, hence it is in $O(n)$. This concludes the proof.
\end{proof}

In the following we will present an algorithm that constructs an upward planar morph between two upward planar drawings of a maximal plane $st$-graph. Before doing so, we need to introduce one more tool. The \emph{kernel} of a polygon $P$ is the set of points $p$ inside or on $P$ such that, for any point $q$ on $P$, the open segment $\overline{pq}$ lies inside $P$. We have the following.

\begin{lemma}\label{le:convexify-around-vertex}
Let $\Gamma$ be an upward planar drawing of an internally $3$-connected plane $st$-graph $G$, let $f$ be an internal $st$-face of $G$, and let $P$ be the polygon representing $f$ in $\Gamma$.   

There exists an upward planar drawing $\Gamma'$ of $G$ such that the polygon representing the boundary of $f$ is strictly convex and $\mathcal M = \langle \Gamma, \Gamma' \rangle$ is a unidirectional upward planar morph. 
Further, if $v$ is a vertex incident to $f$ that is in the kernel of $P$ in~$\Gamma$, then $v$ is in the kernel of the polygon representing the boundary of $f$ throughout $\mathcal M$.
\end{lemma}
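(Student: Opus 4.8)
The plan is to fix a $y$-assignment that interpolates linearly in time between the $y$-coordinates of $\Gamma$ and the target strictly-convex drawing, and to apply Theorem~\ref{th:hongANDnaga-biconnected} to obtain a strictly-convex drawing $\Gamma'$ of $G$ realizing the \emph{same} $y$-assignment as $\Gamma$. First I would note that the $y$-assignment $y_G$ induced by $\Gamma$ satisfies the hypothesis of Theorem~\ref{th:hongANDnaga-biconnected}, since $\Gamma$ is upward planar and hence every vertex lies strictly above all its predecessors. Theorem~\ref{th:hongANDnaga-biconnected} then yields a strictly-convex upward planar drawing $\Gamma'$ of $G$ with this very $y$-assignment, so in particular the polygon representing $f$ is strictly convex in $\Gamma'$, as required.

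The second step is to show that the linear morph $\mathcal M = \langle \Gamma, \Gamma' \rangle$ is unidirectional and upward planar. Since $\Gamma$ and $\Gamma'$ have the same $y$-assignment by construction, they are left-to-right equivalent by Lemma~\ref{le:st-y-assignment-equivalent} (both are upward planar drawings of the plane $st$-graph $G$). Lemma~\ref{le:shift} then immediately gives that $\mathcal M$ is unidirectional and planar: all vertices move along horizontal trajectories and no crossing is ever created. Because the common $y$-assignment is preserved throughout $\mathcal M$ and every edge $uv$ keeps $y(u) < y(v)$ at all times, no edge reverses orientation, so $\mathcal M$ is in fact upward planar. This reuses exactly the argument already used in the proof of Lemma~\ref{th:st-graphs-maximal} for the morph between $\Gamma_i$ and $\Gamma^*_i[G]$.

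The final and most delicate step is the kernel-preservation claim: if a vertex $v$ incident to $f$ lies in the kernel of $P$ in $\Gamma$, then $v$ stays in the kernel of the polygon representing $f$ throughout $\mathcal M$. I would argue this vertex by vertex against the edges of $f$: $v$ being in the kernel means that for every edge $e$ of $P$, the vertex $v$ lies on the interior side of the line through $e$. Since the morph is unidirectional (all motion horizontal), I can invoke Lemma~\ref{le:sidedness} with $p = v$ and $q, r$ the two endpoints of each edge $e$ of $f$: if $v$ is on the interior side of the oriented line through $\overline{qr}$ at both the start ($\Gamma$) and the end ($\Gamma'$) of the morph, then it stays on that side throughout. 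The start condition holds because $v$ is in the kernel of $P$; the end condition holds because $\Gamma'$ is strictly convex, so \emph{every} vertex incident to $f$ (in particular $v$) lies in the kernel of the strictly-convex polygon representing $f$. Applying this to each of the finitely many edges of $f$ simultaneously, $v$ remains on the interior side of all of them at every intermediate time, i.e.\ $v$ is in the kernel of $f$ throughout $\mathcal M$.

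The main obstacle I anticipate is the clean application of Lemma~\ref{le:sidedness} to a \emph{moving} polygon: the edges of $f$ themselves move during $\mathcal M$, so ``the interior side of $e$'' refers to a line whose endpoints are both in motion. Lemma~\ref{le:sidedness} is stated precisely for three points $p,q,r$ all undergoing a single unidirectional morph, which is exactly our situation, so the sidedness of $v$ relative to each oriented edge-line is controlled by its values at the two endpoints $\Gamma$ and $\Gamma'$; care is only needed to confirm that the \emph{interior} side of $f$ corresponds to a consistent side of each oriented line throughout (which follows since the cyclic orientation of the boundary of $f$ is preserved, $\Gamma$ and $\Gamma'$ being topologically equivalent and planar at all times by the previous step).
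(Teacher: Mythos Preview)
Your proof is correct and follows the same approach as the paper for constructing $\Gamma'$ (via Theorem~\ref{th:hongANDnaga-biconnected} applied to the $y$-assignment of $\Gamma$) and for showing that $\langle\Gamma,\Gamma'\rangle$ is a unidirectional upward planar morph (via Lemmas~\ref{le:st-y-assignment-equivalent} and~\ref{le:shift}). The opening phrase about a $y$-assignment that ``interpolates linearly in time'' is misleading---the $y$-coordinates are in fact held fixed---but you immediately correct this.

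For the kernel-preservation claim you take a slightly different route. You invoke Lemma~\ref{le:sidedness} directly on the triple $(v,q,r)$ for each boundary edge $qr$ of $f$, using the half-plane description of the kernel. The paper instead augments $G$ to a graph $G_*$ by adding (suitably oriented) edges from $v$ to every vertex of $f$ not already adjacent to $v$; since $v$ lies in the kernel at both ends, the two augmented drawings $\Gamma_*$ and $\Gamma'_*$ are again left-to-right equivalent upward planar drawings of $G_*$, so Lemma~\ref{le:shift} gives planarity of the augmented morph, and hence $v$ sees every vertex of $f$ throughout. Both arguments boil down to Lemma~\ref{le:sidedness}; the paper's augmentation trick packages your edge-by-edge check into one appeal to left-to-right equivalence and avoids the separate consistency check on which side of each moving edge-line is the ``interior'' one.
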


\begin{proof}
Denote by $y_G$ the $y$-assignment for the vertices of $G$ induced by $\Gamma$.  By \autoref{th:hongANDnaga-biconnected}, there exists a strictly-convex upward planar drawing $\Gamma'$ of $G$ satisfying $y_G$. 
Thus, by~\autoref{le:st-y-assignment-equivalent} and since $G$ is a plane $st$-graph, we have that $\Gamma$ and $\Gamma'$ are left-to-right-equivalent drawings. By~\autoref{le:shift}, the linear morph $\mathcal M$ from $\Gamma$ to $\Gamma'$ is unidirectional and planar. Since $\Gamma$ and $\Gamma'$ are upward and left-to-right equivalent, it follows that $\mathcal M$ is an upward planar morph.

Consider now a vertex $v$ incident to $f$ that is in the kernel of $P$ in~$\Gamma$. Since the polygon representing the boundary of $f$ in $\Gamma'$ is strictly convex, $v$ is also in the kernel of such a polygon. Augment $G$ to a graph $G_*$ by introducing (suitably oriented) edges connecting $v$ to the vertices incident to $f$ that are not already adjacent to~$v$. Draw these edges in $\Gamma$ and $\Gamma'$ as straight-line segments, obtaining two drawings $\Gamma_*$ and $\Gamma_*'$ of $G_*$. Since $v$ is in the kernel of the polygon representing the boundary of $f$ both in $\Gamma$ and in $\Gamma'$, and since $\Gamma$ and $\Gamma'$ are upward planar and left-to-right equivalent, we have that $\Gamma_*$ and $\Gamma_*'$ are left-to-right equivalent upward planar drawings of $G_*$. By the same arguments used for $\mathcal M$, we have that the linear morph $\mathcal M_* = \langle \Gamma_*, \Gamma_*'\rangle$ is planar. Hence, $v$ is in the kernel of the polygon representing the boundary of $f$ throughout $\mathcal M$.
\end{proof}

Given two upward planar straight-line drawings $\Gamma_0$ and $\Gamma_1$ of a maximal plane $st$-graph $G$, our strategy for constructing an upward planar morph from $\Gamma_0$ to $\Gamma_1$ consists of the following steps:
\begin{enumerate}[(1)]
	\item we find a simple vertex $v$ of $G$ of degree at most $5$; 
	\item we remove $v$ and its incident edges from $G$, $\Gamma_0$, and $\Gamma_1$, obtaining upward planar drawings $\Gamma^-_0$ and $\Gamma^-_1$ of an upward plane graph $G^-$; 
	\item we triangulate $G^-$, $\Gamma^-_0$, and $\Gamma^-_1$ by inserting edges incident to a former neighbor $u$ of $v$, obtaining upward planar drawings $\Gamma'_0$ and $\Gamma'_1$ of a maximal plane $st$-graph $G'$; 
	\item we apply induction in order to construct an upward planar morph $\mathcal M'$ from $\Gamma'_0$ to $\Gamma'_1$; and 
	\item we remove from $\mathcal M'$ the edges incident to $u$ that are not in $G$ and insert $v$ and its incident edges in $\mathcal M'$, thus obtaining an upward planar morph from $\Gamma_0$ to $\Gamma_1$.
\end{enumerate} 
	
	 In order for this strategy to work, we need $u$ to satisfy certain properties, which are expressed in the upcoming definition of {\em distinguished neighbor}; further, we need to perform one initial (and one final) upward planar morph so to convexify the polygon representing what will be called a {\em characteristic cycle}. 

Let $v$ be a simple vertex with degree at most $5$ in a maximal plane $st$-graph $G$. Let $G(v)$ be the subgraph of $G$ induced by $v$ and its neighbors. 

A predecessor $u$ of $v$ in $G$ is a \emph{distinguished predecessor} if it satisfies the following properties: (a) for each predecessor $w$ of $v$, there is a directed path in $G(v)$ from $w$ to $v$ through~$u$; (b) $u$ is the only predecessor of $v$ if its degree is $3$; and (c) $v$ has at most two predecessors if its degree is $4$ or $5$. 

A successor $u$ of $v$ in $G$ is a \emph{distinguished successor} if it satisfies the following properties: (a) for each successor $w$ of $v$, there is a directed path in $G(v)$ from $v$ to $w$ through $u$; (b) $u$ is the only successor of $v$ if its degree is $3$; and (c) $v$ has at most two successors if its degree is $4$ or $5$. 

A neighbor of $v$ is a \emph{distinguished neighbor} if it is either a distinguished predecessor or successor of $v$. Examples of distinguished neighbors are in \autoref{fi:distinguished-neighbors}. We are going to exploit the following.

\begin{figure}[t]
	\centering
	\subfloat[]
	{\includegraphics[page=1,height=.2\textwidth]{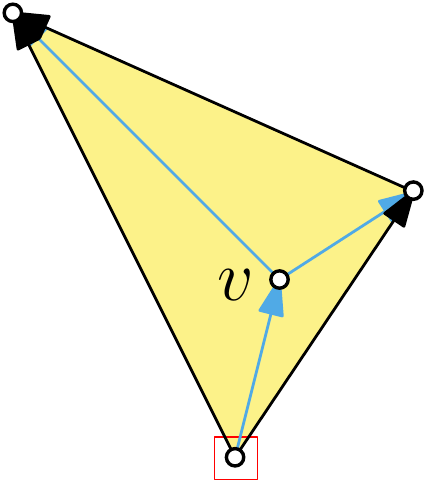}\label{fi:characteristic-a}}
	\hfil
	\subfloat[]
	{\includegraphics[page=3,height=.2\textwidth]{figures/characteristic}\label{fi:characteristic-b}}
	\hfil
	\subfloat[]
	{\includegraphics[page=4,height=.2\textwidth]{figures/characteristic}\label{fi:characteristic-c}}
	\hfil
	\subfloat[]
	{\includegraphics[page=5,height=.2\textwidth]{figures/characteristic}\label{fi:characteristic-d}}
	\hfil
	\subfloat[]
	{\includegraphics[page=2,height=.2\textwidth]{figures/characteristic}\label{fi:characteristic-e}}
	\hfil
	\subfloat[]
	{\includegraphics[page=6,height=.18\textwidth]{figures/characteristic}\label{fi:characteristic-f}}
	\hfil
	\subfloat[]
	{\includegraphics[page=7,height=.2\textwidth]{figures/characteristic}\label{fi:characteristic-g}}
	\hfil
	\subfloat[]
	{\includegraphics[page=8,height=.2\textwidth]{figures/characteristic}\label{fi:characteristic-h}}
	\caption{Distinguished predecessors (enclosed by red squares), distinguished successors (enclosed by red circles), and characteristic cycles (filled yellow). Note that, in (e), (g), and (h), the vertex $s_2$ is not a distinguished successor of $v$; indeed, although for every successor $w$ of $v$ there is a directed path in $G(v)$ from $v$ to $w$ through $s_2$, we have that $v$ has more than two successors.}
	\label{fi:distinguished-neighbors}
\end{figure}

\newcommand{\lemmadistinguishedstatement}{The vertex $v$ has at most one distinguished predecessor, at most one distinguished successor, and at least one distinguished neighbor.}
\begin{lemma}\label{le:distinguished}
	\lemmadistinguishedstatement
\end{lemma}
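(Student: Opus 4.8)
The plan is to fix, for the simple internal vertex $v$, its predecessors $p_1,\dots,p_a$ and its successors $s_1,\dots,s_b$ in the left-to-right orders $\mathcal P(v)$ and $\mathcal S(v)$, and to exploit two structural facts. First, since $G$ is upward planar, the edges incident to $v$ are bimodal, so the incoming edges (from $p_1,\dots,p_a$) and the outgoing edges (to $s_1,\dots,s_b$) each form a contiguous arc in the rotation at $v$; moreover $a,b\ge 1$ because $v$ is internal, and $a+b=\deg(v)\le 5$. Second, since $v$ is simple and $G$ is a maximal plane $st$-graph, the neighbors of $v$ induce a chordless cycle on which consecutive neighbors are adjacent; in particular any two consecutive predecessors $p_i,p_{i+1}$ are joined by an edge of $G(v)$, and likewise for consecutive successors.

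For uniqueness I would argue purely from acyclicity. Suppose $u$ and $u'\neq u$ were both distinguished predecessors. Applying property (a) of $u$ to the predecessor $w=u'$ yields a directed path from $u'$ to $v$ passing through $u$, hence a directed path from $u'$ to $u$ in $G(v)$; symmetrically, property (a) of $u'$ yields a directed path from $u$ to $u'$. Concatenating these gives a directed cycle through $u$ and $u'$, contradicting the acyclicity of $G$. Thus $v$ has at most one distinguished predecessor, and the symmetric argument (using the directed paths from $v$ through a distinguished successor) shows that $v$ has at most one distinguished successor.

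For existence I would distinguish cases according to $a$ and $b$. If $a=1$, the unique predecessor is a distinguished predecessor: property (a) is immediate, and conditions (b) and (c) hold because $v$ has exactly one predecessor (which is $\le 2$, and is the only one should $\deg(v)=3$). The case $b=1$ is symmetric. Otherwise $a,b\ge 2$, so $\deg(v)\ge 4$, and since $\deg(v)\le 5$ at least one of $a,b$ equals $2$; say $a=2$ (the case $b=2$ being symmetric). The two predecessors $p_1,p_2$ are adjacent, and the connecting edge is oriented from one to the other, say $p_1\to p_2$. Then $p_2\to v$ and $p_1\to p_2\to v$ witness property (a) for $u=p_2$, while condition (c) holds since $a=2\le 2$, and condition (b) is vacuous because $\deg(v)\in\{4,5\}$. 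Hence $v$ has a distinguished neighbor in every case.

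I expect the only real care to be in the existence part: one must invoke the degree conditions (b) and (c) in the right regime, since a vertex with two predecessors and degree $3$ admits no distinguished predecessor, and in that regime existence must instead be secured on the successor side (where $b=1$). The bimodality and simplicity facts cited above are standard but should be stated explicitly, as they are what guarantee that consecutive predecessors (successors) are adjacent and thus that property (a) can be certified by the single connecting edge. Notably, the thornier question of whether a directed path in $G(v)$ might detour through $v$ or through a successor never needs to be resolved: uniqueness follows from acyclicity alone, and existence only ever uses one explicitly exhibited edge.
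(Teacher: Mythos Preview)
Your proof is correct and follows essentially the same approach as the paper's: uniqueness is obtained by the very same acyclicity contradiction (two distinguished predecessors would yield directed paths $u\to u'$ and $u'\to u$ in $G(v)$), and existence is obtained by the same case split on the number of predecessors and successors, exploiting that when there are exactly two predecessors they must be adjacent in $G$ so that one dominates the other. Your write-up is in fact slightly more explicit than the paper's in invoking bimodality and the cycle structure of $c_G(v)$ to justify that two predecessors are adjacent, whereas the paper simply says ``since $G$ is maximal'' at that point.
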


\begin{proof}
Suppose, for a contradiction, that $v$ has (at least) two distinguished predecessors $u_1$ and $u_2$. Since $u_1$ is a distinguished predecessor of $v$ and $u_2$ is a predecessor of $v$, it follows that $G$ contains a directed path $u_2 \dots u_1 v$; further, since $u_2$ is a distinguished predecessor of $v$ and $u_1$ is a predecessor of $v$, it follows that $G$ contains a directed path $u_1 \dots u_2 v$. The union of these directed paths contains a directed cycle, a contradiction to the fact that $G$ is an $st$-graph. It follows that $v$ has at most one distinguished predecessor. An analogous argument proves that  $v$ has at most one distinguished successor.

Let $P$ and $S$ be the sets of predecessors and successors of $v$ in $G$, respectively. 

If the degree of $v$ is $3$, then either $|P|=1$ or $|S|=1$. In the former case the only predecessor of $v$ is a distinguished predecessor of $v$, while in the latter case the only successor of $v$ is a distinguished successor of $v$.

Assume next that the degree of $v$ is $4$ or $5$. We prove that, if $|P|\leq 2$, then $v$ has at least one distinguished predecessor. If $|P|=1$, then the only predecessor of $v$ is a distinguished predecessor of $v$. Further, if $|P|=2$, then let $s$ and $p$ be the two predecessors of $v$ in $G$. Since $G$ is maximal, it contains either the directed edge $sp$ or the directed edge $ps$. In the former case $p$ is a distinguished predecessor of $v$, while in the latter case $s$ is a distinguished predecessor of $v$. An analogous proof shows that, if $|S|\leq 2$, then $v$ has at least one distinguished successor. This completes the proof, given that $|P|\leq 2$ or $|S|\leq 2$, since the degree of $v$ is at most $5$.   
\end{proof}

We define the \emph{characteristic cycle} $C(v)$ as follows. Let $c_G(v)$ be the subgraph of $G$ induced by the neighbors of $v$. Since $v$ is simple, the underlying graph of $c_G(v)$ is a cycle. 
If $c_G(v)$ is an $st$-cycle, as in \blue{Figs.~\ref{fi:characteristic-a}}, \blue{~\ref{fi:characteristic-b}}, \blue{~\ref{fi:characteristic-c}}, and \blue{~\ref{fi:characteristic-d}}, then $C(v):=c_G(v)$; in particular, this is always the case if $v$ has degree~$3$.
Otherwise, $c_G(v)$ has two sources $s_1$ and $s_2$ and two sinks $t_1$ and $t_2$. Throughout the rest of this section, we always assume that, if $c_G(v)$ has two sources $s_1$ and $s_2$ and two sinks $t_1$ and $t_2$, then $G$ contains the edges $s_1v$ and $vs_2$; indeed, the cases in which $G$ contains the edges $s_2v$ and $vs_1$, or $t_1v$ and $vt_2$, or $t_2v$ and $vt_1$ are analogous. This assumption implies that $v$ has at least three successors, namely $s_2$, $t_1$, and $t_2$, and hence no distinguished successor. Suppose also, w.l.o.g., that $s_1, t_1, s_2$, and $t_2$ appear in this clockwise order along $c_G(v)$. 
If $v$ has degree $4$, as in \blue{Fig.~\ref{fi:characteristic-e}}, then $C(v)$ is composed of the edges $s_1v$, $vs_2$, $s_2t_2$, and $s_1t_2$. 
Otherwise, $v$ has degree $5$, as in \blue{Figs.~\ref{fi:characteristic-f}}, \blue{~\ref{fi:characteristic-g}}, and\blue{~\ref{fi:characteristic-h}}. Let $v_1$ be the distinguished predecessor of $v$. The directed path $P_1 = v_1vs_2$ splits $c_G(v)$ into two paths $P_2$ and $P_3$ with length $2$ and $3$, respectively. Then $C(v)$ is composed of $P_1$ and $P_3$. We have the following structural lemma.

\newcommand{\characteristiccycleproperties}{
The characteristic cycle $C(v)$ is an $st$-cycle which contains all the distinguished neighbors of $v$. Further, all the vertices of $c_G(v)$ not belonging to $C(v)$ are adjacent to all the distinguished neighbors of $v$.
}

\begin{lemma}\label{le:characteristic-cycle-properties}
\characteristiccycleproperties
\end{lemma}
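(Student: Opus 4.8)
The plan is to establish the three assertions by a case analysis that follows the definition of $C(v)$, relying throughout on the acyclicity of $G$, on the (w.l.o.g.) assumptions fixed before the statement---that $G$ contains the edges $s_1v$ and $vs_2$ and that $s_1,t_1,s_2,t_2$ occur in this clockwise order along $c_G(v)$---and on \autoref{le:distinguished}. \emph{If $c_G(v)$ is an $st$-cycle}, then $C(v)=c_G(v)$ and the first assertion is immediate. Since the distinguished neighbors of $v$ are neighbors of $v$, hence vertices of $c_G(v)=C(v)$, the second assertion holds, and the third holds vacuously as no vertex of $c_G(v)$ lies off $C(v)$. This settles in particular every degree-$3$ vertex.

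\emph{Otherwise} $c_G(v)$ has sources $s_1,s_2$ and sinks $t_1,t_2$, and, as observed before the statement, $v$ has successors $s_2,t_1,t_2$ and hence no distinguished successor; by \autoref{le:distinguished} its unique distinguished neighbor is therefore a distinguished predecessor. If $v$ has degree $4$, I would read the orientations of $c_G(v)$ off the clockwise order (namely $s_1t_1$, $s_2t_1$, $s_2t_2$, $s_1t_2$) and observe that $s_1$ is the only predecessor of $v$, hence the distinguished neighbor. Then $C(v)$, formed by $s_1v$, $vs_2$, $s_2t_2$, $s_1t_2$, is an $st$-cycle with source $s_1$ and sink $t_2$ (its two directed paths being $s_1vs_2t_2$ and $s_1t_2$) that contains $s_1$; and the only vertex of $c_G(v)$ outside $C(v)$ is $t_1$, which is adjacent to $s_1$ through the edge $s_1t_1$. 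All three assertions follow.

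\emph{The degree-$5$ case} is where the real work lies, and I expect it to be the main obstacle. Besides $s_1,t_1,s_2,t_2$ the cycle $c_G(v)$ has a fifth vertex, the distinguished predecessor $v_1$; as $c_G(v)$ has exactly two sources and two sinks, all four named, $v_1$ is its unique internal vertex, with one incoming cycle-edge from a source and one outgoing cycle-edge to a sink. The source cannot be $s_2$, for the edge $s_2v_1$ together with the directed path $v_1vs_2$ would close a directed cycle, contradicting acyclicity; hence $s_1v_1\in E(G)$, and $v_1$ sends its out-edge to some sink. Since two sources are never adjacent on a cycle, $s_1$ and $s_2$ are non-adjacent, so the $v_1$--$s_2$ arc through $s_1$ has length $3$, namely $P_3=v_1\,s_1\,m'\,s_2$ for the remaining sink $m'$, while the other arc $P_2=v_1\,m\,s_2$ has length $2$. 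Reading off the orientations---in particular $s_1\to v_1$, $s_1\to m'$, and $s_2\to m'$---the cycle $C(v)=P_1\cup P_3$ has $s_1$ as its only source and $m'$ as its only sink, so it decomposes into the two directed paths $s_1\,v_1\,v\,s_2\,m'$ and $s_1\,m'$ and is an $st$-cycle, giving the first assertion. The second holds because $v_1\in C(v)$ is the unique distinguished neighbor. For the third, the only vertex of $c_G(v)$ off $C(v)$ is the interior vertex of the length-$2$ arc $P_2$, which is adjacent to the endpoint $v_1$ of $P_2$; since $v_1$ is the unique distinguished neighbor, that vertex is adjacent to all distinguished neighbors.

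The entire degree-$5$ analysis hinges on pinning down, from acyclicity and the clockwise order, the two cycle-edges incident to $v_1$ and, from these, the source/sink structure of $C(v)$. Once the orientations around $v_1$ are fixed the three assertions follow uniformly, and the argument is robust to the precise placement of $v_1$ on the cycle.
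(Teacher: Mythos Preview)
Your handling of the $st$-cycle case and the degree-$4$ case is correct and essentially identical to the paper's. The gap is in the degree-$5$ case, where you assert without justification that ``the cycle $c_G(v)$ has a fifth vertex, the distinguished predecessor $v_1$.'' This is not always true. The fifth vertex of $c_G(v)$ is indeed its unique internal vertex, but that vertex need not be a predecessor of $v$ at all: if $v$ has only one predecessor (namely $s_1$) and four successors, then $v_1=s_1$ is a \emph{source} of $c_G(v)$, not the internal vertex, and the fifth (internal) vertex is some successor $x$ of $v$. In that situation your acyclicity argument ``the source feeding the internal vertex cannot be $s_2$'' actually fails---when the internal vertex $x$ is a successor, the edge $s_2\to x$ creates no directed cycle, and in fact configurations with $s_2\to x$ do occur. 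Consequently your identification of $P_3$ as $v_1\,s_1\,m'\,s_2$ and your computation of the source/sink of $C(v)$ do not go through in this sub-case.

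The paper sidesteps the case split entirely with a short parity/counting argument that works uniformly: in $C(v)=P_1\cup P_3$, the vertex $v$ is internal (it lies on the directed path $v_1vs_2$), and $s_2$ is internal (it has incoming edge $vs_2$ from $P_1$ and an outgoing edge along $P_3$, since $s_2$ is a source of $c_G(v)$). A $5$-cycle in which two vertices are already neither sources nor sinks can have at most three switches, hence exactly one source and one sink; so $C(v)$ is an $st$-cycle regardless of whether $v_1$ is $s_1$ or the fifth vertex. Your arguments for the second and third assertions (that $v_1\in C(v)$ and that the interior vertex of $P_2$ is adjacent to $v_1$) are fine and do not depend on the mistaken identification.
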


\begin{proof}
If $c_G(v)$ is an $st$-cycle, then by construction $C(v)$ coincides with $c_G(v)$, hence $C(v)$ is an $st$-cycle which contains all the neighbors (and in particular all the distinguished neighbors) of $v$, and there are no vertices of $c_G(v)$ not belonging to $C(v)$. In the following we hence assume that $c_G(v)$ is not an $st$-cycle. 

If $v$ has degree $4$, then by construction $C(v)$ consists of two directed paths, namely $s_1vs_2t_2$ and $s_1t_2$, hence it is an $st$-cycle which contains the only distinguished neighbor (namely $s_1$) of $v$. The only vertex of $c_G(v)$ not belonging to $C(v)$, namely $t_1$, is adjacent to $s_1$.

If $v$ has degree $5$, then observe that $v$ is neither a source nor a sink of $C(v)$, as $C(v)$ contains the directed edges $v_1v$ and $vs_2$; further, $s_2$ is neither a source nor a sink of $C(v)$, as $C(v)$ contains the directed edge $vs_2$ and the directed edge of $P_3$ outgoing $s_2$. Since the underlying graph of $C(v)$ is a cycle with $5$ vertices, it follows that $C(v)$ has one source and one sink, hence it is an $st$-cycle. Further, by construction $C(v)$ contains $v_1$, hence it contains all the distinguished neighbors of $v$. Finally, the only vertex of $c_G(v)$ not belonging to $C(v)$ is the internal vertex of $P_2$, which is adjacent to $v_1$. 
\end{proof}

Characteristic cycles are used in order to prove the following.

\newcommand{\distinguishedkernel}{
Let $\Gamma$ be any upward planar drawing of $G$. There is a unidirectional upward planar morph $\langle \Gamma, \Gamma' \rangle$, where in $\Gamma'$ the distinguished neighbors of $v$ are in the kernel of the polygon representing~$c_G(v)$.
}

\begin{lemma}\label{le:distinguished-kernel}
\distinguishedkernel
\end{lemma}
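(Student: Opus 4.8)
The plan is to convexify the characteristic cycle $C(v)$ and then argue that this places all distinguished neighbors in the kernel of the polygon representing $c_G(v)$. By \autoref{le:characteristic-cycle-properties}, $C(v)$ is an $st$-cycle, so it bounds an $st$-face in the plane $st$-graph $G(v)$ (or more precisely in a suitable internally $3$-connected plane $st$-graph containing $C(v)$ as a face). First I would augment the setting so that \autoref{le:convexify-around-vertex} is applicable: I want an internally $3$-connected plane $st$-graph $H$ in which $C(v)$ bounds an internal $st$-face $f$, and for which the given drawing $\Gamma$ restricts to an upward planar drawing. Since $G$ is a maximal plane $st$-graph, the graph $c_G(v)$ together with $v$ triangulates the interior of the cycle; the vertices of $c_G(v)$ not on $C(v)$ lie inside the region bounded by $C(v)$, so deleting them (and the edges incident only to them inside $C(v)$) from $G$ produces a plane $st$-graph in which $C(v)$ is a face. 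I would take $H$ to be this graph, restricted to what is needed, and verify it is internally $3$-connected.

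Next I would apply \autoref{le:convexify-around-vertex} to $H$, the face $f$ bounded by $C(v)$, and the polygon $P$ representing $f$ in $\Gamma[H]$. This yields an upward planar drawing $\Gamma'$ of $H$ in which the polygon bounding $f$ is strictly convex, together with a unidirectional upward planar morph $\langle \Gamma[H], \Gamma'\rangle$. The key point is that in a strictly convex polygon \emph{every} vertex of the polygon lies in the kernel, so after convexification all vertices of $C(v)$ — and in particular all distinguished neighbors, which by \autoref{le:characteristic-cycle-properties} lie on $C(v)$ — are in the kernel of $C(v)$. I would then extend $\Gamma'$ back to a drawing of all of $G$ by reinserting the vertices of $c_G(v)$ not on $C(v)$ together with $v$; here I would use that these interior vertices can be redrawn as straight-line segments inside the now-convex region, and that the morph on the rest extends to a unidirectional upward planar morph on $G$ by the left-to-right equivalence machinery of \autoref{le:st-y-assignment-equivalent} and \autoref{le:shift}, keeping the $y$-assignment fixed.

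The final step is to upgrade ``in the kernel of $C(v)$'' to ``in the kernel of the polygon representing $c_G(v)$,'' which is what the statement requires. By \autoref{le:characteristic-cycle-properties}, every vertex of $c_G(v)$ not on $C(v)$ is adjacent to every distinguished neighbor of $v$; combined with the fact that $c_G(v)$ is a cycle enclosing the distinguished neighbors' view, I would argue that a distinguished neighbor sees the entire boundary polygon of $c_G(v)$. Concretely, since the distinguished neighbor is in the kernel of the convex cycle $C(v)$ and the interior vertices of $c_G(v)$ are triangulated toward it (being adjacent to it), the radial visibility from the distinguished neighbor to the whole boundary of $c_G(v)$ is unobstructed, placing it in the kernel of the $c_G(v)$-polygon as well. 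I expect the main obstacle to lie precisely in this last step: establishing, across the whole morph and not just at the endpoints, that adjacency to the interior vertices plus membership in the kernel of $C(v)$ forces kernel membership in $c_G(v)$. I would handle this by invoking the ``throughout $\mathcal M$'' clause of \autoref{le:convexify-around-vertex}, which guarantees kernel membership is preserved along the unidirectional morph once the endpoints are arranged correctly, and then checking that the interior edges to the distinguished neighbor remain straight, noncrossing, and correctly oriented throughout by another application of \autoref{le:sidedness}.
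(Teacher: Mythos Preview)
Your overall plan matches the paper's: remove the interior of $C(v)$, apply \autoref{le:convexify-around-vertex} to make $C(v)$ strictly convex, reinsert what was removed, and then argue that convexity of $C(v)$ together with \autoref{le:characteristic-cycle-properties} puts every distinguished neighbor in the kernel of $c_G(v)$. However, you have the topology of $C(v)$ backwards. When $c_G(v)$ is not an $st$-cycle, the cycle $C(v)$ passes \emph{through} $v$ (it contains the path $v_1vs_2$ or $s_1vs_2$), and the vertices of $c_G(v)$ not on $C(v)$ lie \emph{outside} $C(v)$, not inside it. For instance, in the degree-$4$ case of \autoref{fi:distinguished-neighbors}(e), $C(v)=(s_1,v,s_2,t_2)$ bounds only the two triangles $vs_2t_2$ and $vt_2s_1$; the missing vertex $t_1$ sits on the other side of the path $s_1vs_2$. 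So deleting the vertices of $c_G(v)\setminus C(v)$ does \emph{not} turn $C(v)$ into a face; the correct move (as in the paper) is to delete whatever lies in the interior of $C(v)$, which is only $v$ when $C(v)=c_G(v)$ and only some edges incident to $v$ otherwise. Once you fix this, you also need to actually argue that the resulting graph $G^\circ$ is (internally) $3$-connected; the paper spends a paragraph on this, using that $v$ is simple and hence not on any separating triangle of the maximal plane graph $G$.

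Your final worry is misplaced: the lemma only asks that the distinguished neighbors be in the kernel of $c_G(v)$ in the \emph{final} drawing $\Gamma'$, not throughout the morph, so there is no need to invoke the ``throughout $\mathcal M$'' clause of \autoref{le:convexify-around-vertex} or \autoref{le:sidedness} for that purpose. The endpoint argument is straightforward once $C(v)$ is convex: a distinguished neighbor $u$ lies on $C(v)$ and hence sees every vertex of $C(v)$ inside the convex polygon; the (at most one) vertex of $c_G(v)$ off $C(v)$ is adjacent to $u$ by \autoref{le:characteristic-cycle-properties}, so $u$ sees it along an existing edge; seeing all vertices of the simple polygon $c_G(v)$ from a boundary vertex implies $u$ is in its kernel. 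The genuine care is instead in reinserting $v$ (in the case $C(v)=c_G(v)$) at its original $y$-coordinate inside the now-convex polygon so that $\Gamma$ and $\Gamma'$ share the same $y$-assignment, which is what lets \autoref{le:st-y-assignment-equivalent} and \autoref{le:shift} certify that the single linear step $\langle\Gamma,\Gamma'\rangle$ is unidirectional and upward planar.
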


\begin{proof}
By Lemma~\ref{le:characteristic-cycle-properties} the distinguished neighbors of $v$ belong to $C(v)$. If the polygon $P$ representing $C(v)$ in $\Gamma$ is convex, then each distinguished neighbor of $v$ {\em sees} the other vertices of $C(v)$, meaning that the open straight-line segment connecting any distinguished neighbor of $v$ with any other vertex of $C(v)$ lies inside $P$, and hence inside the polygon representing $c_G(v)$ in $\Gamma$. Again by Lemma~\ref{le:characteristic-cycle-properties} the vertices of $c_G(v)$ that are not in $C(v)$ are adjacent to the distinguished neighbors of $v$, which hence see every vertex of $c_G(v)$. It follows that the distinguished neighbors of $v$ are in the kernel of the polygon representing $c_G(v)$ in $\Gamma$, and we can just define $\Gamma':=\Gamma$; note that no morph is actually needed in order to obtain the desired drawing $\Gamma'$ from $\Gamma$.
	
If $P$ is not convex, then we show how a unidirectional upward planar morph can be employed in order to transform $\Gamma$ into an upward planar drawing $\Gamma'$ in which the polygon representing $C(v)$ is convex, thus bringing us back to the previous case. Let $G^\circ$ be the subgraph of $G$ obtained by removing all the vertices and edges in the interior of $C(v)$ and let $\Gamma^\circ$ be $\Gamma[G^\circ]$. Observe that only $v$ and its incident edges might be removed from $G$ in order to obtain $G^\circ$.
	
We prove that $G^\circ$ is $3$-connected. Suppose, for a contradiction, that $G^\circ$ contains a $2$-cut $\{a,b\}$. If $v$ was removed from $G$ in order to obtain $G^\circ$, then $\{a,b,v\}$ is a $3$-cut of $G$. Since $G$ is maximal, any $3$-cut induces a separating triangle, i.e., a $3$-cycle with vertices both on the inside and on the outside. However, since $v$ is simple, it is not part of any separating triangle, a contradiction. Assume next that $v$ was not removed from $G$ in order to obtain $G^\circ$. If $v \in \{a,b\}$, then $\{a,b\}$ is also a $2$-cut of $G$, contradicting the fact that $G$ is maximal  (and hence $3$-connected). Finally, if $v \notin \{a,b\}$, then $\{a,b,v\}$ is a $3$-cut of $G$, and a contradiction can be derived as in the case in which $v$ was removed from $G$. 

Since $G^\circ$ is $3$-connected and $C(v)$ is an $st$-cycle (by Lemma~\ref{le:characteristic-cycle-properties}), we can  apply Lemma~\ref{le:convexify-around-vertex} to construct an upward planar drawing $\Gamma^\diamondsuit$ of $G^\circ$ such that $C(v)$ is strictly convex in $\Gamma^\diamondsuit$ and $\langle \Gamma, \Gamma^\diamondsuit \rangle$ is a unidirectional upward planar morph. 

We obtain our desired upward planar drawing $\Gamma'$ of $G$ from $\Gamma^\diamondsuit$ as follows. 

If $G^\circ$ contains $v$, then we simply augment $\Gamma^\diamondsuit$ by drawing the edges that are in $G$ but not in $G^\circ$ as straight-line segments, thus obtaining $\Gamma'$. The convexity of $C(v)$ in $\Gamma^\diamondsuit$ implies that no crossings are introduced because of this augmentation. Further, as in the proof of Lemma~\ref{le:convexify-around-vertex}, we have that $\Gamma$ and $\Gamma'$ have the same $y$-assignment, hence by Lemma~\ref{le:st-y-assignment-equivalent} they are left-to-right equivalent, and thus by Lemma~\ref{le:shift} the linear morph $\langle \Gamma, \Gamma' \rangle$ is unidirectional and upward planar.

If $G^\circ$ does not contain $v$, then we need to determine a placement for $v$ in $\Gamma^\diamondsuit$ in order to obtain $\Gamma'$. We insert $v$ in the interior of the convex polygon representing $C(v)$ in $\Gamma^\diamondsuit$, so that its $y$-coordinate is the same as in $\Gamma$. We draw the edges incident to $v$ as straight-line segments. This ensures that $\Gamma$ and $\Gamma'$ have the same $y$-assignment and hence, as in the previous case, that the linear morph between them is unidirectional and upward planar.
\end{proof}

The following concludes our discussion on maximal plane $st$-graph.

\newcommand{\sttriangulatedgraphs}{Let $\Gamma_0$ and $\Gamma_1$ be two upward planar drawings of an $n$-vertex maximal plane $st$-graph $G$. There is an $O(n)$-step upward planar \mbox{morph from $\Gamma_0$ to $\Gamma_1$.}}

\begin{theorem}\label{th:st-triangulated-graphs}
\sttriangulatedgraphs
\end{theorem}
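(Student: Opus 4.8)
The plan is to prove the statement by induction on $n$, implementing the five-step strategy sketched above so that each level of the recursion deletes exactly one vertex while adding only $O(1)$ morphing steps; the recurrence $T(n)=T(n-1)+O(1)$ then yields the claimed $O(n)$ bound. The recursion terminates at a maximal plane $st$-graph of constant size (a single triangular $st$-face), for which a constant-step upward planar morph between $\Gamma_0$ and $\Gamma_1$ is immediate; at every larger stage \autoref{le:internal-vertex} supplies a simple internal vertex $v$ of degree at most $5$, and \autoref{le:distinguished} supplies a distinguished neighbor $u$ of $v$, which I may assume to be a distinguished predecessor (the successor case being symmetric).

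First I would apply \autoref{le:distinguished-kernel} twice, morphing $\Gamma_0$ into $\hat\Gamma_0$ and $\Gamma_1$ into $\hat\Gamma_1$ with one unidirectional upward planar step each, so that in both drawings $u$ lies in the kernel of the polygon representing $c_G(v)$. I would then delete $v$ and retriangulate the freed face by the fan of straight-line edges joining $u$ to the vertices of $c_G(v)$ not yet adjacent to it; by \autoref{le:characteristic-cycle-properties} these vertices all lie on $C(v)$, hence are at most two, and since $u$ is in the kernel the new edges introduce no crossing. The crucial point is that each new edge can be oriented upward consistently in both drawings: because $u$ is a distinguished predecessor, for every other neighbor $w$ of $v$ there is a directed path through $v$ or within $G(v)$ fixing the relative $y$-order of $u$ and $w$ identically in $\hat\Gamma_0$ and $\hat\Gamma_1$. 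Thus the augmented drawings $\Gamma'_0,\Gamma'_1$ are upward planar drawings of a maximal plane $st$-graph $G'$ on $n-1$ vertices (its single source and sink, lying on the outer face, being unaffected), and the inductive hypothesis supplies a $T(n-1)$-step upward planar morph $\mathcal M'$ from $\Gamma'_0$ to $\Gamma'_1$.

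The step I expect to be the main obstacle is reinserting $v$ into $\mathcal M'$ without increasing the step count by more than a constant. The key structural fact is that in $G'$ the cycle $c_G(v)$ is triangulated by a fan centered at $u$, so at every instant of $\mathcal M'$ the polygon representing $c_G(v)$ is a simple polygon tiled by triangles all incident to $u$, that is, it is star-shaped with respect to $u$ and therefore keeps $u$ in its kernel throughout. At each keyframe I would place $v$ just above $u$ at a sufficiently small offset, so that $v$ too lies in the kernel; then $v$ sees every vertex of $c_G(v)$, making its straight-line edges crossing-free, while the directed paths through $u$ place every predecessor of $v$ at $y$-coordinate below $u$ and every successor above $u$, so that $v$, sitting just above $u$, lies below all its successors and above all its predecessors. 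Because the kernel is convex and the $y$-coordinates vary affinely along each linear step, both planarity and the correct orientation of the edges incident to $v$ are preserved not only at the keyframes but throughout each step.

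Finally, the offsets of $v$ prescribed by $\hat\Gamma_0$ and $\hat\Gamma_1$ at the two ends of $\mathcal M'$ need not be as small as required above, so I would prepend and append one linear step each that moves $v$ alone, within the convex kernel and with its $y$-coordinate confined between the then-fixed extreme $y$-coordinates of its neighbors, between its prescribed position and the small-offset position; these reconciliation steps are upward planar. Deleting from the reinserted morph the fan edges absent from $G$, and composing the one-step morphs $\Gamma_0\to\hat\Gamma_0$ and $\hat\Gamma_1\to\Gamma_1$ with it, produces an upward planar morph from $\Gamma_0$ to $\Gamma_1$ of $T(n-1)+O(1)$ steps, which closes the induction and gives the $O(n)$ bound.
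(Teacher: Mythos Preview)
Your overall plan matches the paper's proof: induct on $n$, delete a simple low-degree vertex $v$, fan-triangulate from a distinguished predecessor $u$, recurse on the resulting $(n-1)$-vertex maximal plane $st$-graph, then reinsert $v$ adding only $O(1)$ steps. The upwardness part of your reinsertion argument is correct and is exactly what the paper uses: in $G'$ every successor $q$ of $v$ is joined to $u$ by an edge $uq$ and every other predecessor $p$ by an edge $pu$, so throughout $\mathcal M'$ one has $y(p)<y(u)<y(q)$, and a point sufficiently close to $u$ and slightly above it has the right $y$-relationships with all neighbors of $v$.

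The gap is in planarity. You place $v$ at $u+(0,\varepsilon)$ and argue that, since $u$ is in the kernel of the polygon $c_G(v)$ throughout $\mathcal M'$ (true, because of the fan), a small vertical offset keeps $v$ in the kernel too. But $u$ is a \emph{boundary} vertex of that polygon, and nothing forces the interior of $c_G(v)$ to extend vertically upward from $u$ during $\mathcal M'$. Concretely, take $\deg(v)=3$ with successors $a,b$, and suppose that at some instant of $\mathcal M'$ the triangle $uab$ is drawn with $u=(0,0)$, $a=(-1,1)$, $b=(-1,2)$ and edges $ua$, $ub$, $ab$ all oriented upward; this is a valid upward planar drawing and the recursive morph imposes no constraint preventing it. Then $(0,\varepsilon)$ lies \emph{outside} the triangle for every $\varepsilon>0$, so the spokes from $v$ cross $c_G(v)$. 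Your appeal to convexity of the kernel does not help: the kernel is convex at each instant, but the segment from $u$ to $u+(0,\varepsilon)$ need not lie in it, and the kernel at time $t$ is not the linear interpolation of the kernels at the endpoints.

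The paper closes this gap by placing $v$ not at a fixed vertical offset but as a fixed \emph{convex combination} of neighbors: for $\deg(v)=3$, at $\alpha a+\beta b+\gamma c$ with $\alpha+\beta+\gamma=1$ and the coefficient of $u$ close to $1$; for $\deg(v)\in\{4,5\}$, at $\lambda u+(1-\lambda)w$ with $\lambda$ close to $1$, where $uw$ is one of the fan edges of $G'$ not in $G$. Such a point automatically lies in the interior of the triangle (degree $3$) or on a diagonal strictly inside the polygon (degree $4,5$) at every instant, and because convex combinations commute with linear interpolation, each linear step of $\mathcal M'$ lifts to a single linear step of the morph of $G$. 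Everything else in your outline goes through unchanged.
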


\begin{proof}
The proof is by induction on $n$. In the base case we have $n=3$, hence $\Gamma_0$ and $\Gamma_1$ are two triangles. We show that $\Gamma_0$ and $\Gamma_1$ form an hvh-pair. Denote by $u$ and $w$ the source and the sink of $G$, respectively. Observe that the third vertex of $G$, call it $v$, is on the same side of the edge $uw$ in $\Gamma_0$ and in $\Gamma_1$, as $\Gamma_0$ and in $\Gamma_1$ have the same upward planar embedding; assume that $v$ lies to the right of $uw$, the other case is symmetric. For $i=0,1,$ let $\Gamma'_i$ be a drawing of $G$ such that the $x$-coordinate of $u$ and $w$ is $0$, the $x$-coordinate of $v$ is $1$, and $y$-coordinate of each vertex is the same as in $\Gamma_i$. It is easy to see that $\Gamma'_0$ and $\Gamma'_1$ are upward planar drawings of $G$ and that these drawings, together with $\Gamma_0$ and $\Gamma_1$, satisfy Conditions~(\ref{pr:one})--(\ref{pr:three}) of the definition of an hvh-pair. Thus, by \autoref{le:fast-morph}, there exists a $3$-step upward planar morph from $\Gamma_0$ to $\Gamma_1$.
	
Suppose next that $n > 3$. By Lemma~\ref{le:internal-vertex}, $G$ contains a simple vertex $v$ of degree at most $5$. By Lemma~\ref{le:distinguished}, $v$ has at least one distinguished neighbor, which we denote by $u$. Assume for the remainder of the proof that $u$ is a predecessor of $v$, the case in which it is a successor of $v$ being symmetric. By Lemma~\ref{le:distinguished-kernel}, there exists a unidirectional upward planar morph from $\Gamma_0$ to an upward planar drawing, which we denote again by $\Gamma_0$, in which $u$ lies in the kernel of $c_G(v)$. Analogously, by means of a unidirectional upward planar morph, we can ensure that $u$ lies in the kernel of $c_G(v)$ in $\Gamma_1$.
	
In order to obtain the desired morph from $\Gamma_0$ to $\Gamma_1$ we are going to apply induction. For this sake we define an $(n-1)$-vertex maximal plane $st$-graph $G'$, and two upward planar drawings $\Gamma_0'$ and $\Gamma_1'$ of it. The graph $G'$ is obtained from $G$ by removing $v$  and by inserting a directed edge $uq$ for each successor $q$ of $v$ that is not adjacent to $u$ in $G$. These edges are all added inside $c_G(v)$. Note that, by the definition of distinguished predecessor, either $u$ is the only predecessor of $v$, or $v$ has one predecessor $p$ different from $u$, where $G$ contains the directed edge $pu$. The drawings $\Gamma_0'$ and $\Gamma_1'$ are obtained from $\Gamma_0$ to $\Gamma_1$, respectively, by removing $v$ and its incident edges and by drawing the edges of $G'$ \mbox{not in $G$ as straight-line segments. }

For $i=0,1$, since $u$ lies in the kernel of the polygon representing $c_G(v)$ in $\Gamma_i$, we have that $\Gamma'_i$ is planar. We prove that $\Gamma_i'$ is upward. Every successor $q$ of $v$ has a $y$-coordinate larger than the one of $v$ in $\Gamma_i$; since $u$ has a $y$-coordinate smaller than the one of $v$ in $\Gamma_i$, it follows that the edge from $u$ to $q$ is monotonically increasing in the $y$-direction in $\Gamma'_i$. Since all the edges of $G'$ that are also in $G$ are drawn as in $\Gamma_i$ and since $\Gamma_i$ is an upward drawing, it follows that $\Gamma'_i$ is an upward planar drawing of $G'$. Observe that $G'$ is an $st$-graph; indeed, it suffices to note that the edges that are removed from $G$ do not result in any new source or sink in $G'$: (i) no successor $q$ of $v$ becomes a source in $G'$, as a directed edge $uq$ is inserted in $G'$ if it is not in $G$; (ii) no predecessor $p$ of $v$ different from $u$, if any, becomes a sink in $G'$, as the directed edge $pu$ belongs to $G$; and (iii) $u$ does not become a sink in $G'$ as $v$ has at least one successor in $G$. Finally, note that $G'$ is maximal, since $G$ is maximal and the edges added to $G'$ triangulate the interior of $c_G(v)$. It follows that $G'$ is a maximal plane $st$-graph.

By induction, there is an upward planar morph $\mathcal M'=\langle \Gamma'_0 = \Lambda_0, \Lambda_1, \dots, \Lambda_k=\Gamma'_1\rangle$ from $\Gamma'_0$ to $\Gamma'_1$. In the following we transform $\mathcal M'$ into an upward planar morph $\mathcal M$ between two upward planar drawings $\Delta_0$ and $\Delta_1$ of $G$. This will be done by inserting $v$ at a suitable point in the drawing of $G'$ at any time instant of the morph $\mathcal M'$ and by drawing the edges incident to $v$ as straight-line segments. We will later show that $\mathcal M$ is actually composed of $k$ linear morphs.\footnote{This insertion problem has been studied and solved in~\cite{DBLP:journals/siamcomp/AlamdariABCLBFH17} for planar morphs of undirected graphs. Here we cannot immediately reuse the results in~\cite{DBLP:journals/siamcomp/AlamdariABCLBFH17}, as we need to preserve the upwardness of the drawing throughout the morph. However, the property that every drawing of $G'$ in $\mathcal M'$ is upward significantly simplifies the problem of inserting $v$ in $\mathcal M'$ so to obtain an upward planar morph of $G$.} 

Let $\varepsilon>0$ be a sufficiently small value such that the following properties are satisfied throughout $\mathcal M'$:

\begin{enumerate}[(a)]
	\item for each successor $q$ of $v$ in $G$, it holds true that $y(q) > y(u) + \varepsilon$; 
	\item if $v$ has a predecessor $p\neq u$ in $G$, then $y(p) < y(u) - \varepsilon$; and
	\item for any segment $s$ of $c_G(v)$ not incident to $u$, the line through $s$ does not intersect the disk $\delta$ with radius $\varepsilon$ centered at $u$.
\end{enumerate}

Since $\mathcal M'$ is an upward planar morph and since $G'$ contains edges from $u$ to every successor $q$ of $v$ and from every predecessor $p$ of $v$ to $u$, it follows that such a value $\varepsilon$ exists; in particular, standard continuity arguments, like the ones used in the proof of F\'ary's Theorem~\cite{Fary}, ensure that Property~(c) is satisfied for a sufficiently small value $\varepsilon>0$.



	
We distinguish the cases in which $v$ has degree $3$ or greater than $3$ in $G$.
	
If $v$ has degree $3$ in $G$, then let $a$, $b$, and $c$ be the neighbors of $v$ in $G$, where $a=u$. We choose three values $\alpha, \beta$, and $\gamma$, as discussed below, and then place $v$ at the point $\alpha \cdot a +   \beta  \cdot b + \gamma  \cdot c$ at any time instant of $\mathcal M'$ ($a$, $b$, and $c$ here represent the points at which the corresponding vertices are placed at any time instant of $\mathcal M'$). We choose $\alpha, \beta$, and $\gamma$ as any positive values such that $\alpha + \beta + \gamma = 1$ and such that the point $\alpha \cdot a +   \beta  \cdot b + \gamma  \cdot c$ lies in $\delta$ throughout $\mathcal M'$. Note that $v$ lies inside the triangle $c_G(v)$ for any positive values of $\alpha, \beta$, and $\gamma$ such that $\alpha + \beta + \gamma = 1$ (indeed, the position of $v$ is a convex combination of the ones of $a$, $b$, and $c$); further, choosing $\alpha$ sufficiently close to $1$ ensures that $v$ is at distance at most $\varepsilon$ from $u$, and hence lies inside $\delta$, throughout $\mathcal M'$.
	
Suppose now that $v$ has degree $4$ or $5$ in $G$. Since $u$ is a distinguished predecessor of $v$, we have that $v$ has at most two predecessors in $G$, one of which is $u$. If $v$ has no predecessor other than $u$, then $v$ has at least three successors in $G$; let $w$ be a successor of $v$ not adjacent to $u$ in $G$. If $v$ has a predecessor $p$ different from $u$, then $v$ has at least two successors in $G$; let $w$ be the one adjacent to $p$ in $G$. Note that, in both cases, the directed edge $uw$ belongs to $G'$ but not to $G$ and connects $u$ with a successor $w$ of $v$. We compute a value $\lambda$, as discussed below, and then place $v$ at the point $\lambda \cdot u + (1-\lambda) \cdot w$ at any time instant of $\mathcal M'$ ($u$ and $w$ here represent the points at which the corresponding vertices are placed at any time instant of $\mathcal M'$). We choose $\lambda$ as any positive value smaller than $1$ such that the point $\lambda \cdot u + (1-\lambda) \cdot w$ lies in $\delta$ throughout $\mathcal M'$. Note that $v$ is on the straight-line segment representing the edge $uw$ for any positive value of $\lambda$ smaller than $1$ (indeed, the position of $v$ is a convex combination of the ones of $u$ and $w$); further, choosing $\lambda$ sufficiently close to $1$ ensures that $v$ is at distance at most $\varepsilon$ from $u$, and hence lies inside $\delta$, throughout $\mathcal M'$.

In both cases, the choice of $\varepsilon$ ensures that at any time instant of $\mathcal M$ the drawing of $G$ is upward planar. In particular, Properties~(a) and~(b), together with the fact that every drawing of $G'$ in $\mathcal M'$ is upward, directly ensure that every drawing of $G$ in $\mathcal M$ is upward. Further, Property~(c), together with the fact that every drawing of $G'$ in $\mathcal M'$ is planar, ensures that every drawing of $G$ in $\mathcal M$ is planar. In particular, every point of $\delta$ sees every point of any straight-line segment $s$ of $c_G(v)$ not incident to $u$ in the interior of the polygon representing $c_G(v)$; hence the directed edges from $v$ to its successors cause no crossings throughout $\mathcal M$. Further, if $v$ has a predecessor $p$ different from $u$, the fact that $v$ lies on the straight-line segment connecting $u$ with the neighbor of $p$ in $G$ ensures that $p$ sees $v$ in the interior of the polygon representing $c_G(v)$ throughout $\mathcal M$.

We now prove that $\mathcal M$ consists of $k$ morphing steps. Assume that the degree of $v$ is $3$, the discussion for the case in which the degree of $v$ is $4$ or $5$ being analogous and simpler. Denote by $\Delta_i$ the drawing of $G$ obtained from $\Lambda_i$ by placing $v$ at the point $\alpha \cdot a_i +   \beta  \cdot b_i + \gamma  \cdot c_i$, as discussed above, where by $a_i$, $b_i$, and $c_i$ denote the positions of the vertices $a$, $b$, and $c$ in $\Lambda_i$, respectively. Hence, at any time $t\in [0;1]$ of the linear morph $\langle \Delta_i,\Delta_{i+1}\rangle$, the position of $v$ is 

\begin{eqnarray*}
&& (1-t)\cdot(\alpha \cdot a_i +   \beta  \cdot b_i + \gamma  \cdot c_i) + t\cdot(\alpha \cdot a_{i+1} +   \beta  \cdot b_{i+1} + \gamma  \cdot c_{i+1})\\
&=&  \alpha((1-t)\cdot a_i + t \cdot a_{i+1})+ \beta((1-t)\cdot b_i + t \cdot b_{i+1})+ \gamma((1-t)\cdot c_i + t \cdot c_{i+1}).
\end{eqnarray*}

Hence, the position of $v$ at any time instant of the linear morph $\langle \Delta_i,\Delta_{i+1}\rangle$ is given by the convex combination with coefficients $\alpha$, $\beta$, and $\gamma$ of the positions of $a$, $b$, and $c$. It follows that the upward planar morph $\mathcal M$ defined above coincides with the $k$-step morph $\langle \Delta_0, \Delta_1, \dots, \Delta_k\rangle$. 

Finally, denote by $f(n)$ the number of morphing steps of the described algorithm. We have $f(3)=3$ and $f(n)=4+f(n-1)$, if $n>3$. Indeed, in the inductive case the upward planar morph from $\Gamma_0$ to $\Gamma_1$ consists of:

\begin{itemize}
	\item a first morphing step from the given drawing $\Gamma_0$ to the drawing $\Gamma_0$ in which $u$ lies in the kernel of $c_G(v)$;
	\item a second morphing step $\langle \Gamma_0,\Delta_0\rangle$, where in $\Gamma_0$ the vertex $u$ lies in the kernel of $c_G(v)$ (note that only $v$ moves during this morphing step);
	\item the morph $\mathcal M=\langle \Delta_0, \Delta_1, \dots, \Delta_k\rangle$, whose number $k$ of steps is the same as in $\mathcal M'$, which is the inductively constructed morph of the $(n-1)$-vertex maximal plane $st$-graph $G'$; hence $k=f(n-1)$;
	\item a second to last morphing step $\langle \Delta_k,\Gamma_1\rangle$, where in $\Gamma_1$ the vertex $u$ lies in the kernel of $c_G(v)$ (note that only $v$ moves during this morphing step); and 
	\item a final morphing step from the drawing $\Gamma_1$ in which $u$ lies in the kernel of $c_G(v)$ to  the given drawing $\Gamma_1$. 
\end{itemize}

The recurrence equation for $f(n)$ solves to $f(n)=4n-9$. This concludes the proof of the theorem.
\end{proof}

\newcommand{\stmain}{Let $\Gamma_0$ and $\Gamma_1$ be two upward planar drawings of an $n$-vertex plane $st$-graph. There exists an $O(n)$-step upward planar morph from $\Gamma_0$ to $\Gamma_1$.}

We finally get the following.

\begin{corollary}\label{co:st-main}
	\stmain
\end{corollary}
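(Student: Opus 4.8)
The plan is to observe that this corollary is an immediate consequence of the two principal results already established for plane $st$-graphs, so essentially no new argument is required; the role of the corollary is merely to discharge the hypothesis of \autoref{th:st-graphs-maximal} using \autoref{th:st-triangulated-graphs}. Specifically, \autoref{th:st-triangulated-graphs} furnishes, for every $r$, a procedure that constructs an $O(r)$-step upward planar morph between any two upward planar drawings of an $r$-vertex \emph{maximal} plane $st$-graph. I would therefore take exactly this procedure to serve as the hypothesised algorithm $\mathcal A$ in \autoref{th:st-graphs-maximal}, with step-count function $f(r) = 4r - 9 \in O(r)$.

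Concretely, first I would fix $\mathcal A$ and $f$ as above. Then, given the two upward planar drawings $\Gamma_0$ and $\Gamma_1$ of the (not necessarily maximal) $n$-vertex plane $st$-graph $G$, I would apply \autoref{th:st-graphs-maximal} directly. That lemma augments $G$ to a maximal plane $st$-graph $G^*$ on $O(n)$ vertices, realises $\Gamma_0$ and $\Gamma_1$ as restrictions of upward planar drawings $\Gamma_0^*$ and $\Gamma_1^*$ of $G^*$ sharing the corresponding $y$-assignments, morphs $\Gamma_0^*$ into $\Gamma_1^*$ by means of $\mathcal A$, restricts that morph to $G$, and finally prepends and appends two unidirectional steps (the morphs $\mathcal M_0$ and $\mathcal M_1$ obtained from \autoref{le:st-y-assignment-equivalent} and \autoref{le:shift}).

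Counting steps then finishes the argument. Since $G^*$ has $O(n)$ vertices and $f$ is linear, the morph of $G^*$ produced by $\mathcal A$ has $f(O(n)) = O(n)$ steps; adding the two boundary steps keeps the total in $O(n)$. Hence the concatenated transformation from $\Gamma_0$ to $\Gamma_1$ is upward planar and uses $O(n)$ morphing steps, as claimed.

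I do not expect any genuine obstacle here, as all the substantive work is carried out in \autoref{th:st-graphs-maximal} and \autoref{th:st-triangulated-graphs}. The only point meriting a moment's care is to confirm that the $O(f(n))$ bound promised by \autoref{th:st-graphs-maximal} truly reduces to $O(n)$ for the original graph: the algorithm $\mathcal A$ is invoked on the augmented graph $G^*$, which has $\Theta(n)$ rather than exactly $n$ vertices, so a priori the bound is $f(O(n))$ rather than $f(n)$. Because $f$ here is linear, however, $f(cn) = O(n)$ for any constant $c$, and the composition remains linear in the size of the input graph, so the final $O(n)$ bound is genuine.
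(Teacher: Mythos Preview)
Your proposal is correct and takes exactly the same approach as the paper: the paper's proof is the single sentence ``The statement follows by \autoref{th:st-graphs-maximal} and \autoref{th:st-triangulated-graphs}.'' Your write-up merely unpacks this in more detail, including the (harmless) remark that $f(O(n))=O(n)$ since $f$ is linear.
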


\begin{proof}
The statement follows by \autoref{th:st-graphs-maximal} and \autoref{th:st-triangulated-graphs}.
\end{proof}

%

\section{Upward Plane Graphs}\label{se:general-graphs}

In this section we deal with general upward plane graphs. In order to morph two upward planar drawings of an upward plane graph $G$ we are going to augment the upward planar drawings of $G$ to two upward planar drawings of a (possibly reduced) plane $st$-graph $G'$ and then to use the results of Section~\ref{se:st-graphs} for morphing the obtained upward planar drawings of $G'$. The augmentation process itself uses upward planar morphs. In the following we formally describe this strategy.

Let $G$ be an upward plane graph whose underlying graph is biconnected, let $f$ be a face of $G$ which is not an $st$-face, and let $u$, $v$, and $w$ be three clockwise consecutive switches of $f$. Further, let $v^-$ and $v^+$ be the vertices preceding and succeeding $v$ in clockwise order along the boundary of $f$, respectively, and let $u^-$ and $u^+$ be the vertices preceding  and succeeding $u$ in clockwise order along the boundary of $f$, respectively. We say that $[u, v, w]$ is a \emph{pocket} for $f$ if $\angle (v^-,v,v^+)= \texttt{small}$ and $\angle (u^-,u,u^+) = \texttt{large}$. The following is well-known.

\begin{lemma}[Bertolazzi et al.~\cite{DBLP:journals/algorithmica/BertolazziBLM94}]\label{lem:small-small-large}
Let $G$ be an upward plane graph whose underlying graph is biconnected and let $f$ be a face of $G$ that is not an $st$-face. Then, there exists a pocket $[u, v, w]$ for $f$.
\end{lemma}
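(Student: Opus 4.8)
The plan is to exhibit, among the switches of the non-$st$-face $f$, two clockwise-consecutive switches carrying the labels \texttt{large} and \texttt{small} in this order, and then to take $w$ to be the switch following them. Everything reduces to showing that both labels \texttt{large} and \texttt{small} actually occur on the boundary of $f$; once this is known, a cyclic counting argument produces the required adjacency.

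First I would record the combinatorial structure of the boundary of $f$. Since $G$ is biconnected, $f$ is delimited by a simple cycle $C$, and while traversing $C$ the source-switches and the sink-switches alternate (between two consecutive source-switches lies a sink-switch, and vice versa). Hence the number of switches of $f$ is even, say $2m$, with $m$ source-switches and $m$ sink-switches. A face is an $st$-face if and only if its bounding cycle is an $st$-cycle, that is, the union of two directed paths, which happens exactly when it has a single source-switch and a single sink-switch, i.e.\ $m=1$; this is just the local counterpart of \autoref{le:st-faces-iff-st-graph}. As $f$ is not an $st$-face, I obtain $m\ge 2$, so $f$ has at least four switches.

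Second, I would argue that both labels occur among the switches of $f$. Here I would invoke the combinatorial characterization of upward planar embeddings of Bertolazzi et al.\ (the one cited earlier in the preliminaries): the number of \texttt{large} angles of a face with $2m$ switches equals $m-1$ if the face is internal and $m+1$ if it is the outer face, so the number of \texttt{small} angles equals $m+1$ or $m-1$, respectively. Since $m\ge 2$, in both cases there is at least one switch labeled \texttt{large} and at least one labeled \texttt{small}. (For an internal face the \texttt{small} label is in fact elementary, since the lowest and the highest vertices of $f$ are switches whose interior angle is smaller than $\pi$.) Finally I would extract the pocket: consider the cyclic sequence of the $2m$ switches of $f$ in clockwise order, each carrying its label; since both labels appear and the sequence is cyclic, the number of clockwise \texttt{large}-to-\texttt{small} transitions equals the number of \texttt{small}-to-\texttt{large} transitions and both are positive, so there exist clockwise-consecutive switches $u$ and $v$ with $u$ labeled \texttt{large} and $v$ labeled \texttt{small}. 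Letting $w$ be the switch clockwise-consecutive after $v$ (which exists and is distinct from $u$ and $v$ because $2m\ge 4$), the triple $[u,v,w]$ satisfies $\angle(v^-,v,v^+)=\texttt{small}$ and $\angle(u^-,u,u^+)=\texttt{large}$, hence it is a pocket for $f$.

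The main obstacle is the guaranteed existence of a \texttt{large} angle. Its cleanest justification is precisely the count supplied by the Bertolazzi et al.\ characterization, which conveniently treats internal and outer faces uniformly. A self-contained alternative would instead use the turning-angle sum of the boundary polygon of $f$ together with the monotonicity of the boundary at non-switch vertices to force a reflex switch-angle whenever $m\ge 2$; however, this route needs a careful case analysis separating internal from outer faces (where ``interior'' refers to the unbounded region), which is exactly the bookkeeping that the cited characterization packages for us.
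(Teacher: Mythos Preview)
The paper does not prove this lemma; it is stated as ``well-known'' and attributed to Bertolazzi et al.\ without an accompanying argument. Your proof is correct and is in fact the standard one: the alternation of source- and sink-switches gives $2m$ switches with $m\ge 2$, the Bertolazzi--Di Battista--Liotta--Mannino count of \texttt{large} angles ($m-1$ for an internal face, $m+1$ for the outer face) forces both labels to appear, and a cyclic pigeonhole yields the required \texttt{large}--\texttt{small} adjacency. There is nothing to add or compare here, since you are essentially reproducing the argument from the cited source that the present paper chose to import rather than repeat.
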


Next, we give a lemma that shows how to ``simplify'' a face of an upward plane graph that is not an $st$-graph, by removing one of its pockets.

\newcommand{\morphingbetweenaugmentations}{
Let $G$ be an $n$-vertex (reduced) upward plane graph whose underlying graph is biconnected, let $f$ be a face of $G$ that is not an $st$-face, let $[u, v, w]$ be a pocket for $f$, and let $\Gamma$ be an upward planar drawing of $G$. 

Suppose that an algorithm $\mathcal A$ ($\mathcal A_R$) exists that constructs an $f(r)$-step ($f_R(r)$-step) upward planar morph between any two upward planar drawings of an $r$-vertex (reduced) plane $st$-graph. 

Then, there exists an $O(f(n))$-step (an $O(f_R(n))$-step) upward planar morph from $\Gamma$ to an upward planar drawing $\Gamma^*$ of $G$ in which $u$ sees $w$ inside $f$ and in which $u$ lies below $w$, if the directed path between $u$ and $v$ along the boundary of $f$ is directed from $v$ to $u$, or $u$ lies above $w$, otherwise.
}

\begin{lemma}\label{lem:morphing-between-augmentations}
\red{\morphingbetweenaugmentations}
\end{lemma}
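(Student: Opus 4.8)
The plan is to reduce the statement to a single application of the assumed $st$-graph morphing algorithm $\mathcal A$ (respectively $\mathcal A_R$), by augmenting $G$ into a plane $st$-graph on which the required configuration is easy to realize. First I would record the local structure forced by the pocket. Since $u,v,w$ are clockwise consecutive switches of $f$ with $v$ a \texttt{small} switch and $u$ a \texttt{large} switch, and since along a face boundary consecutive switches alternate between source- and sink-switches (the boundary between two consecutive switches is a monotone directed path), $v$ is either a source-switch or a sink-switch. If $v$ is a source-switch then both boundary paths leave $v$ upward, the path between $u$ and $v$ is directed from $v$ to $u$, and $u$ and $w$ both lie above $v$; the sink-switch case is symmetric, with $u,w$ below $v$. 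In the first case adding a directed edge $u\to w$, and in the second case $w\to u$, inside $f$ would turn the pocket into an $st$-cycle (two directed paths between $v$ and a common sink, resp.\ source), hence an $st$-face; crucially, this is exactly the orientation that puts $u$ below $w$ (resp.\ above $w$), matching the two cases in the statement. This chord is the edge that a subsequent step of the construction will add; here it only dictates the target vertical order.

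The target drawing $\Gamma^*$ would be obtained as the restriction to $G$ of a strictly convex drawing of a suitable $st$-graph. Concretely, I would build a plane $st$-graph $G'\supseteq G$ with $O(n)$ vertices in which $u$ and $w$ lie on a single common $st$-face contained in $f$ (the face that the chord above would split) and are incomparable in the induced vertical order, then augment $G'$ to an internally $3$-connected plane $st$-graph and invoke \autoref{th:hongANDnaga-biconnected} with a valid $y$-assignment placing $u$ below $w$ (resp.\ above $w$) to obtain a strictly convex upward planar drawing $\Delta$. As every face of $\Delta$ is strictly convex, $u$ sees $w$ inside their common face, hence inside $f$ in $\Gamma^*:=\Delta[G]$, with the prescribed order. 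To morph $\Gamma$ to $\Gamma^*$ I would realize $\Gamma$ as the restriction of a drawing $\Gamma'$ of the same $st$-graph, apply $\mathcal A$ (resp.\ $\mathcal A_R$) to get an $f(O(n))$-step (resp.\ $f_R(O(n))$-step) upward planar morph between $\Gamma'$ and $\Delta$, and restrict that morph to $G$; together with $O(1)$ auxiliary unidirectional steps justified by \autoref{le:st-y-assignment-equivalent} and \autoref{le:shift}, this gives the claimed $O(f(n))$ (resp.\ $O(f_R(n))$) bound. Note that because the two drawings of $G'$ may differ in the relative heights of $u$ and $w$, a mere shift does not suffice, which is exactly why the full power of $\mathcal A$ is invoked. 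For the reduced case I would keep $G'$ reduced, subdividing any added edge by a dummy vertex if needed so that no transitive edge is created.

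The main obstacle is that $\Gamma$ need not admit the chord between $u$ and $w$ as a straight segment --- precisely because $u$ may not yet see $w$ --- so $\Gamma$ cannot naively be extended to an $st$-graph that contains that chord, and one cannot simply lift $\Gamma$ to the graph carrying the desired target. The heart of the proof is to resolve this tension: the augmentation $G'$ used to lift $\Gamma$ must be \emph{compatible} with $\Gamma$ (each added edge drawable along existing visibilities, and any dummy vertices inserted by the $\epsilon$-placement/continuity argument of \autoref{le:biconnected-reduced-augmentation}), yet must still leave $u$ and $w$ on one common $st$-face so that convexifying that face via $\mathcal A$ renders them mutually visible. I expect the delicate points to be (i) choosing the $st$-augmentation of the pocket so that it neither separates $u$ from $w$ nor forces their vertical order the wrong way --- here the \texttt{large} angle guaranteed at $u$ by the pocket (\autoref{lem:small-small-large}) is what provides the room to route the resolving edges without cutting off $w$ from $u$ --- and (ii) verifying that upward planarity and the fixed embedding are preserved throughout the concatenation of the auxiliary steps and the $\mathcal A$-morph, which follows from the left-to-right and bottom-to-top equivalence machinery of \autoref{le:st-y-assignment-equivalent} and \autoref{le:shift} once the drawings are correctly set up.
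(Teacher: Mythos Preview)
Your high-level strategy---augment $G$ to a plane $st$-graph, apply $\mathcal A$ (resp.\ $\mathcal A_R$) there, and restrict the resulting morph back to $G$---is exactly the paper's strategy. You also correctly isolate the crux: the augmentation must simultaneously (a) be drawable on top of the given $\Gamma$ and (b) put $u$ and $w$ on a common $st$-face inside $f$, so that a convex redrawing makes them mutually visible with the prescribed vertical order. Where your plan falls short is that you do not construct such a $G'$, and it is precisely this construction that carries all the technical weight. The difficulty is real: $u$ is a \texttt{large} switch of $f$, so in $\Gamma$ it may sit above $w$ and be separated from $w$ by a long, non-monotone stretch of $\partial f$; any $st$-augmentation that is compatible with $\Gamma$ (edges drawn along existing visibilities) will tend to triangulate that stretch and put $u$ and $w$ on \emph{different} $st$-faces, while any augmentation that forces a common $u$--$w$ face (e.g.\ via a directed $u\!\to\!w$ path) will generally fail to be upward in $\Gamma$. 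Invoking ``the $\epsilon$-placement/continuity argument of \autoref{le:biconnected-reduced-augmentation}'' does not help here, because that lemma only produces biconnectivity, not the specific face incidence you need.

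The paper resolves this tension not with one $st$-graph but with \emph{two}, and hence two applications of $\mathcal A$. First it builds an $st$-graph $G_3\supseteq G$ that \emph{is} compatible with $\Gamma$: inside $f$ it lays three ``parallel'' directed paths $q_1,q_2,q_3$ at distance $\varepsilon$ from the pocket boundaries $p_{vw}$ and $p_{vu}$, together with a short edge $uu'$, and then triangulates every face except the resulting $st$-face $g$ between $p_{vu}$ and $q_3$. One application of $\mathcal A$ morphs this lift of $\Gamma$ to a fresh upward drawing of $G_3$ taken as the restriction of a drawing of $G_4:=G_3\cup\{uv',uv''\}$; in that drawing two directed paths $p',p''$ from $u$ to $w$ (built from $uv',uv''$ and pieces of $q_1,q_2$) are realized. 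Now one switches to the second $st$-graph $G_5$, obtained by deleting everything strictly inside the $st$-cycle $p'\cup p''$; in $G_5$, and only now, $u$ and $w$ lie on a common $st$-face, so adding $uw$ yields an $st$-graph $G_6$ whose fresh drawing gives the target $\Gamma^*$. A second application of $\mathcal A$ morphs the $G_5$-drawing to this target. The parallel-path gadget (and the deliberate exemption of $g$ from the triangulation) is the idea that makes requirements (a) and (b) compatible across the two stages; your single-$G'$ plan would need a replacement for it, and none is supplied.
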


\begin{proof}
	Suppose that the directed path $p_{vu}$ between $u$ and $v$ along the boundary of $f$ is directed from $v$ to $u$ (refer to \blue{Fig.}~\ref{fig:augmentation-a}); the case in which it is directed from $u$ to $v$ can be treated symmetrically.

\begin{figure}[tb]
	\centering
	\subfloat[\label{fig:augmentation-a}]
	{\includegraphics[page=1,width=.33\textwidth]{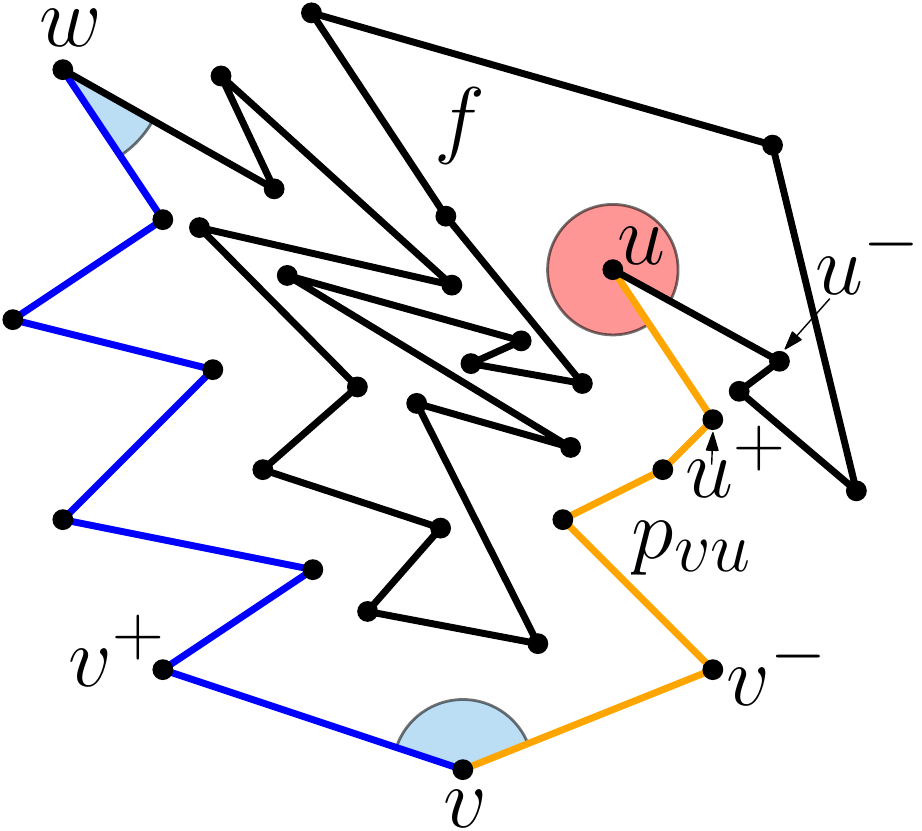}}
	\hfil
	\subfloat[]
	{\includegraphics[page=1,width=.33\textwidth]{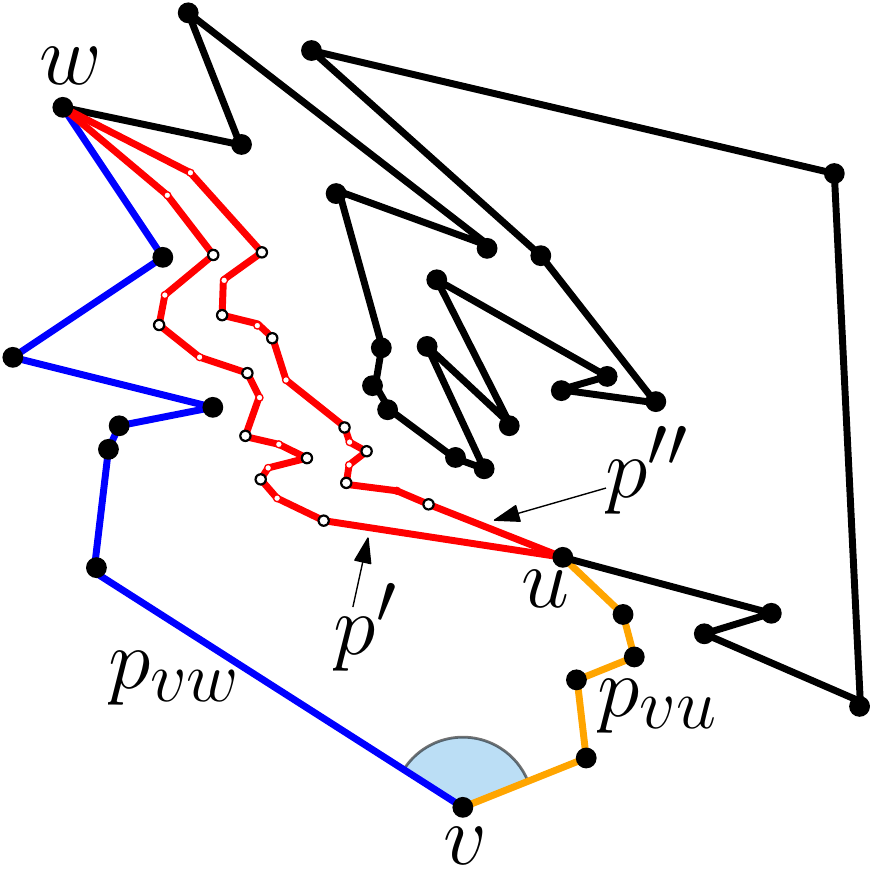}\label{fig:augmentation-b}}
	\hfil
	\subfloat[]
	{\includegraphics[page=1,width=.33\textwidth]{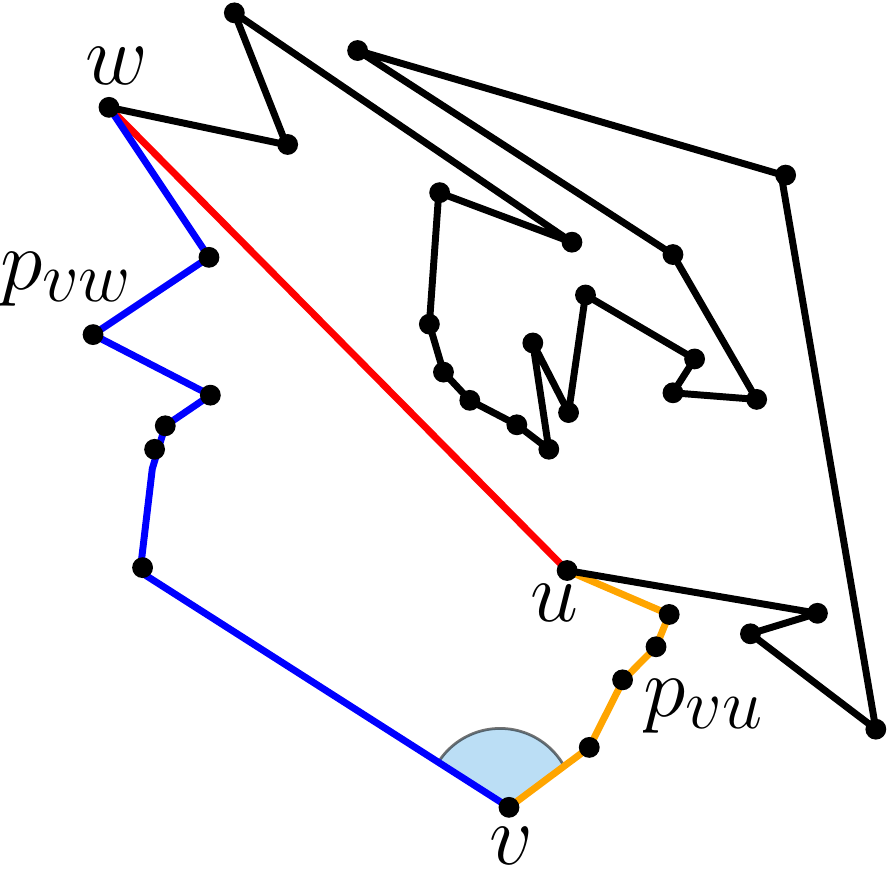}\label{fig:augmentation-c}}
	\caption{Illustrations for the proof of  \autoref{lem:morphing-between-augmentations}. (a) The upward planar drawing $\Gamma$ of $G$; in particular, the illustration shows the face $f$ whose boundary contains the pocket $[u, v, w]$. (b) The upward planar drawing $\Gamma'$ of $G$ (plus the directed paths $p'$ and $p''$). (c) The upward planar drawing $\Gamma^*$ of $G$ (plus the directed edge $uw$). }\label{fig:augmentations}
\end{figure}

	The proof is structured as follows.
	First, we show that there exists an upward planar drawing $\Gamma'$ of~$G$ such that (see \blue{Fig.}~\ref{fig:augmentation-b}): 
	\begin{enumerate}[(i)]
		\item \label{prop:i} the upward planar drawings of two directed paths $p'$ and $p''$ from $u$ to $w$ can be inserted in $\Gamma'$ in the interior of $f$; and 
		\item \label{prop:ii} there exists an $O(f(n))$-step (an $O(f_R(n))$-step) upward planar morph $\mathcal M'$ from $\Gamma$ to $\Gamma'$.
	\end{enumerate}

	Second, we show that there exists an upward planar drawing $\Gamma^*$ of $G$ such that (see \blue{Fig.}~\ref{fig:augmentation-c}):
	\begin{enumerate}[(i)]\setcounter{enumi}{2}
		\item\label{prop:iii} $u$ sees $w$ inside $f$ and $u$ lies below $w$, and
		\item\label{prop:iv}there exists an $O(f(n))$-step (an $O(f_R(n))$-step) upward planar morph $\mathcal M^*$ from $\Gamma'$ to $\Gamma^*$. 
	\end{enumerate}
	The lemma follows from the existence of the drawings $\Gamma'$ and $\Gamma^*$ above, since composing $\mathcal M'$ with $\mathcal M^*$ yields the desired upward planar morph with $O(f(n))$ steps (with $O(f_R(n))$ steps).
	
	The drawing $\Gamma'$ is constructed in four phases. 
	
	In {\bf phase 1} we augment $\Gamma$ to an upward planar drawing $\Gamma_1$ of an upward plane graph $G_1$. Refer to \blue{Fig.}~\ref{fig:preliminary-augmentation-a}. Let $p_{vu}=v u_1 \dots u_k u$ and $p_{vw}=v w_1 \dots w_h w$ be the directed paths from $v$ to $u$ and from $v$ to $w$, respectively, that belong to the boundary of $f$. In order to construct $\Gamma_1$ and $G_1$ we insert, for some sufficiently small $\epsilon>0$, the following directed paths inside $f$ into $\Gamma$ and $G$: 

\begin{figure}[tb]
	\centering
	\subfloat[\label{fig:preliminary-augmentation-a}]
	{\includegraphics[page=1,width=.33\textwidth]{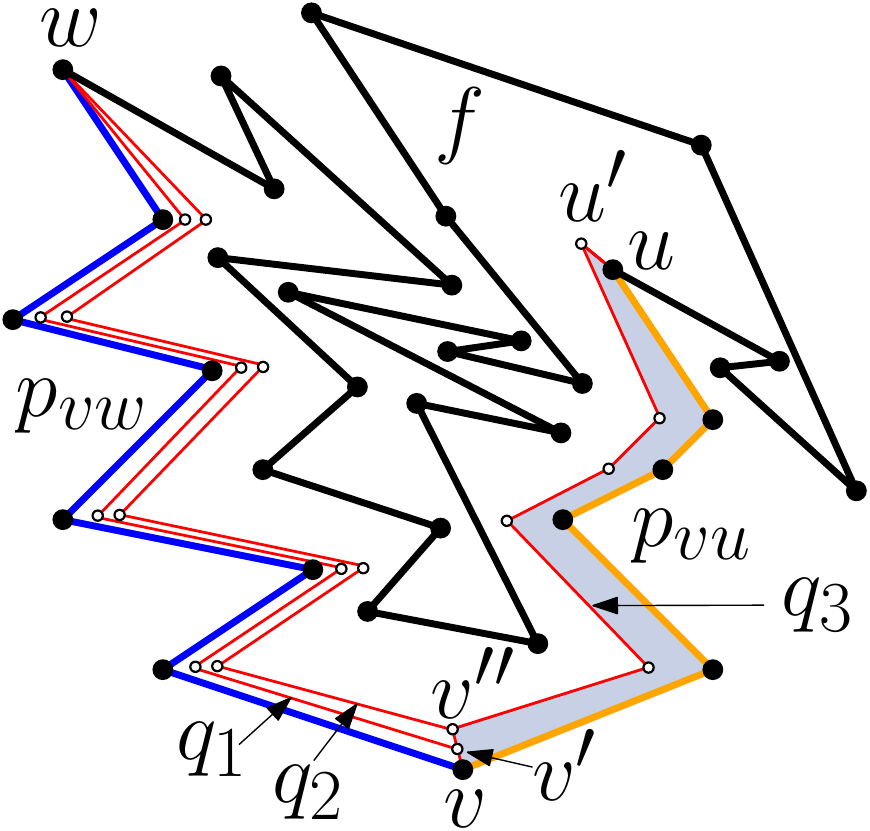}}
	\hfil
	\subfloat[]
	{\includegraphics[page=1,width=.33\textwidth]{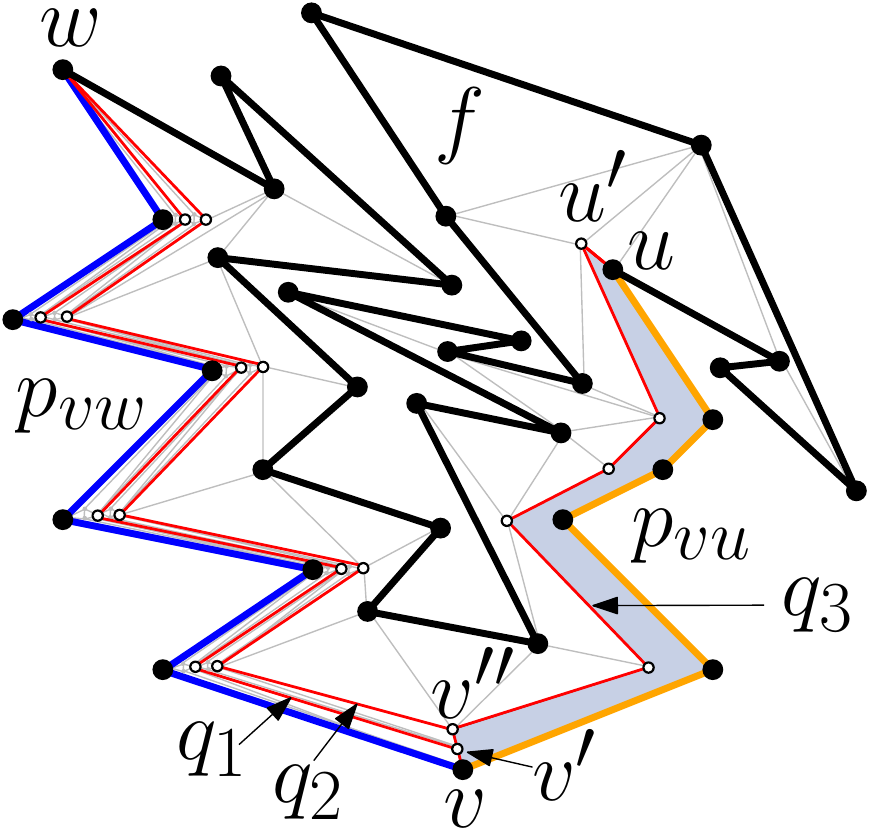}\label{fig:preliminary-augmentation-b}}
	\hfil
	\subfloat[]
	{\includegraphics[page=1,width=.33\textwidth]{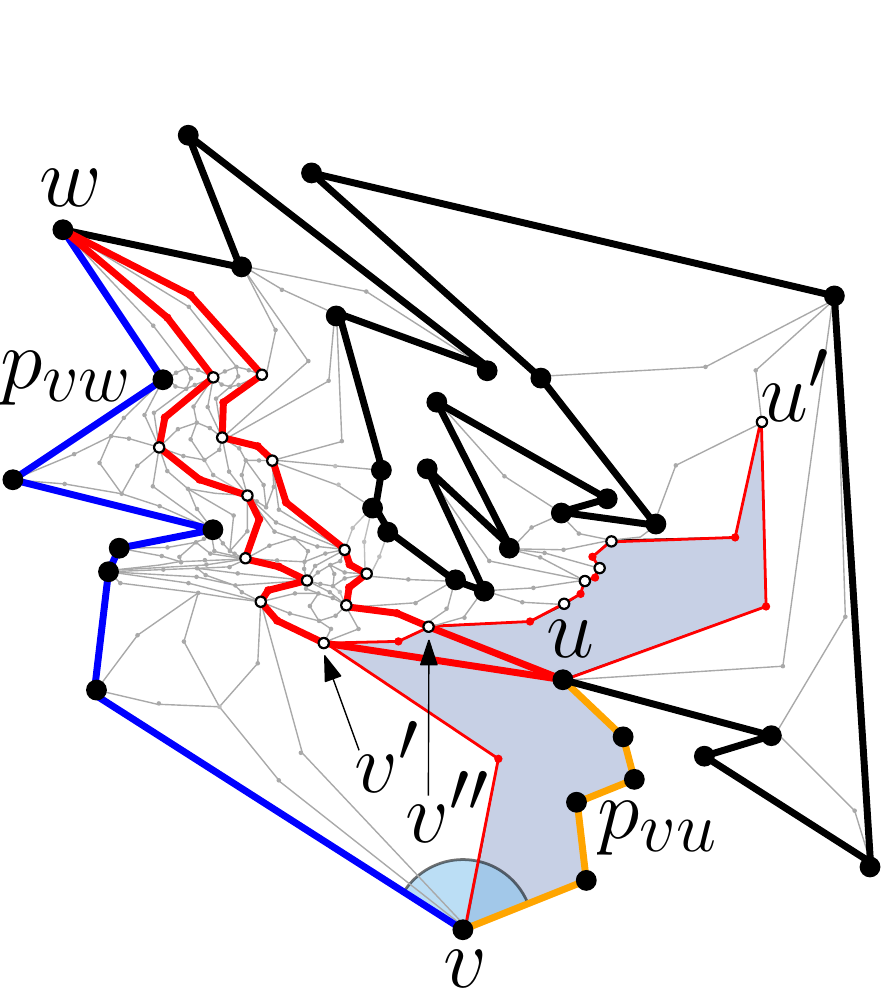}\label{fig:preliminary-augmentation-c}}
	\caption{Construction of the drawing $\Gamma'$; only what happens to the face $f$ is shown. (a) The drawing $\Gamma_1$ of $G_1$. The face $g$ is gray. (b) The drawing $\Gamma_2$ of $G_2$. (c) The drawing $\Gamma_4$ of $G_4$; the paths $p'$ and $p''$ are thick and red.}\label{fig:preliminary-augmentations}
\end{figure}

	\begin{enumerate}[(a)]
		\item a directed path $q_1=v v'w'_1\dots w'_h w$, where $w'_i$ is on the same horizontal line as $w_i$, at horizontal distance $\epsilon/2$ from it, while $v'$ is above $v$, at distance $\epsilon/2$ from it, along the bisector of the angle $\angle (v^-,v,v^+)$; 
		\item a directed path $q_2=v' v''w''_1\dots w''_h w$, where $w''_i$ is on the same horizontal line as $w_i$, at horizontal distance $\epsilon$ from it, while $v''$ is above $v$ and $v'$, at distance $\epsilon$ from $v$, along the bisector of the angle $\angle (v^-,v,v^+)$;   
		\item a directed path $q_3=v''u'_1\dots u'_k u'$, where $u'_i$ is on the same horizontal line as $u_i$, at horizontal distance $\epsilon$ from it, while $u'$ is above $u$, at distance $\epsilon$ from it, along the bisector of the angle $\angle (u^-,u,u^+)$; and  
		\item a directed edge $uu'$.
	\end{enumerate}
	
It is easy to see that the resulting drawing $\Gamma_1$ has no crossings and that the directed paths $q_1$, $q_2$, $q_3$, and $uu'$ are upwardly drawn inside $f$, provided that $\epsilon$ is small enough. Note that there is an $st$-face $g$ of $G_1$ that is delimited by the directed path composed of $vv'v''$ and of $q_3$ and by the directed path composed of $p_{vu}$ and of $uu'$.
	
	
	In {\bf phase 2} we augment $\Gamma_1$ to an upward planar drawing $\Gamma_2$ of a plane $st$-graph $G_2$. Refer to \blue{Fig.}~\ref{fig:preliminary-augmentation-b}. In order to construct $\Gamma_2$, we insert edges drawn as straight-line segments into every face of $\Gamma_1$, except for $g$, until no further edge can be inserted while maintaining planarity; each inserted edge is oriented from the endpoint with the lowest $y$-coordinate to the endpoint with the highest $y$-coordinate in $\Gamma_1$ (if the end-points of an edge have the same $y$-coordinate, then we insert two new adjacent vertices, slightly above and below the middle point of that edge, and then keep on inserting edges). This concludes the construction of the drawing $\Gamma_2$ of $G_2$. Since $\Gamma_1$ is upward and planar, it follows that $\Gamma_2$ is upward and planar as well. Further, $G_2$ is an $st$-graph by Lemma~\ref{le:st-faces-iff-st-graph}, since $g$ is an $st$-face, as argued above, and since all the faces of $G_2$ different from $g$ are also delimited by $st$-cycles, as otherwise more edges could have been introduced while maintaining the planarity of $\Gamma_2$; note that every internal face of $\Gamma_2$ different from $g$ is delimited by an upwardly drawn $3$-cycle, while more than $3$ vertices might be incident to the outer face of $\Gamma_2$. 
	
	In {\bf phase 3} we replace each directed edge $uv$ of $G_2$ that does not belong to $G$ (and has been inserted in phase 1 or 2) with a directed path $(u,w_{uv},v)$ and insert $w_{uv}$ at an arbitrary internal point of the edge $uv$ in $\Gamma_2$. Clearly, the resulting graph $G_3$ is a plane $st$-graph and it is reduced if $G$ is. Further, the resulting drawing $\Gamma_3$ is an upward planar drawing of $G_3$. 
	
	In {\bf phase 4} we augment $G_3$ to a plane $st$-graph $G_4$ by adding two directed edges $uv'$ and $uv''$ inside $g$. Observe that $G_4$ is a plane $st$-graph, by Lemma~\ref{le:st-faces-iff-st-graph}, since the directed edges $uv'$ and $uv''$ split $g$ into three $st$-faces. Further, $G_4$ is reduced if $G$ is. Let $p'$ be the directed path composed of $uv'$ and of the (subdivided) directed path $q_1$ and let $p''$ be the directed path composed of $uv'$ and of the (subdivided) directed path $q_2$. Observe that the $st$-cycle $\mathcal D$ of $G_4$ composed of $p'$ and $p''$ does not enclose any vertex of $G$, although it encloses vertices of $G_4$ not in $G$. We construct an upward planar straight-line drawing $\Gamma_4$ of $G_4$ by means of, e.g., the algorithm by Di Battista and Tamassia~\cite{DBLP:journals/tcs/BattistaT88}. 
	
	Now let $\Gamma' = \Gamma_4[G]$. Property~(i) then follows from the fact that upward planar drawings of the directed paths $p'$ and $p''$ from $u$ to $w$ can be inserted in $\Gamma'$ as they are drawn in $\Gamma_4$. Further, since $G_3$ has $O(n)$ vertices, by applying algorithm $\mathcal A$ ($\mathcal A_R$) we can construct an $O(f(n))$-step (an $O(f_R(n))$-step) upward planar morph $\mathcal M_{3,4}$ from $\Gamma_3$ to $\Gamma_4[G_3]$. Since $\Gamma=\Gamma_3[G]$ and $\Gamma' = \Gamma_4[G]$, the restriction of $\mathcal M_{3,4}$ to the vertices and edges of $G$ provides an $O(f(n))$-step (an $O(f_R(n))$-step) upward planar morph $\mathcal M'$ from $\Gamma$ to $\Gamma'$, which proves Property~(ii).
	
	The drawing $\Gamma^*$ is constructed as follows. 
	
	First, we remove from $\Gamma_4$ and $G_4$ all the vertices and edges enclosed by $\mathcal D$. Let $\Gamma_5$ and $G_5$ be the resulting drawing and the resulting graph, respectively. We have that $G_5$ is a plane $st$-graph by Lemma~\ref{le:st-faces-iff-st-graph}; indeed, the face $d$ of $G_5$ delimited by $\mathcal D$ is an $st$-face, as $\mathcal D$ is composed of the directed paths $p'$ and $p''$, and every other face of $G_5$ is an $st$-face since it is also a face of $G_4$, which is a plane $st$-graph. Moreover, $\Gamma_5$ is an upward planar drawing of $G_5$, given that $\Gamma_4$ is an upward planar drawing of $G_4$. 
	
	Second, we augment $G_5$ to a plane $st$-graph $G_6$ by inserting the directed edge $uw$ inside $d$. We construct an upward planar straight-line drawing $\Gamma_6$ of $G_6$ by means of, e.g., the algorithm by Di Battista and Tamassia~\cite{DBLP:journals/tcs/BattistaT88}. 
	
	Now let $\Gamma^* = \Gamma_6[G]$. Property~(iii) then follows from the fact that the directed edge $uw$ lies inside $f$ and is upwardly drawn in $\Gamma_6$. Further, since $G_6$ has $O(n)$ vertices, by applying algorithm $\mathcal A$ ($\mathcal A_R$) we can construct an $O(f(n))$-step (an $O(f_R(n))$-step) upward planar morph $\mathcal M_{5,6}$ from $\Gamma_5$ to $\Gamma_6[G_5]$. Since $\Gamma'=\Gamma_5[G]$ and $\Gamma^* = \Gamma_6[G]$, the restriction of $\mathcal M_{5,6}$ to the vertices and edges of $G$ provides an $O(f(n))$-step (an $O(f_R(n))$-step) upward planar morph $\mathcal M^*$ from $\Gamma'$ to $\Gamma^*$, which proves Property~(iv). This concludes the proof of the lemma.
	%
\end{proof}

We are now ready to prove the following.

\newcommand{\metaalgorithmplane}{Let $\Gamma_0$ and $\Gamma_1$ be two upward planar drawings of an $n$-vertex (reduced) upward plane graph $G$. 
	
Suppose that an algorithm $\mathcal A$ ($\mathcal A_R$) exists that constructs an $f(r)$-step (an $f_R(r)$-step) upward planar morph between any two upward planar drawings of an $r$-vertex (reduced) plane $st$-graph.

There exists an $O(n\cdot f(n))$-step (an $O(n \cdot f_R(n))$-step) upward planar morph from $\Gamma_0$ to~$\Gamma_1$.}

\begin{theorem}\label{th:meta-algorithm-plane}
\metaalgorithmplane
\end{theorem}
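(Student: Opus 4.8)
The plan is to reduce the problem to that of morphing two upward planar drawings of a plane $st$-graph, for which the hypothesized algorithm $\mathcal A$ (the algorithm $\mathcal A_R$, in the reduced case) is available. First I would invoke \autoref{le:biconnected-reduced-augmentation} to augment $G$, $\Gamma_0$, and $\Gamma_1$ to an $O(n)$-vertex upward plane graph whose underlying graph is biconnected, preserving the reduced property when present; since the final morph will be restricted back to $G$, I may henceforth assume that $G$ is biconnected. The core idea is then to transform both $\Gamma_0$ and $\Gamma_1$, by means of upward planar morphs, into two upward planar drawings $\hat\Gamma_0$ and $\hat\Gamma_1$ of a \emph{common} plane $st$-graph $\hat G \supseteq G$, to morph $\hat\Gamma_0$ into $\hat\Gamma_1$ using $\mathcal A$ (resp. $\mathcal A_R$), and to output the concatenation $\Gamma_0 \to \hat\Gamma_0 \to \hat\Gamma_1 \to \Gamma_1$ (reversing the second augmentation), restricted to the vertices and edges of $G$; this is legitimate because the restriction of an upward planar morph is an upward planar morph.

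To build $\hat G$ I would proceed iteratively, maintaining a biconnected upward plane graph $H \supseteq G$ together with two topologically-equivalent upward planar drawings of $H$ descended from $\Gamma_0$ and $\Gamma_1$. As long as $H$ is not a plane $st$-graph, by \autoref{le:st-faces-iff-st-graph} it has a face $f$ that is not an $st$-face, and by \autoref{lem:small-small-large} this face admits a pocket $[u,v,w]$. The crucial point is that the choice of $f$ and of the pocket is purely combinatorial, hence identical in both current drawings since they are topologically equivalent. Applying \autoref{lem:morphing-between-augmentations} to each of the two current drawings yields, in $O(f(n))$ steps each (resp.\ $O(f_R(n))$ steps each), two drawings in which $u$ sees $w$ inside $f$ with the same prescribed vertical order of $u$ and $w$. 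I would then insert the directed edge $uw$ inside $f$ in both drawings (subdividing it when necessary to preserve reducedness), obtaining two upward planar drawings of $H' = H + uw$. Because the edge is inserted in the same rotational position and with the same orientation in both drawings, and because the newly created face bounded by the pocket is an $st$-face whose angle labels are forced, the two drawings of $H'$ remain topologically equivalent.

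Each such augmentation replaces the pocket by an internal $st$-face and, in particular, eliminates the \texttt{large} angle at $u$, so it strictly decreases the number of \texttt{large} angles of the current embedding. Since the number of \texttt{large} angles is bounded by the number of switches, which is $O(n)$ (a planar graph has $O(n)$ edges and hence $O(n)$ angles), the process terminates after $O(n)$ iterations; moreover the \emph{permanent} additions amount to $O(1)$ vertices and edges per iteration, so $\hat G$ has $O(n)$ vertices and every call to \autoref{lem:morphing-between-augmentations} operates on an $O(n)$-vertex graph. Consequently, augmenting $\Gamma_0$ into $\hat\Gamma_0$ and $\Gamma_1$ into $\hat\Gamma_1$ costs $O(n\cdot f(n))$ steps (resp.\ $O(n\cdot f_R(n))$), morphing $\hat\Gamma_0$ into $\hat\Gamma_1$ with $\mathcal A$ (resp.\ $\mathcal A_R$) costs $O(f(n))$ steps (resp.\ $O(f_R(n))$), and the concatenation restricted to $G$ gives the claimed $O(n\cdot f(n))$-step (resp.\ $O(n\cdot f_R(n))$-step) upward planar morph.

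The main obstacle I anticipate is guaranteeing global consistency of the two augmentations, i.e.\ that the \emph{same} plane $st$-graph $\hat G$ is reached from both $\Gamma_0$ and $\Gamma_1$. This hinges on the fact that pockets are combinatorial objects shared by topologically-equivalent drawings, and that each edge insertion preserves both topological equivalence and, in the reduced case, the absence of transitive edges; the latter is the most delicate point, to be handled by subdividing the inserted edge and exploiting that $u$ and $w$ are clockwise consecutive switches of a single face. The remaining care is in the counting, namely verifying that each pocket removal strictly reduces the \texttt{large}-angle count and that the accumulated augmentation stays within $O(n)$ vertices, so that the bounds above hold uniformly across all $O(n)$ iterations.
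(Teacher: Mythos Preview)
Your proposal is correct and follows essentially the same route as the paper: assume biconnectedness via \autoref{le:biconnected-reduced-augmentation}, then repeatedly pick a pocket $[u,v,w]$ in a non-$st$-face, use \autoref{lem:morphing-between-augmentations} on both drawings, insert the edge $uw$, and recurse/iterate until a plane $st$-graph is reached, with the step count governed by the $O(n)$ bound on \texttt{large} angles.

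One simplification worth noting: the subdivision you propose ``when necessary to preserve reducedness'' is never needed. In the case where the path along $f$ from $v$ to $u$ is directed towards $u$, the vertex $u$ is a sink-switch of $f$ with a \texttt{large} angle, and a basic property of upward planar embeddings is that only sources and sinks of the whole graph carry a \texttt{large} angle; hence $u$ is a sink of $G$, so there is no directed path from $u$ to $w$ in $G$ and the inserted edge $uw$ is never transitive. The paper uses exactly this observation, which also dispenses with your worry about keeping the vertex count under control across iterations.
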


\begin{proof}
By \autoref{le:biconnected-reduced-augmentation}, we can assume that $G$ is biconnected.	

Denote by $\ell(G)$ the number of switches labeled \texttt{large} in the upward planar embedding of $G$. In order to prove the statement, we show that there exists a \mbox{$((2\ell(G)-3) \cdot f(n))$-step} (a $((2\ell(G)-3) \cdot  f_R(n))$-step) upward planar morph from $\Gamma_0$ to $\Gamma_1$, if $G$ is a (reduced) upward plane graph. Since $\ell(G) \in O(n)$, the statement follows. The proof is by induction on $\ell(G)$.
	
In the base case $\ell(G)=2$ and thus $G$ is a (reduced) plane $st$-graph (the two switches labeled \texttt{large} are those incident to the outer face of $G$). Hence, by applying algorithm $\mathcal A$ ($\mathcal A_R$) to $\Gamma_0$ and $\Gamma_1$, we obtain an $f(n)$-step (an $f_R(n)$-step) upward planar morph from $\Gamma_0$ to $\Gamma_1$.
	
In the inductive case $\ell(G)>2$. Then there exists a face $f$ of $G$ that is not an $st$-face. Thus, by \autoref{lem:small-small-large}, there exists a pocket $[u,v,w]$ for $f$. By \autoref{lem:morphing-between-augmentations}, we can construct upward planar drawings $\Gamma'_0$ and $\Gamma'_1$ of $G$ in which $u$ sees $w$ inside $f$ and in which $u$ lies below $w$ (assuming that a directed path exists in $f$ from $v$ to $u$, the other case being symmetric), and such that there exists an $f(n)$-step (an $f_R(n)$-step) upward planar morph $\mathcal M_{start}$ from $\Gamma_0$ to $\Gamma'_0$ and an $f(n)$-step (an $f_R(n)$-step) upward planar morph $\mathcal M_{finish}$ from $\Gamma'_1$ to $\Gamma_1$.
	
Let $G^*$ be the plane graph obtained from $G$ by splitting $f$ with a directed edge $uw$. The graph $G^*$ is an upward plane graph whose upward planar embedding is constructed by assigning to each switch in $G^*$ the same label \texttt{small} or \texttt{large} it has in $G$. Then $\ell(G^*) = \ell(G)-1$, since $u$ is not a switch in $G^*$. Further, $G^*$ is reduced if $G$ is reduced, since there exists no directed path in $G$ passing first through $u$ and then through $w$, given that $u$ is a sink of $G$.

Let $\Gamma^*_0$ and $\Gamma^*_1$ be the upward planar straight-line drawings of $G^*$ obtained by drawing the directed edge $uw$ as a straight-line segment connecting $u$ and $w$ in $\Gamma'_0$ and in $\Gamma'_1$, respectively. By the inductive hypothesis and since $V(G^*)=V(G)$, we can construct a $((2\ell(G^*)-3)\cdot f(n))$-step (a $((2\ell(G^*)-3)\cdot f_R(n))$-step) upward planar morph from $\Gamma^*_0$ to $\Gamma^*_1$. Observe that, since $G \subset G^*$, restricting each drawing in $\mathcal M^*$ to the vertices and edges of $G$ yields a $\big( (2\ell(G)-5)\cdot f(n) \big)$-step upward planar morph $\mathcal M^-$ of $G$ from $\Gamma'_0$ to $\Gamma'_1$. Therefore, by concatenating morphs $\mathcal M_{start}$, $\mathcal M^-$, and $\mathcal M_{finish}$, we obtain a $\big( (2\ell(G)-3) \cdot f(n) \big)$-step (a $\big( (2\ell(G)-3) \cdot f_R(n) \big)$-step) upward planar morph of $G$ from $\Gamma_0$ to $\Gamma_1$. This concludes the proof.
\end{proof}

\autoref{th:st-reduced-graphs}, \autoref{co:st-main}, and \autoref{th:meta-algorithm-plane} imply the following main result.

\begin{theorem}\label{th:upward-morph} 
Let $\Gamma_0$ and $\Gamma_1$ be two upward planar drawings of the same $n$-vertex (reduced) upward plane graph. There exists an $O(n^2)$-step (an $O(n)$-step) upward planar morph from $\Gamma_0$ to $\Gamma_1$.
\end{theorem}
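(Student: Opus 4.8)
The plan is to derive this statement purely by instantiating the meta-algorithm of \autoref{th:meta-algorithm-plane} with the concrete morphing algorithms for plane $st$-graphs established earlier in the section, treating the reduced and the general case separately. Recall that \autoref{th:meta-algorithm-plane} takes as a black box an algorithm $\mathcal A$ (respectively $\mathcal A_R$) that morphs any two upward planar drawings of an $r$-vertex plane $st$-graph (respectively reduced plane $st$-graph) in $f(r)$ (respectively $f_R(r)$) steps, and produces an $O(n\cdot f(n))$-step (respectively $O(n\cdot f_R(n))$-step) upward planar morph for any $n$-vertex general upward plane graph (respectively reduced upward plane graph). Hence it suffices to supply the appropriate bounds on $f$ and on $f_R$.

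For a general upward plane graph, I would take $\mathcal A$ to be the algorithm of \autoref{co:st-main}, which morphs any two upward planar drawings of an $n$-vertex plane $st$-graph in $O(n)$ steps; thus $f(n)\in O(n)$. Substituting into \autoref{th:meta-algorithm-plane} yields an $O(n\cdot f(n)) = O(n^2)$-step upward planar morph, establishing the first bound.

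For a reduced upward plane graph, I would instead take $\mathcal A_R$ to be the algorithm of \autoref{th:st-reduced-graphs}, which morphs any two upward planar drawings of a reduced plane $st$-graph in a constant number of steps (in fact, three); thus $f_R(n)\in O(1)$. Substituting into the reduced branch of \autoref{th:meta-algorithm-plane} yields an $O(n\cdot f_R(n)) = O(n)$-step upward planar morph, establishing the second bound.

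The only point requiring care is to route each case through the matching branch of \autoref{th:meta-algorithm-plane}: its inductive construction repeatedly simplifies a non-$st$-face via \autoref{lem:morphing-between-augmentations} and invokes the supplied $st$-graph morphing subroutine on a graph of $O(n)$ vertices at each of the $O(\ell(G)) = O(n)$ switches labeled \texttt{large}. For the reduced bound it is essential that this augmentation preserves reducedness---which \autoref{lem:morphing-between-augmentations} and \autoref{th:meta-algorithm-plane} explicitly guarantee---so that the constant-step reduced-$st$-graph subroutine $\mathcal A_R$ of \autoref{th:st-reduced-graphs} remains applicable throughout the recursion; otherwise one would be forced onto the $O(n)$-step subroutine and lose a linear factor, collapsing the reduced bound into the general one. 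No genuinely new obstacle arises here, since all the geometric and combinatorial difficulty has already been discharged in the cited results, and the theorem is merely their arithmetic combination.
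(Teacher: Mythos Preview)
Your proposal is correct and follows exactly the paper's approach: the paper simply states that \autoref{th:st-reduced-graphs}, \autoref{co:st-main}, and \autoref{th:meta-algorithm-plane} imply the theorem, and your argument spells out precisely this instantiation of the meta-algorithm with $f(n)\in O(n)$ in the general case and $f_R(n)\in O(1)$ in the reduced case. Your additional remark about reducedness being preserved through the augmentation is a helpful clarification but is already handled inside the proofs of \autoref{lem:morphing-between-augmentations} and \autoref{th:meta-algorithm-plane}.
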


\section{Conclusions and Open Problems}\label{se:conclusions}

In this paper, we addressed for the first time the problem of morphing upward planar straight-line drawings. We proved that an upward planar morph between any two upward planar straight-line drawings of the same upward plane graph always exists; such a morph consists of a quadratic number of linear morphing steps. The quadratic bound can be improved to linear for reduced upward plane graphs and for plane $st$-graphs, and to constant for reduced plane $st$-graphs. All our algorithms can be implemented in polynomial time in the real RAM model.

Our algorithms assume the (undirected) connectivity of the upward planar graph whose drawings have to be morphed. However, we believe that the techniques presented in~\cite{DBLP:journals/siamcomp/AlamdariABCLBFH17} in order to deal with disconnected graphs can be applied also to our setting with only minor modifications.

Several problems are left open by our research. In our opinion the most interesting question is whether an $O(1)$-step upward planar morph between any two upward planar drawings of the same maximal plane $st$-graph exists. In case of a positive answer, by \autoref{th:st-graphs-maximal} and \autoref{th:meta-algorithm-plane}, an optimal $O(n)$-step upward planar morph would exist between any two upward planar drawings of the same $n$-vertex upward plane graph. In case of a negative answer, it would be interesting to find broad classes of upward plane graphs that admit upward planar morphs with a sub-linear number of steps. In particular, we ask whether series-parallel digraphs~\cite{DBLP:journals/ijcga/BertolazziCBTT94,DBLP:journals/siamcomp/CohenBTT95} admit upward planar morphs with $O(1)$ steps.  

\bibliographystyle{abbrv}
\bibliography{bibliography}

\end{document}